\newcommand{\nop}[1]{}
\newcommand{\fig}{Figure}
\newcommand{\aA}{\alpha_A}
\newcommand{\ai}{\alpha_i}
\newcommand{\ah}{\alpha_h}
\newcommand{\ak}{\alpha_k}
\newcommand{\ar}{\gamma}
\newtheorem{theorem}{Theorem}
\newtheorem{proof}{Proof}
\begin{document}
%
\title{Partial Selfish Mining for More Profits}

\author{\IEEEauthorblockN{Jiaping Yu\IEEEauthorrefmark{2}\IEEEauthorrefmark{3},
Shang Gao\IEEEauthorrefmark{3},
Rui Song\IEEEauthorrefmark{3}, 
Zhiping Cai\IEEEauthorrefmark{4} and
Bin Xiao\IEEEauthorrefmark{3}}
\IEEEauthorblockA{\IEEEauthorrefmark{2}College of Computer\\ Ocean University of China\\
email:yujiaping19@ouc.edu.cn}
\IEEEauthorblockA{\IEEEauthorrefmark{3}Department of Computing\\ The Hong Kong Polytechnic University\\
email:\{shanggao, csrsong, csbxiao\}@comp.polyu.edu.hk}
\IEEEauthorblockA{\IEEEauthorrefmark{4}College of Computer\\ National University of Defense Technology\\
email:zpcai@nudt.edu.cn}}

\IEEEoverridecommandlockouts
\makeatletter\def\@IEEEpubidpullup{6.5\baselineskip}\makeatother
\IEEEpubid{\parbox{\columnwidth}{
    Network and Distributed System Security (NDSS) Symposium 2023\\
    28 February - 4 March 2023, San Diego, CA, USA\\
    ISBN 1-891562-83-5\\
    https://dx.doi.org/10.14722/ndss.2023.23xxx\\
    www.ndss-symposium.org
}
\hspace{\columnsep}\makebox[\columnwidth]{}}

\maketitle

\begin{abstract}
Mining attacks aim to gain an unfair share of extra rewards in the blockchain mining. Selfish mining can preserve discovered blocks and strategically release them, wasting honest miners' computing resources and getting higher profits. 
Previous mining attacks either conceal the mined whole blocks (hiding or discarding), or release them completely in a particular time slot (e.g., causing a fork).
In this paper, we extend the mining attack's strategy space to partial block sharing, and propose a new and feasible Partial Selfish Mining (PSM) attack. We show that by releasing partial block data publicly and attracting rational miners to work on attacker's private branch, attackers and these attracted miners can gain an unfair share of mining rewards. We then propose Advanced PSM (A-PSM) attack that can further improve attackers' profits to be no less than the selfish mining. Both theoretical and experimental results show that PSM attackers can be more profitable than selfish miners under a certain range of mining power and network conditions. A-PSM attackers can gain even higher profits than both selfish mining and honest mining with attracted rational miners.
\end{abstract}


%

\section{Introduction}

Decentralized digital currencies like Bitcoin have captured public interest for years~\cite{glaeser2022foundations,wu2020revenue}. By 2023, Bitcoin has a market value of more than 401 billion USD. In Bitcoin, the first miner who solves the puzzle and broadcasts the result will be rewarded with 6.25 bitcoins (BTC), which are worth 130,668.75 USD by January 2023 \cite{capitalcounselor.com}. The more computing resources the miner applies, the more likely it can solve the puzzle and get the bonus first~\cite{wood2014ethereum,nakamoto2008bitcoin}. The result of this puzzle is known as ``Proof-of-Work" (PoW) in Bitcoin. 

However, the high rewards of Bitcoin mining have also made it a valuable target for attackers \cite{zhang2020txspector,lewis2021does,aumayr2021blitz,kiayias2022peredi,kogias2016enhancing}. Previous work has shown that malicious miners can launch attacks by not following the standard mining process, e.g., hiding or discarding mined blocks, releasing a block to cause a fork \cite{torres2021frontrunner,wu2020survive,wang2021securing,tsabary2021mad}. This kind of attack is generally referred to as \textit{mining attacks}. 

\begin{figure}[htbp]
	\centering
	\subfigure[Selfish Mining]{\label{fig:Sel}\includegraphics[width=0.45\columnwidth]{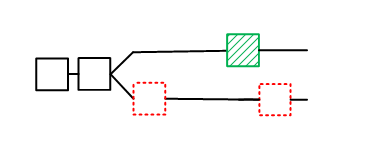}}
	\subfigure[Partial Selfish Mining]{\label{fig:Part}\includegraphics[width=0.45\columnwidth]{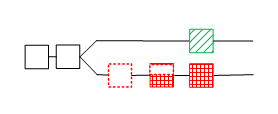}}
	\caption{(a) Selfish mining withholds discovered blocks in its private chain; (b) Partial selfish mining can firstly withhold a discovered block, then share partial block data, and finally broadcast the whole block.}
	\label{fig:Comparison}
\end{figure}


Selfish mining \cite{9eyal2014majority} is one of the most fundamental and well-known mining attacks. In selfish mining, instead of following mining rules and releasing a discovered block immediately, an attacker can withhold the block and continue mining on it as a private branch, as shown in Figure \ref{fig:Sel}. When other miners find a new block, the attacker can immediately cause a fork by releasing the withheld block. If the attacker's fork is selected as the main chain, honest miners' mining power is wasted.


Following the idea of concealing blocks in selfish mining, extensive lines of work have explored variant mining attack tactics against the PoW blockchain. Attackers can either apply these tactics individually or strategically combine them to launch sophisticated mining attacks. For instance, using power splitting, block withholding, and selfish mining, Kwon et al. propose Fork After Withholding (FAW) attacks~\cite{kwon2017selfish}. Besides mining attacks, researchers also proposed other incentive-based DoS attacks in recent years. For example, Michael et al. believe most miners are \textit{rational} in the sense that miners are inclined to choose the most profitable mining strategy to work (i.e., mining on which chain)~\cite{mirkin2020bdos}.



Selfish mining is not reported frequently in existing cryptocurrencies because selfish miners cannot find practical ways to launch the attack easily. It is widely accepted that the attacker's computational power needs to exceed 33\% to gain higher profits than honest mining. Attackers need to either occupy more than 33\% of mining power by themselves or by colluding with other greedy miners to ensure a successful selfish mining attack. 

It is impractical to launch selfish mining by a single miner for large-scale public blockchains. On the one hand, it is hard to occupy more than 33\% of mining power to ensure successful selfish mining, e.g., Bitcoin. According to \cite{blockexplorer.com}, the largest mining pool in Bitcoin only occupies 16\% of overall mining power. On the other hand, it is a general consensus that a large pool is not in the best interest of the blockchain community. Miners are vigilant when a mining pool controls a large fraction of computation power and may unite to boycott the large mining pool \cite{MinerBoycott} and force the pool manager to cut down the mining power. 

The idea of colluding with other rational miners and forming a coalition seems to be an applaudable way to mine selfishly. The overall mining power of the coalition may be enough to launch selfish mining. In practice, it is challenging to build trust among rational miners. Rational miners cannot know whether it is profitable to join the attacker's branch, nor have the assurance of extra profit from attackers.


In this paper, we propose a new block-sharing strategy in mining, called \textbf{partial block sharing} to solve the above challenges. Different from previous new block hiding or revealing, partial block sharing will only reveal part of a block (named \textit{partial block})  while some fields are hiding, e.g., \textit{nonce} and part of arbitrary bytes in the coinbase transaction. Here, we denote the hidden data as \textit{secret}. Previous work~\cite{7eyal2015miner,9eyal2014majority,13bendiksen2018bitcoin} shows that some miners may choose the most profitable mining strategy. In this paper, we define these miners as \textit{rational miners}. Partial block suffices to mine a consecutive second block, with all transaction outputs (related to UTXO) and block header hash value available. A partial block is regarded as an invalid block by honest miners but does not hinder rational miners from mining after it if it is more profitable. 

Based on the partial block sharing strategy, we propose a new and practical mining attack called \textbf{Partial Selfish Mining} (\textbf{PSM}). As shown in Figure \ref{fig:Part}, PSM starts as selfish mining to withhold a newly mined block. Then, the attacker can launch the partial block-sharing strategy and finally releases the secret by broadcasting the whole block before or right after 
another miner finds a new block. 
A rational miner can get the partial block before a new block is found by other miners on the public chain and may be attracted to work on the attacker's private branch, which is named as \textit{attracted} miner in this paper. When more attracted miners join the private branch, it has a higher probability of being the main chain. Thus, compared with selfish mining, PSM can not only waste the mining power of miners spending on the original public chain but also greatly enhance the success rate of the attacker's private branch to be chosen as the longest main chain. While previous mining attacks cannot improve selfish mining itself, PSM provides \emph{a new colluding strategy} to allow the attacker to gather more mining power in attacker's branch significantly in a short period. For instance, if the attacker's mining power is only 20\% of the whole system, PSM can attract profit-driven rational miners to attacker's branch and the overall mining power contributing to the selfish mining branch will be more than 20\%, and possibly even larger than 1/3.
Since estimating the exact fraction of rational miners in the wild is challenging, we analyze the attacker's revenue with different fractions of rational miners.
 Analysis results show that PSM can have a higher reward than both honest and selfish mining in a certain range. It is always beneficial for a rational miner to join the attacker's private branch if his mining power is smaller than the attacker's. 

To make PSM practical and colluding strategy successful, we must have mechanisms to convince rational miners that it is profitable to mine in the attacker's private branch. First, the attacker needs to ensure that it indeed has the secret, i.e., a complete and valid new block. Otherwise, all the mining power that rational miners spend on the private branch is in vain. Second, the attacker may deny broadcasting its secret as promised. We name this sabotaging behavior as PSM Denial of Service (PSM-DoS) attack. To address the first concern about block validation, we propose a zero-knowledge-proof-based mechanism to prove the block possession, which is entirely owned by the attacker. To counteract PSM-DoS attacks, we design an economic-based profit protection mechanism for rational miners. If the attacker fails to broadcast the promised secret/block timely, his deposit in a trusted third party could be forfeited by a rational miner. This mechanism also ensures that the secret can be calculated by others with limited computing power in an acceptable period of time, and makes the attacker's PSM-DoS attack economically not worthwhile. We implement the profit protection mechanism with a smart contract.


Moreover, we propose an \textbf{Advanced PSM} (\textbf{A-PSM}) strategy for attackers to gain more profits when the attacker can get the block height of the public branch timely. An attacker has economic incentives to monitor the  block height of the public branch to apply A-PSM actively.
This block height can also be reported by attracted miners or obtained via decentralized oracle~\cite{zhang2020deco} in case of any dispute to guarantee attracted miners' rewards. 
We show that the attacker's profit is no less than selfish mining in A-PSM. In most cases, attracted miners will follow the attacker's branch to have more mining rewards. Thus, we believe that A-PSM is a lucrative alternative to selfish mining in mining-related attacks.


We analyze the scenario when multiple attackers adopt either PSM or A-PSM. Attracted miners should join the branch that publishes the partial block  early to get more mining time and expected rewards. If attackers release partial blocks simultaneously in distinct branches, joining any branch has no impact on attracted miners' profits in PSM. However, they should join the branch with a larger attacker mining power in A-PSM.




Our theoretical analysis and evaluations show consistent results that  the PSM attack can increase the attacker's profits in a large fraction of the parameter space, and A-PSM attack can always earn more profits than selfish mining. In PSM, when 50\% of miners are rational, an attacker with 20\% mining power can get 1.25\% and 9.79\% more profits than honest mining and selfish mining, respectively. In A-PSM, when 30\% of miners are rational, an attacker with 10\% mining power can have a higher reward, gaining at most 23.6\% and 13.1\% more profits than the honest mining (i.e., behaving as a normal miner without attacks) and selfish mining, respectively. 

In summary, we have made the following contributions: 
\begin{itemize}
    \item We develop a new and practical block-sharing strategy called partial block sharing. Based on it, we propose a new mining attack protocol, PSM. PSM exhibits a new paradigm of colluding with other miners to gain mining advantages. By sharing the partial block data and attracting rational miners to work on the attacker's private branch, the attacker and attracted miners can gain more mining rewards than selfish mining under certain conditions.
    \item To make PSM feasible and prevent potential PSM-DoS attacks, we propose two mechanisms to convince rational miners to join the attacker's private branch. The first zero-knowledge-proof-based mechanism can prove the new whole block possession owned by the attacker. The second secret computation mechanism can make PSM-DoS attacks not worthwhile for attackers.
    \item To further increase the attacker's profits, we proposed A-PSM at the cost of acquiring the block height of the public branch timely. A-PSM assures an attacker of revenue no less than selfish mining. It can even outperform honest mining when the overall mining power is no more than 50\% in the attacker's private branch (including attracted miners).
    \item
    We analyze the profit and show the \textit{optimal strategy} for attackers and rational miners to maximize their revenue by adopting PSM and A-PSM strategies respectively in various network settings.
    \item We also discuss the scenario of multiple attackers. We evaluate and compare the revenue of both attackers and rational miners in the context of PSM and A-PSM strategies with the one of selfish mining and honest mining. In PSM, attracted miners can have more profits when their mining power is less than the attacker. An attacker adopting PSM can be more profitable than adopting selfish mining under a certain range of mining power and network conditions. With a realistic fraction of rational miners, A-PSM can always earn a higher profit than both selfish mining and honest mining.
\end{itemize}

In Section~\ref{sec:Preliminaries}, we present an overview of the blockchain background and related work. Section~\ref{sec:ModelAndAssump} describes the mining model and assumptions. Then we describe the PSM attack in Section~\ref{sec:ATKOverviewAndStr} in detail, including the profit analysis of rational miners. 
In Section~\ref{sec:simulation}, we use numeric analysis to evaluate the reward of PSM strategy by comparing it with both honest and selfish mining. Then we propose the A-PSM strategy in Section~\ref{sec:APSM}. In Section~\ref{sec:Eval}, we further illustrate the optimal mining strategy for attackers and rational miners, respectively. In Section~\ref{sec:Discussion}, we discuss the multiple-attacker scenario and countermeasures against PSM and A-PSM strategies. Section~\ref{sec:Conclusions} concludes the paper.

\nop{To proof the effectiveness of partial block sharing, we propose a new mining attack protocol called Partial Selfish Mining (PSM). As shown in Figure \ref{fig:Comparison}, like selfish mining, attackers keep the new block private and form a private chain when finding a new block. Meanwhile, attacker shares the partial block data needed for mining on the private branch with rational miners who are interested in getting a higher mining reward. When racing occurs, the attacker will immediately release the private block. If no fork happens, the attacker promises to release the private block when another new block finds on the private branch. Compared with selfish mining, the PSM can not only waste the mining power of the honest miner but also attract rational miners to join the attacker's private branch, which increases its probability of being the longest chain. We describe the PSM strategy in Section~\ref{sec:ATKOverviewAndStr} in detail.

\begin{figure}[htbp]
	\centerline{\includegraphics[width=\columnwidth]{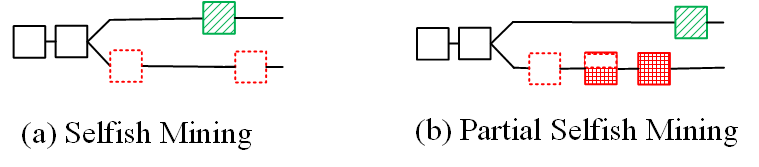}}
	\caption{Partial selfish mining strategy. Instead of publishing the new block, the attacker shares the partial block data with other miners to attract them to join its private branch. Compared with selfish mining, PSM can not only waste the efforts of honest miner, but also attract rational miners to increase the chance to win the race.}
	\label{fig:Comparison}
\end{figure}

Then we introduce some practical concerns in section~\ref{sec:PracticalConcerns}. It is not easy for attracted miners to trust the attacker and join the private branch. First, to attract rational miners working on its private branch, the PSM attacker need not only to provide the block body and the hash of the current block but also to prove its block is valid. If the attacker's block is not accepted by the block, then all the mining power the attracted miners spend on the private branch is wasted. Second, even if the attacker's private block is valid, it can also launch the PSM-DoS attack by moving its mining power back to the public chain and refuse to broadcast its partial released block. To address the potential concerns about the PSM-DoS attack of attracted miners, the attacker also needs to make a reliable promise that it will surely broadcast the partial released block to the public branch. 

To address the concern of block validation, we propose a zero-knowledge-proof-based method to prove the block possession. With the proof, anyone can verify the validity of the attacker's private block, and the attacker need not reveal the full block. 

To address the second challenge, we design a promise of block releasement mechanism. It not only assures the      miners can reveal the attacker's private branch even with limited mining power but also makes the PSM-DoS attack economically not worthwhile for the attacker.

 We propose quantitative analysis and simulation to compare the revenue for miners to follow the PSM and honest mining in Section~\ref{sec:MinerReward}. the result shows it is always more beneficial for other miners to join the attacker's private branch if their mining power is smaller than the attacker. Then we evaluate the revenue of the attacker and compare the performance of PSM strategy with honest mining and original selfish mining in section~\ref{sec:simulation}. In the comparison, we denote the ratio of other miners that choose to mine on the attacker's branch as "rushing ability" with $\gamma$. We have the following observations:
\begin{itemize}
\item \textbf{PSM vs. honest mining:} In section~\ref{sec:HonestMin}, we compare the PSM attack to honest mining. The result shows that when $\gamma=0$, PSM can be more profitable than honest mining if the attacker could attract enough rational miners to work on its private branch. But The attacker's revenue increase with the increment of $\gamma$. When $\gamma=1$, PSM is always more profitable than honest mining.
\item \textbf{PSM vs. selfish mining:} Then we compare the PSM attack with selfish mining in section~\ref{sec:SelfMin}. Unlike honest mining, PSM is more profitable when $\gamma = 0$. And the revenue of the attacker decrease with the increment of $\gamma$. When $\gamma=1$, the original selfish mining is always more profitable than PSM.
\end{itemize}

Based on the above observation, we can conclude that there are no one-size-fits-all mining strategies for attackers with different mining power and network conditions. Then we evaluate the optimal mining strategy in different conditions. The evaluation result shows that PSM is more likely to outperform selfish mining both honest mining when attacker's mining power is relative small, and the network condition is good.

Overall, we conclude our findings as follows: 
\begin{itemize}
    \item We introduce a new mining attack strategy called partial selfish mining. Then we analyze the partial selfish mining about when it can benefit both attacker and the attracted miner that participates in the attack.
    \item To address the security concern of attracted miners to join the attacker's private branch, we propose a dedicated partial block publication method together with a proof of block possession method and the promise of block releasement.
    \item We analyze the revenue of the attacker and attracted miner who follows the PSM. For miners, it is always more profitable when its mining power is less than attacker. But there is no one-size-fits-all mining strategy for attackers. 
    
\end{itemize}

}

\nop{ In~\cite{9eyal2014majority}, authors prove that in selfish mining, if $\gamma =0$, the threshold of the attacker's mining power is at $1/3$, and if $\gamma =0.5$, the threshold is at $1/4$. But in PSM, by sharing the partial block data with other miners, the attacker following the PSM strategy is more likely to get more rewards than honest mining. Taking Ethereum as an example, if the partial block data can reach every miner in the blockchain, then the threshold of the attacker's mining power is $22\%$ when $\gamma =0$/ When $\gamma =0.5$, the threshold is 0, which means the attacker's revenue is always better than honest mining.}

\nop{Based on the above observation, we can conclude that there are no one-size-fits-all mining strategies for attackers with different mining power and network conditions. Then, take an in-depth study of the mining attack strategies and analyze whether the attacker can obtain higher profits by combining different strategies. For example, when finding a new block, the attacker first conducts selfish mining for time $T_R$, then releases the partial block and conducts the PSM for $T_L$. Based on our analysis, we propose a hybrid mining strategy  (Section~\ref{sec: StrategySelect}). With a given attacker's mining power and the rushing ability, we can get the most profitable strategy combinations for the attacker. 

Our numerical analysis shows that regardless of the attacker's mining power, there is always a mining strategy with higher profits than the combination of different mining strategies (Section~\ref{sec: StrategySelect}). And in section~\ref{sec:Eval}, we simulate the mining strategy selection mechanism in different scenarios. }

\section{Related Work}\label{sec:Preliminaries}

\textbf{Selfish Mining:} Selfish mining was first proposed by Eyal et al.~\cite{9eyal2014majority}. A selfish mining attacker can earn extra rewards by intentionally generating a fork. When an attacker discovers a new block in selfish mining, it will keep the block as its private branch and keeps mining after it. When other miners find a block, the attacker will release the withheld blocks to cause a fork. The attacker can earn extra rewards once its private branch is selected as the main chain. According to~\cite{9eyal2014majority}, miners with more than $33\%$ computational power can surely get an extra reward compared with honest mining. In \cite{nayak2016stubborn}, researchers find that for a larger parameter space, by following a more `stubborn' strategy, miners can gain more rewards than selfish mining when their mining power is large enough.  Liao et al. \cite{liao2017incentivizing} present the whale transactions. By deploying large fees, attackers can incentivize miners to fork the blockchain. In recent years, Negy et al. \cite{negy2020selfish} further analyzed the profits of selfish mining and proposed intermittent selfish mining. It assures the attacker can get more profits than honest mining even after the difficult adjustment. Li et al.~\cite{li2021semi} analyze the mining attack strategy from the honest miner's perspective and optimize the selfish mining with a hidden Markov decision process. Sapirshtein et al.\cite{sapirshtein2016optimal} propose an extension to the model of selfish mining attacks in the Bitcoin protocol, providing an algorithm to find optimal policies for attackers and tight upper bounds on revenue. 
Our methods can further improve the attacker's rewards by attracting other miners to work in the attacker's branch.
For instance, consider a scenario where half of the miners are rational with parameters $\aA=1/3$ and $\ar=0$. In this case, the revenue of the original selfish mining strategy is 1/3. The strategy proposed in \cite{sapirshtein2016optimal} achieves a revenue of 0.33707, which is 1.12\% higher than the original selfish mining strategy. However, with Partially Selfish Mining (PSM), the revenue attained is 0.3403, which is 0.96\% higher than the revenue achieved by the strategy in \cite{sapirshtein2016optimal} and 2.1\% higher than the revenue of the original selfish mining strategy.

\nop{
\begin{table}[]
\resizebox{\columnwidth}{!}{
\begin{tabular}{cc}
\hline
Name                    & Description                                                                                                                                              \\ \hline
Selfish Mining\cite{9eyal2014majority}          & \begin{tabular}[c]{@{}c@{}}Withholding the discovered block and continue\\  finding the next block on its private branch.\end{tabular}                   \\ \hline
Block Withholding (BWH)\cite{luu2015power} & \begin{tabular}[c]{@{}c@{}}Discarding the block found in the victim pool\\  to reduce the victim pool's revenue.\end{tabular}                             \\ \hline
Bribery\cite{bonneau2016buy}          & \begin{tabular}[c]{@{}c@{}}Pay other miners with bribes when\\  extending its branch in racing.\end{tabular}                                             \\ \hline
Power Splitting\cite{kwon2017selfish}          & \begin{tabular}[c]{@{}c@{}}Strategically splitting the mining power and using part\\  of powers to sabotage the reward of victim pool.\end{tabular} \\ \hline
Power Adjusting\cite{gao2019power}         & \begin{tabular}[c]{@{}c@{}}Adjusting the mining power between the victim pool\\  and solo mining to get a higher reward.\end{tabular}           \\ \hline
\end{tabular}}
\caption{Typical strategies for mining attacks.}
\label{tab:MiningAttack}
\end{table}
}
\textbf{Other Cryptocurrency-Related Attacks:} After the proposal of selfish mining, researchers have proposed various mining attack tactics against the PoW Blockchain. 
For those who do not have enough mining power, Loi et al.~\cite{luu2015power} proposed the block withholding (BWH) attack. 
By strategically splitting the mining power, attackers can get more rewards than honest mining in the long run. Kwon et al.~\cite{kwon2017selfish} combine the BWH with selfish mining and propose the Fork After Withholding (FAW) attack. Instead of simply discarding the full block finding in the victim pools, in FAW, attackers mine on the newly mined block to generate a private branch. 
In recent years, Gao et al.~\cite{gao2019power} proposed a Power Adjusting Withholding (PAW) attack. PAW attackers adjust the mining power between the victim pools and solo mining, unlike FAW attackers. By doing so, PAW attackers can gain twice as much revenue as FAW attacks.


Besides mining attacks, researchers also proposed other incentive-based attacks. For example, based on the assumption that miners are inclined to choose the most profitable
mining strategy to work, Michael et al.~\cite{mirkin2020bdos} propose the Blockchain Denial of Service (BDoS) attack. Instead of getting a higher reward, BDoS attacker invests resources to reduce other miners' profits and lure them away from mining. By publishing the block header of the newly mined block, the attacker signal to the miners that the system is in a state that reduces its profits. 
In our study, PSM attackers share the full block except for the secret data instead of sharing the block with the rational miner. Miners receiving partial block data are not likely to suffer a loss. Instead, they can reevaluate their profits and choose the most profitable way to work. 

When there is a fork and a race occurs, miners follow the block they received first as the legal block. In previous work ~\cite{neudecker2019short}, researchers tend to believe that each branch has about 50\% of miners to work on. 
However, Saad et al. investigated the Bitcoin safety properties and concluded from different angles. In~\cite{saad2021revisiting}, they pointed out that the assumption about the strong network synchrony may not hold in real-world deployment. They also implemented a HashSplit attack that allowed the attacker to orchestrate the mining power distribution when a race occurred. Their work~\cite{saad2021syncattack} further shows that the unstable hash rate distribution can make it possible for attackers to launch a double-spending attack without mining power.

\section{Model and Assumptions}\label{sec:ModelAndAssump}
\subsection{Mining Model}

We consider a blockchain system with $n$ miners whose normalized mining powers are denoted by $\alpha_1, \alpha_2 ... \alpha_n$, and an attacker with mining power $\alpha_A$, such that $\alpha_A+\sum_{i=1}^n\alpha_i=1$. 



As shown in Figure~\ref{fig:Miners}, in PSM attacks, without loss of generality, we assume that miners are divided into the following groups: 
\begin{figure}[htbp]
	\centerline{\includegraphics[width=0.8\columnwidth]{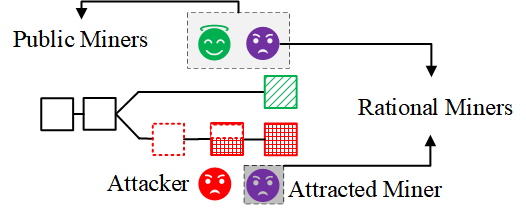}}
	\caption{\textbf{Miners' roles in PSM scenario.}} 
	\label{fig:Miners}
\end{figure}

\begin{itemize}
    \item \textbf{Attacker:}  A miner or a colluding minority pool that has a newly mined block(s) and follows the PSM strategy. It can preserve a mined block(s), form a private branch, and share partial block data with rational miners like~\cite{mirkin2020bdos}. Besides, it also has access to the smart contract to share the proof of block possession and other needed data with rational miners.
    \item \textbf{Rational miners:}  A group of miners that are profit-driven but still follow PoW to mine. In most cases, this group is the minority.
    Receiving a partial block released by an attacker as in PSM, rational miners can choose their optimal strategy (i.e., mining on which branch) to get a higher reward. 
    \item \textbf{Attracted miners:}  Part of rational miners that choose to work on the attacker's private branch, mining after the partial block. We name the strategy taken by attracted miners as the \textit{greedy mining strategy}.
    \item \textbf{Public miners:} Miners that take honest mining to truthfully generate and release new blocks in the longest chain. We name the strategy taken by public miners as the \textit{public mining strategy}. Non-attracted rational miners are considered public miners and follow the public mining strategy.
\end{itemize}

Note that all attracted miners are rational miners. The reverse does not always hold since some rational miners may choose to mine on the original public branch according to their optimal strategy even under PSM attacks. A miner, rather than the attacker, is either an attracted miner or a public miner under attacks. If another miner (e.g., a non-public miner) finds a new block and launches the PSM attack simultaneously, we have multiple attackers and such cases are discussed in Section \ref{sec:Discussion}.


We denote the rushing ability of the attacker by $\gamma$. If the attacker publishes a new whole block from its private branch to race with other miners (e.g., a new block broadcast in the public branch), $\gamma$ is the expected ratio of public miners that receive the attacker's block first. $\gamma$ mainly depends on the miner's network condition. 
In previous work, $\gamma$ is commonly considered as $\frac{1}{2}$~\cite{9eyal2014majority,sapirshtein2016optimal}. However, some researchers also propose methods that allow the attacker to get a larger $\gamma$~\cite{saad2021revisiting}. In this paper, we consider $0<\gamma<1$.
\subsection{Assumptions}
We have the following assumptions that are consistent with other PoW-based mining attacks, such as \cite{9eyal2014majority}, \cite{nayak2016stubborn}, \cite{gervais2016security}, \cite{gao2019power}.

\begin{enumerate}
    \item Instead of 6.25 BTC, the reward of a new block is normalized to 1. Our analysis gives the expected reward for every participant \cite{13bendiksen2018bitcoin}.
    \item Each miner/pool's computational power should not be greater than 0.5 to avoid ``51\% attacks" \cite{bradbury2013problem}. 
    \item A miner's expected normalized reward equals the probability of finding a valid block in each round. This assumption is reasonable because according to \cite{gervais2016security}, the probability of unintentional forks is around 0.41\%, which is negligible.
    \item Rational miners do not trust the attacker and join its private branch unless the attacker can provide an extra method to guarantee the profits of rational miners. Consequently, the traditional selfish miner cannot attract rational miners.
    \item Not all the miners in the blockchain system are rational. Though we proposed some measures to address the concern of miners working on the attacker's private branch, we take the possibility into consideration that they cannot allay the concerns of all miners. Thus, we discuss the fraction of rational miners to be no more than 50\% in this paper.
    \item Rational miners tend to pull external information and adjust their mining strategy accordingly. This assumption is commonly made in previous work~\cite{mirkin2020bdos,eghbali201912}.
\end{enumerate}

\section{Partial Selfish Mining Attack}\label{sec:ATKOverviewAndStr}

In PSM, an attacker follows the partial block sharing strategy and shares the partial block information with rational miners. The partial block has some data covered, e.g., \textit{nonce} and part of arbitrary bytes in the coinbase transaction. Miners can mine after it to get a new block. The hidden data can be recovered by others, spending considerable mining power. We denote the hidden data as \textit{secret}. 

The generic PSM can first withhold the new block, then release the partial block, and finally broadcast the complete block. If an attacker does not release the partial block or complete block in PSM, it becomes selfish mining. The attacker can also bypass the first two steps and directly publish the complete block, and it is honest mining. Or the attacker can bypass the first step (when selfish mining is not the dominant strategy as discussed in Section \ref{sec:Eval}) but continue the second and third steps, which has not been discussed previously and is the focus of this section. 




\nop{It is an optimized selfish mining attack. The core idea is that when finding a new block, instead of conducting selfish mining privately, the attacker can broadcast the information that it has found the new block, together with an invitation for rational miners to work in its private branch. Rational miners receiving the information will reevaluate their revenue and choose to work in the more profitable branch. Like selfish mining, the attacker and the attracted miners who join its private chain can waste the mining power of honest miners. Meanwhile, they also have a higher chance of winning the single block race than the original selfish mining since the mining power on the private branch is larger.}

\subsection{Attack Overview}


\begin{figure}[htbp]
	\centerline{\includegraphics[width=\columnwidth]{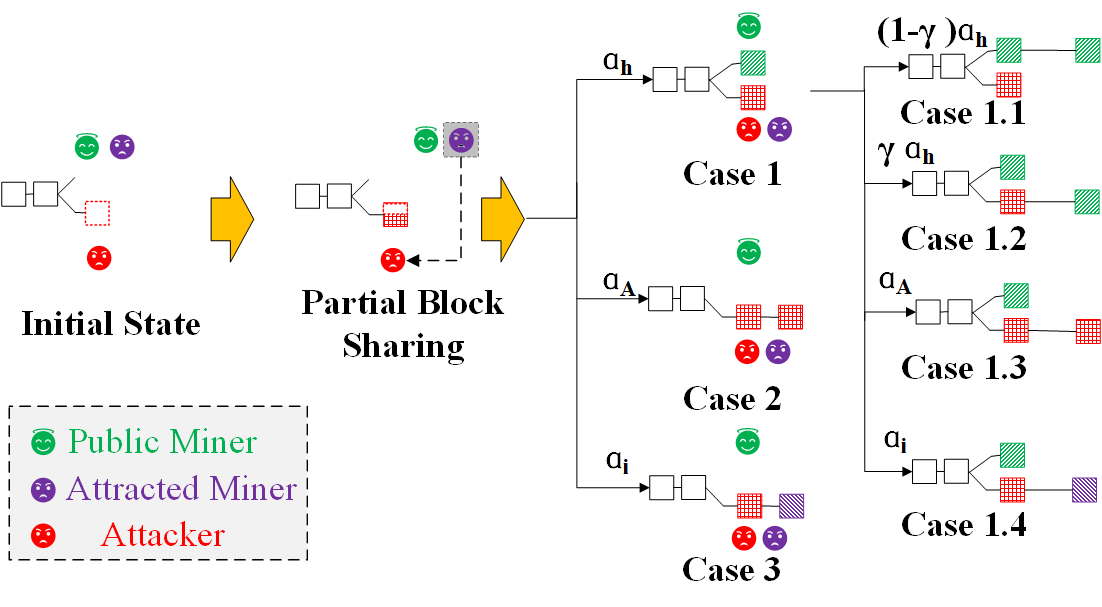}}
	\caption{\textbf{Workflow of PSM strategy.} Instead of publishing a new block, the attacker shares partial block data with other miners to attract them to join its private branch.Three possible cases of finding another new block after the announcement of the partial block. Case 1: By public miners; Case 2: By the attacker; Case 3: By attracted miners.}
	\label{fig:SMOverview}
\end{figure}

The workflow of PSM is shown in Figure \ref{fig:SMOverview}. In PSM, when the attacker finds a block, it keeps the block private instead of immediately releasing it. In the meantime, the attacker releases the partial block with proof of block possession. With these released data, rational miners could mine on the private branch. To assure the profits of attracted miners, the attacker will also announce a smart contract that allows the rational miner to get the attacker's collateral if it does not release the secret promised. For simplicity and fairness, we assume that all honest and rational miners will work on the branch first received if a race happens. An attacker can even lure miners to work on his branch in the race case by  adding transactions with lower fees to its block and leaving high-fee transactions to be collected by others. 



After the release of the partial block, three possible cases may happen: the attacker finds the new block on its private branch, public miners find a new block on the public branch or attracted miners find a new block on the private branch. 

In the first case, when miners find a new block on the public branch, the attacker releases its private branch and starts a 0-lead racing. In this scenario, the attacker and the attracted miners will mine on the previously private branch, and public miners will choose to mine on either branch. As defined earlier, $\gamma$ public miners will work on the private branch, and $1-\gamma$ public miners will mine on the public branch.

In the second case, when the attacker finds a new block on the private branch, the attacker will release the whole private branch and get the two blocks' revenue. Then the system goes back to the single-branch state.

In the third case, when the attracted miners find a new block in the private branch, the attacker will immediately release the whole private branch to get the revenue of all the blocks on the private branch. Similar to the first case, the blockchain returns to the single-branch state.

Attackers can decide among PSM, honest or selfish mining based on profitability. In the following, we illustrate the reward of PSM. The reward of honest and selfish mining is given in Section~\ref{sec:simulation} for comparisons. We show the optimal mining strategy for the attacker in Section~\ref{sec:Eval}.

\subsection{PSM Reward}

We use the following parameters in our analysis. 
\begin{itemize}
    \item$\gamma$: Ratio of  public miners choosing attacker's branch when a race occurs. 
    \item $\alpha_i$: Mining power of attracted miners.
    \item $\alpha_k$: Mining power of one rational miner $k$.
    \item $\alpha_A$: Mining power of the attacker.
    \item $\alpha_h$: Mining power of public miners.
    \item $R^n_m$: Revenue of miner $n$ in case $m$.
    \item $H$: Public mining strategy.
    \item $S$: Selfish mining strategy.
    \item $P$: Partial selfish mining strategy.

\end{itemize}


\nop{Figure \ref{fig:SMOverview} shows all possible cases after the attacker releases the partial block data. When the attacker first finds a new block, it will publish the message that it has found the new block, together with the proof and the partial block data. Then, four possible cases may happen.

\textbf{Case 1}: A public miner finds the new block. Then four possible sub-cases may happen:

\textit{Case 1.1}: A public miner finds the new block after the original public branch. In this case, the public miner will get 2 block rewards. The possibility is $(1-\gamma)\alpha_h$. The expected profits of each participant can be represented as: $R_{1.1}^A=R_{1.1}^i=0, R_{1.1}^h=2\alpha_h \times(1-\gamma)\alpha_h$.

\textit{Case 1.2}: A public miner finds the new block after the attacker's branch. In this case, the attacker will get 1 block reward, and the public miner will get 1 block reward. The possibility is $\gamma \alpha_h$. The expected profits of each participant can be represented as: $		  R_{1.2}^A=\alpha_h\times\gamma\alpha_h,  R_{1.2}^i=0,  R_{1.2}^h=2 \alpha_h \times(1-\gamma)\alpha_h$.

\textit{Case 1.3}: The attacker finds the new block. In this case, the attacker will get 2 block rewards. The possibility is $\alpha_A$. The expected profits of each participant can be represented as:$R_{1.3}^A=2 \alpha_h \times\alpha_A$, and $R_{1.3}^i=R_{1.3}^h=0$.

\textit{Case 1.4}: An attracted miner finds the new block. In this case, the attacker will get 1 block reward, and the attracted miner who finds the new block will get 1 block reward. The possibility is $\alpha_i$. We can represent the expected profits of each participant as follows: $ R_{1.4}^A= R_{1.4}^i=\alpha_h \times\alpha_i, R_{1.4}^h=0$.

\textbf{Case 2}: The attacker finds the block. The attacker will release the two partial blocks to the public chain immediately. In this case, the adversary will get 2 block rewards. we can represent the expected profits of each participant as follows: $R_{2}^A= 2 \alpha_A,R_{2}^i=R_{2}^h=0$.

\textbf{Case 3}: An attracted miner finds the block. The attacker will release two partial blocks to the public chain immediately. In this case, the adversary will get 1 block reward, and the attracted miner will get 1 block reward. The expected profits of each participant can be represented as: $R_{3}^A= R_{3}^i=\alpha_i,  \     \ R_{3}^h=0.$
}
\nop{\color{red}\sout{We can derive the attacker's expected profits as:}
\begin{small}
\begin{equation}
\begin{aligned}
R_P^A=&R_{1.2}^A+R_{1.3}^A+R_{1.4}^A+R_{2}^A+R_{3}^A\\
=&\gamma  \alpha_h^2+2\alpha_A\alpha_h +\alpha_i \alpha_h +2 \alpha_A +\alpha_i.
\label{equ:PSMRev}
\end{aligned}
\end{equation}
\end{small}
}

\begin{figure}[htbp]
	\centerline{\includegraphics[width=0.4\columnwidth]{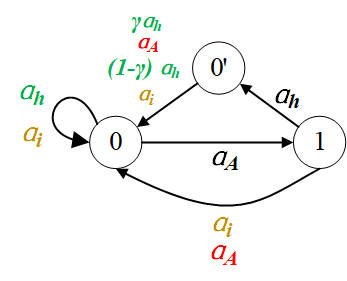}}
	\caption{\textbf{State machine of PSM when miners with overall $\alpha_i$ mining power working greedy.}} 
 
	\label{fig:STPSMATK}
\end{figure}

\fig~\ref{fig:STPSMATK} shows the progress of the PSM attack as a state machine. The state of the system shows the lead status of the attacker's private branch. Zero lead is separated into states 0 and 0'. State 0 means there is no branch in the blockchain, and the public chain is the longest chain. State 0' is the state where there are two branches on the blockchain, one is the original public chain, and the other is the attacker's previous private chain. The value on each arrow indicates the possibility of state transition. 

Based on the above state machine, we can derive the PSM attacker's profit as follows:

\begin{theorem}
The profit of a PSM attacker is
\begin{small}
\begin{equation}
\begin{aligned}
\label{equ:atkprof}
   R^A_P=\frac{\aA\ah^2\ar+(\aA\ah+\aA)\ai+2\aA^2\ah+2\aA^2}{(2\aA\ah+2\aA+1)\ai+2\aA\ah^2+(2\aA^2+1)\ah+2\aA^2}
    \end{aligned}
\end{equation}
\label{thm:PSMATK}
\end{small}
\end{theorem}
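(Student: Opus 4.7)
The plan is to treat the state machine in Figure~\ref{fig:STPSMATK} as a three-state Markov chain on $\{0, 1, 0'\}$, compute its stationary distribution, and then evaluate the ratio of the attacker's expected per-step reward to the total per-step reward credited to the main chain. This mirrors the standard Eyal--Sirer-style relative-revenue analysis of selfish mining; the main work is correctly enumerating the transitions and reward flows specific to PSM's three post-partial-release cases.

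First I would read off the transition probabilities from the attack description. From state $0$: with probability $\aA$ the attacker finds a block and moves to state $1$ (the block stays private, and the partial block is released), otherwise the chain remains in state $0$ with one honest/rational block appended. From state $1$: with probability $\aA+\ai$ either the attacker or an attracted miner finds the next block, so the attacker publishes both private blocks and returns to state $0$; with probability $\ah$ a public miner finds the next block, triggering the race and moving to state $0'$. From state $0'$: the race resolves in one step, transitioning back to $0$ with probability $1$. Solving the balance equations with $\aA + \ai + \ah = 1$ gives $\pi_1 = \aA \pi_0$ and $\pi_{0'} = \aA\ah\pi_0$, and normalizing yields $\pi_0 = (1+\aA+\aA\ah)^{-1}$.

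Next I would tally the reward flows. In state $0$ the attacker gains nothing. A transition $1\to 0$ via the attacker or an attracted miner credits the attacker with $2\aA+\ai$ in expectation (the attacker keeps both blocks when it finds the second, keeps one of two when an attracted miner finds it). From state $0'$, using the case breakdown in Section~\ref{sec:ATKOverviewAndStr}, the attacker collects an expected $2\aA + \ai + \gamma\ah$ per visit (two blocks if the attacker wins, one if an attracted or rushed public miner extends the private branch, zero if the public branch wins). Thus
\begin{equation*}
\mathbb{E}[R_A] = \pi_1(2\aA+\ai) + \pi_{0'}(2\aA+\ai+\gamma\ah).
\end{equation*}
For the denominator I would count only blocks that end up on the main chain: one block per $0\to 0$ transition, zero per $0\to 1$, two per $1\to 0$, zero per $1\to 0'$, and two per $0'\to 0$ (the winning chain always contributes two blocks, the orphan is discarded). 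Summing against the stationary distribution gives $\mathbb{E}[\text{blocks}] = \pi_0(1-\aA) + 2\pi_1(\aA+\ai) + 2\pi_{0'}$, which after substituting $\ai+\ah = 1-\aA$ collapses to $(1+\aA)/(1+\aA+\aA\ah)$.

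Finally I would form the relative revenue $R^A_P = \mathbb{E}[R_A]/\mathbb{E}[\text{blocks}]$. The $1+\aA+\aA\ah$ factor cancels between numerator and denominator, yielding the compact form $[\aA(2\aA+\ai) + \aA\ah(2\aA+\ai+\gamma\ah)]/(1+\aA)$; expanding the products recovers the numerator of Equation~\ref{equ:atkprof} exactly, and rewriting $1+\aA$ back as $\ai+\ah+2\aA^2+2\aA\ah(\ah+\aA)+2\aA\ai(1+\ah)$ via $\ai+\ah = 1-\aA$ matches the denominator as written. The main obstacle is bookkeeping in the two reward sums: one must be careful that orphaned race blocks are excluded from both numerator and denominator, and that state $0$ emits reward $1-\aA$ (not $1$) since the attacker-found block of probability $\aA$ is deferred rather than immediately accepted. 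Once these accounting choices are made consistently the algebraic simplification is mechanical.
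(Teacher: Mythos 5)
Your proposal is correct and follows essentially the same route as the paper's Appendix~\ref{sec:PSMAtk}: solve the three-state machine for the stationary probabilities $P_0=1/(1+\aA+\aA\ah)$, $P_1=\aA P_0$, $P_{0}'=\aA\ah P_0$, credit the attacker $(2\aA+\ai)$ per visit to state $1$ and $(2\aA+\ai+\ar\ah)$ per visit to state $0'$, and normalize by the total main-chain reward per step (your per-transition block count is identical to the paper's $Rev^{PSM}_{atk}+Rev^{PSM}_{attracted}+Rev^{PSM}_{honest}$). Your extra observation that the normalizer collapses to $(1+\aA)/(1+\aA+\aA\ah)$ is a correct simplification consistent with the stated denominator.
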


The proof of Theorem~\ref{thm:PSMATK} is given in Appendix~\ref{sec:PSMAtk}.

For simplicity, we exclude the block reward variable and network-related factors other than rushing ability in our analysis. This is due to the relatively minor impact of these factors when compared with the mining power and rushing ability.

\subsection{Rational Miners' Profits in PSM}\label{sec:MinerReward}

To attract rational miners to work on its private branch, the attacker needs to assure that the rational miners can get more rewards than public mining, and make a solid promise to guarantee that the attacker's cost of breaking the promise is unbearably high. In this section, based on our assumption in Section \ref{sec:ModelAndAssump}, we analyze the rational miner's strategy space and its profits. 

\begin{figure}[htbp]
	\centerline{\includegraphics[width=0.8\columnwidth]{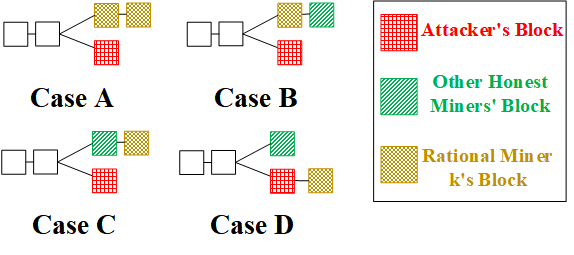}}
	\caption{\textbf{Four reward cases exist when a rational miner \textit{k} follows the public mining strategy.} Case A: Miner \textit{k} finds two blocks on the public branch. Case B: Miner \textit{k} finds one block on the public branch, and a public miner further extends the block on the public branch. Case C: Another public miner finds a new block on the public branch, and miner \textit{k} further extends the block on the public branch. Case D: Another public miner finds a new block on the public branch, then miner \textit{k} finds a new block on the attacker's branch. Please note that once a public miner finds a block, the attacker will surely release the full block and start a race.} 
	\label{fig:RatHonest}
\end{figure}

Assuming in a bitcoin-like blockchain network, rational miner $k$ will choose the most profitable branch to work based on the information it received. Miner $k$ can know the $\alpha_A$ from the attacker's message. In a balanced blockchain network with low congestion, $\gamma$ is usually $\frac{1}{2}$~\cite{9eyal2014majority,sapirshtein2016optimal}. In practice, the network status changes dynamically, and miners can hardly get the exact $\gamma$ value. Rational miners could estimate the $\gamma$ value based on the current network status or simply consider the worst case ($\gamma=0$). We are interested in which strategies can maximize the miner's profit with different mining power, $\gamma$, and $\alpha_A$.

From the state machine in \fig~\ref{fig:STPSMATK}, The miner $k$'s revenue $R_G^k$ can be expressed as follows: 
\begin{theorem}
When miner $k$ follows the greedy mining strategy, miner $k$'s revenue is 

\begin{small}
\begin{equation}
R_G^k=\frac{(\aA\ah+\aA+1)\ak}{(2\aA\ah+2\aA+1)\ak+2\aA\ah^2+(2\aA^2+1)\ah+2\aA^2}. 
\label{eq:RGK}
\end{equation}
\end{small}
\label{thm:PSMGMN}
\end{theorem}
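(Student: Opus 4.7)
The plan is to reuse the three-state Markov chain of Figure~\ref{fig:STPSMATK} from the proof of Theorem~\ref{thm:PSMATK} (Appendix~\ref{sec:PSMAtk}), specialized to the case where miner $k$ is the sole greedy miner, so that $\ai = \ak$ in the transitions. This is the natural setting for computing $k$'s individual best response: we ask what $k$ would earn by unilaterally defecting to the greedy strategy while all other rational miners remain on the public branch. From the transitions $0 \to 1$ with probability $\aA$, $1 \to 0$ with probability $\aA + \ak$ and $1 \to 0'$ with probability $\ah$, and $0' \to 0$ with probability $1$, detailed balance gives $\pi_1 = \aA \pi_0$ and $\pi_{0'} = \aA \ah \pi_0$, so $\pi_0 = 1/(1 + \aA + \aA \ah)$.

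Next I would compute $k$'s per-round reward. The key observation is that in every state of the chain, miner $k$ wins the next block with probability exactly $\ak$ (in state $0$ by mining on the public chain; in states $1$ and $0'$ by mining on the attacker's private branch) and collects one unit of reward each time. Summing over the three stationary weights therefore yields $R_k = (\pi_0 + \pi_1 + \pi_{0'}) \ak = \ak$. I would then obtain the total per-round reward $\bar R$ by aggregating transition rewards: state $0$ awards $1 - \aA$ on average (the attacker's block is withheld), state $1$ awards $2(\aA + \ak)$ (attacker- or $k$-found transitions each commit two blocks, while public-found transitions merely launch a race with no immediate reward), and state $0'$ always awards $2$ (the race always commits two blocks regardless of who wins). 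Weighting by the stationary probabilities gives
\begin{equation*}
\bar R = \frac{(1-\aA) + 2\aA(\aA+\ak) + 2\aA\ah}{1+\aA+\aA\ah},
\end{equation*}
and $R_G^k = R_k/\bar R$.

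Forming the ratio directly yields the numerator $(\aA\ah + \aA + 1)\ak$ stated in the theorem. The remaining step is an algebraic identity showing that the ``compact'' denominator $(1-\aA) + 2\aA(\aA+\ak) + 2\aA\ah$ coincides with the stated polynomial $(2\aA\ah+2\aA+1)\ak + 2\aA\ah^2 + (2\aA^2+1)\ah + 2\aA^2$ modulo the unity constraint $\aA + \ah + \ak = 1$; concretely, one substitutes $1 = \aA + \ah + \ak$ inside the cross-term $2\aA\ah \cdot 1$ to generate the cubic terms $2\aA\ah\ak$, $2\aA\ah^2$, and $2\aA^2\ah$ that distinguish the two forms, and likewise rewrites $1-\aA$ as $\ah+\ak$. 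The main obstacle here is bookkeeping rather than insight: because the paper presents the denominator without exploiting the normalization constraint, matching its expanded form term-by-term requires carefully tracking which monomials arise from the substitution to avoid dropping or double-counting coefficients.
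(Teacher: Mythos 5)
Your proposal is correct and follows essentially the same route as the paper: the paper's proof simply specializes the aggregate attracted-miner formula $R^G_P$ from Appendix A (derived from the same three-state chain with the same stationary probabilities $P_0=1/(1+\aA+\aA\ah)$, $P_1=\aA P_0$, $P_{0'}=\aA\ah P_0$) by setting $\ai=\ak$ under the same worst-case "sole attracted miner" reasoning. Your only deviation is bookkeeping the total committed reward per state rather than summing the attacker's, attracted miners', and public miners' revenues separately, and the two denominators indeed coincide modulo $\aA+\ah+\ak=1$ exactly as you note.
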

\begin{proof}
    According to Appendix~\ref{sec:PSMAtk}, the overall revenue of all attracted miners can be expressed as: 
    \begin{small}
    \begin{equation}
\begin{aligned}
   R^G_P=\frac{(\aA\ah+\aA+1)\ai}{(2\aA\ah+2\aA+1)\ai+2\aA\ah^2+(2\aA^2+1)\ah+2\aA^2}
    \end{aligned}
    \label{eq:RGP}
\end{equation}
    \end{small}

  Rational miners make their decisions independently. Therefore, miner $k$ faces difficulty in accurately determining the number of rational miners in the network. 
   As a cautious rational miner, miner $k$ should consider the worst-case scenario when the attacker has small mining power, assuming it is the only attracted miner in the network, which means $\ai=\ak$. Then, we have Equation (\ref{eq:RGK}).
\end{proof}

When miner \textit{k} follows the public mining strategy, there are four possible reward cases, as shown in Figure~\ref{fig:RatHonest}.  Its revenue can be derived as follows:
\begin{theorem}
    When miner $k$ follows the public mining strategy, its revenue is:
    \begin{small}
\begin{equation}
\begin{aligned}
    R_P^k=\frac{(\aA\ak^2+(\aA^2-\aA)\ak)\ar+((-2\aA^2)+2\aA+1)\ak}{\aA+1}.
\label{equ:RevMiner}
\end{aligned}
\end{equation}
\end{small}
\label{thm:PSMHMN}
\end{theorem}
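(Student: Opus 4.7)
The plan is to mirror the renewal-reward argument that underlies Theorem~1, but accumulate the reward stream from the vantage point of a single public miner $k$ instead of the attacker. I would re-use the state machine of Figure~\ref{fig:STPSMATK}: a cycle begins in state~0 (no withheld block, everyone mining the public chain), transitions once the attacker finds a block, and ends when that block is either silently extended by the attacker/attracted side or consumed in a race. Since $k$ is the only non-attacker miner the analysis singles out and no other miner is assumed greedy, I would set $\ai=0$ so that $\ah=1-\aA$; this is the public-side analogue of the $\ai=\ak$ substitution used for Theorem~2, and it is what collapses the normalization factor to $\aA+1$.

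Next, I would decompose $k$'s expected per-cycle reward into two contributions. In the state-0 phase a geometric number of non-attacker blocks (mean $(1-\aA)/\aA$) extends the public chain, and $k$ collects each with conditional probability $\ak/(1-\aA)$, giving an expected contribution of $\ak/\aA$. For the state-1 phase I would enumerate exactly the four positive-reward events in Figure~\ref{fig:RatHonest}: Case~A, where $k$ finds the race-triggering block and also the tiebreaker (probability $\ak^2$, reward~$2$); Case~B, where $k$ triggers and a non-rushing public miner finishes on the public fork (probability $\ak(1-\ar)(\ah-\ak)$, reward~$1$); Case~C, where another public miner triggers, $k$ stays on the public fork, and $k$ wins the tiebreaker (probability $(\ah-\ak)(1-\ar)\ak$, reward~$1$); and Case~D, where another public miner triggers, $k$ rushes to the attacker's fork, and $k$ wins the tiebreaker (probability $(\ah-\ak)\ar\ak$, reward~$1$). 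Summing yields the state-1 contribution $\ak\bigl[2\ak+(\ah-\ak)(2-\ar)\bigr]$.

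Finally, I would apply the renewal-reward identity by dividing total expected reward by the expected number of main-chain blocks per cycle, which (as in Theorem~1) equals $(1+\aA)/\aA$. Multiplying through by $\aA$ and substituting $\ah=1-\aA$, I would separate the $\ar$-linear term from the $\ar$-independent remainder; collecting coefficients should reproduce $\aA\ak^2+(\aA^2-\aA)\ak$ multiplying $\ar$ and $(-2\aA^2+2\aA+1)\ak$ standing alone, matching the claimed identity.

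The hard part will be the case bookkeeping rather than any real algebra: when $k$ itself triggers the race in state~1, $k$ is already committed to the public fork and cannot rush, so only the remaining $\ah-\ak$ of public hash power splits in ratio $\ar:(1-\ar)$; when a different public miner triggers, $k$ splits independently like any public miner. Conflating these two regimes---e.g., letting $k$ rush after triggering, or crediting the trigger-finder automatically to the winning fork---would shift the $\ar$-coefficient and fail to reproduce the $\aA\ak^2+(\aA^2-\aA)\ak$ signature, which is the cleanest sanity check that the enumeration is right.
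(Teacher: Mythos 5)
Your proposal is correct and follows essentially the same route as the paper's Appendix B: the same state machine with the race state split according to who triggered it, the same four reward cases A--D from Figure~\ref{fig:RatHonest} (including the key point that $k$ cannot rush after triggering, so only the remaining $\ah-\ak$ of public power splits $\ar:(1-\ar)$), and the same normalization by total main-chain blocks. The only difference is cosmetic -- you accumulate rewards per renewal cycle and divide by the expected $(1+\aA)/\aA$ main-chain blocks per cycle, whereas the paper solves the stationary balance equations and divides by the sum of all parties' revenues -- and both yield the identical numerator $(\aA\ak^2+(\aA^2-\aA)\ak)\ar+(-2\aA^2+2\aA+1)\ak$.
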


The proof of Theorem~\ref{thm:PSMHMN} is in Appendix~\ref{sec:PSMMiner}.


Here, we first consider a specific case: an attacker with the computational power of 0.1 who executes the PSM attack against the blockchain. For different $\alpha_k$, the expected profits of following greedy and public mining strategies are shown in Figure \ref{fig:MinerRev}.

\begin{figure}[htbp]
	\centerline{\includegraphics[width=0.6\columnwidth]{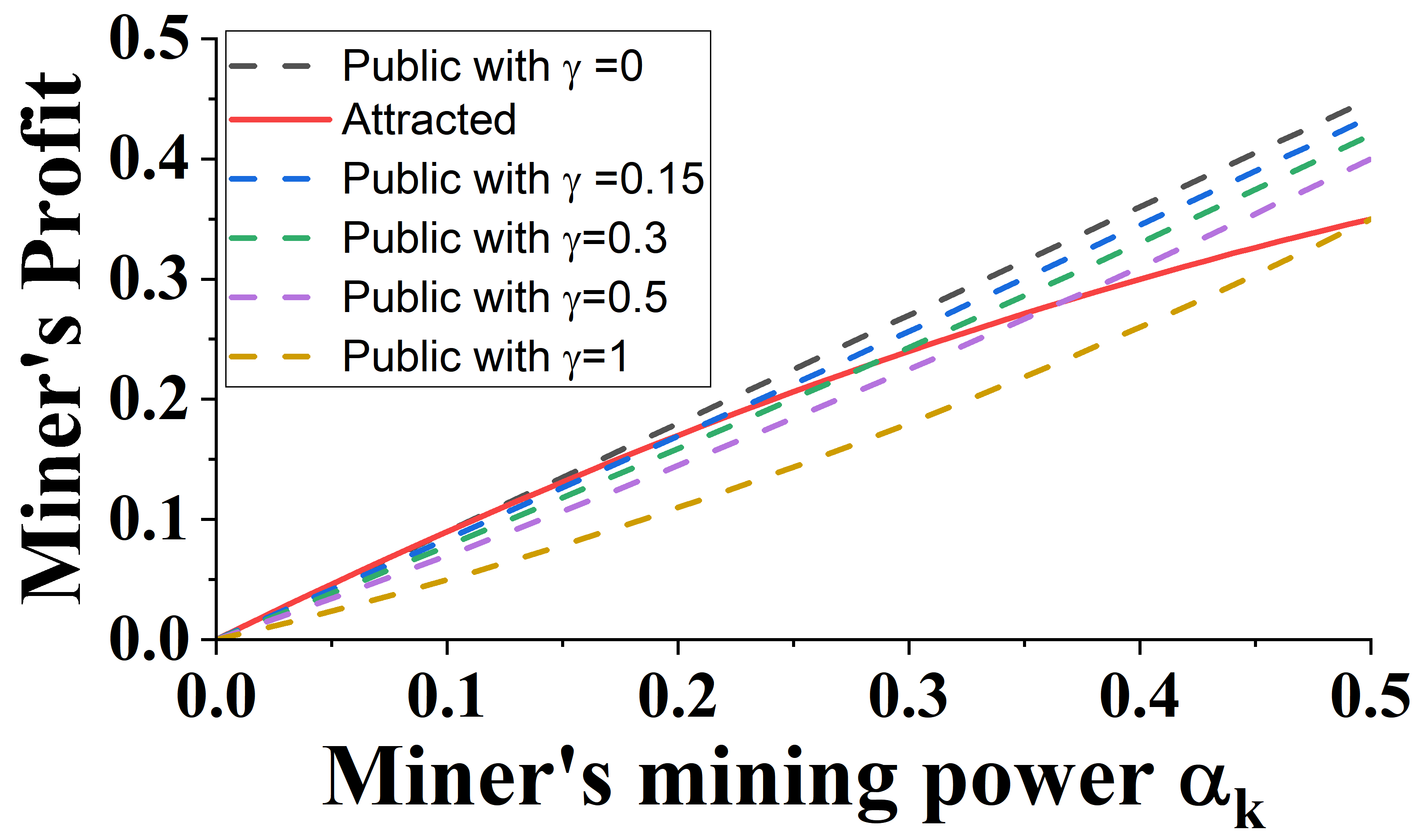}}
	\caption{\textbf{Revenue comparisons for a miner to choose greedy or public mining strategies when attacker's mining power $\alpha_A=0.1$.} Note that the rushing ability $\gamma$ has no impact on the revenue of the greedy strategy. }
	\label{fig:MinerRev}
\end{figure}


To evaluate the profit of the greedy strategy over the public mining strategy, we calculate the relative extra rewards (RER). 
The expected RER of miner $k$ when adopting strategy $s_1$ rather than $s_2$ can be expressed as: 
\begin{small}
\begin{equation}
  RER_{k}^{s_1,s_2}=\frac{R_{s_1}^k-R_{s_2}^k}{R_{s_2}^k},
\end{equation}
\end{small}
where $R^k_{s_1}$ and  $R^k_{s_2}$ represent the reward of $k$ when adopting $s_1$ and $s_2$  strategies respectively.

\begin{theorem}
For rational miner $k$, working on the attacker's private branch is always more profitable when $\alpha_k<\alpha_A$.
\end{theorem}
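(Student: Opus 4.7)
My plan is to directly compare the closed-form expressions $R_G^k$ from Theorem~2 and $R_P^k$ from Theorem~3. The crucial first observation is that $R_G^k$ is independent of the rushing ability $\ar$, whereas $R_P^k$ depends on $\ar$ linearly, with coefficient $\aA\ak(\ak + \aA - 1)$. Since any two miners' powers satisfy $\aA + \ak \leq 1$, this coefficient is non-positive, so $R_P^k$ is non-increasing in $\ar$ and attains its maximum at $\ar = 0$. It therefore suffices to establish the bound at $\ar = 0$, where the target inequality becomes
\begin{equation}
R_G^k \;>\; \frac{(1 + 2\aA - 2\aA^2)\,\ak}{1 + \aA}.
\end{equation}

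The second step is to substitute into $R_G^k$ the worst-case assumption $\ai = \ak$ used in Theorem~2, together with $\ah = 1 - \aA - \ak$, so that both sides become explicit rational functions of $\aA$ and $\ak$ only. Both denominators are manifestly positive sums of non-negative terms, so I would clear them and rearrange the difference into a single polynomial expression $P(\aA,\ak)$ whose sign determines the theorem.

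The main obstacle is the polynomial algebra of sign-certifying $P(\aA,\ak)$. I expect two structural facts to guide the factorization. First, at $\ak = 0$ both revenues vanish, so $\ak$ is a factor of $P$. Second, at $\ak = \aA$ a rational miner with the attacker's own hash rate is effectively symmetric with the attacker under both strategies, and numerical checks of Theorems~2 and~3 should confirm equality there, so $(\aA - \ak)$ is also a factor. After extracting $\ak(\aA - \ak)$ via polynomial division in $\ak$, the remaining cofactor $Q(\aA,\ak)$ should be exhibited as manifestly positive on the feasible region $0 < \aA \leq 1/2$, $0 \leq \ak$, $\ah \geq 0$, by collecting it as a sum of products of non-negative quantities (using $\ah = 1-\aA-\ak \geq 0$ to substitute back and keep each monomial sign-controlled). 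Combined with the reduction to $\ar = 0$, the sign pattern $P = \ak(\aA - \ak)\,Q$ with $Q > 0$ immediately yields $R_G^k > R_P^k$ whenever $\ak < \aA$, proving the theorem and incidentally showing that equality occurs precisely at $\ak = \aA$.
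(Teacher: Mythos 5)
Your proposal is correct and follows essentially the same route as the paper: both arguments compare the closed-form greedy and public revenues under the worst-case assumption that $k$ is the only attracted miner ($\ai=\ak$), identify $\ar=0$ as the binding case, and reduce the sign question to a polynomial whose relevant factor is $\aA-\ak$ (the paper packages this as the RER numerator $(-\aA\ak-\aA^2+\aA)\ar-\aA\ak+\aA^2>0$, i.e., $\ak<\frac{\aA-(\aA-1)\ar}{1+\ar}$, then specializes to $\ar=0$, whereas you specialize first via the monotonicity of $R_P^k$ in $\ar$). The algebra you anticipate in fact collapses cleanly: the denominator of $R_G^k$ reduces to $1+\aA$ and the difference at $\ar=0$ is exactly $\frac{\aA\ak(\aA-\ak)}{1+\aA}$, confirming your predicted factorization $\ak(\aA-\ak)Q$ with $Q=\frac{\aA}{1+\aA}>0$.
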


\begin{proof} Rational miners, including mining pools, have independent mining decisions and processes. It is hard to predict the ratio of attracted miners in the network. When judging his profit, miner \textit{k} should consider the worst case that there are no other attracted miners in the whole network, i.e., $\alpha_i=0$.

The RER of miner $k$ when following greedy mining instead of public mining strategy is:

\begin{small}
\begin{equation}
\begin{aligned}
    RER_k^{G,H}&=\frac{R_k^G-R_k^H}{R_k^H} =\frac{((-\aA\ai)-\aA^2+\aA)\ar-\aA\ai+\aA^2}{(\aA\ai+\aA^2-\aA)\ar-2\aA^2+2\aA+1}.
    \label{equ:PSMMinerRER}
\end{aligned}
\end{equation}
\end{small}

$RER^{G,H}_{k}>0$ means that being greedy is more profitable for the miners. A negative value means that the miner suffers from a loss when choosing the greedy strategy. 

With $\alpha_h=1-\alpha_A-\alpha_k$ ($\alpha_i=0$), when $RER_k^{G,H}>0$, we have:

\begin{small}
\begin{equation}
\begin{aligned}
\label{equ:RERMiner}
    ((-\aA\ak)&-\aA^2+\aA)\ar-\aA\ak+\aA^2>0\\
    \alpha_k&<\frac{\aA-(\aA-1)\ar}{1+\gamma}
    \end{aligned}
\end{equation}
\end{small}

From Equation (\ref{equ:RERMiner}), we can see that the miner's RER mainly depends on $\gamma$ and $\alpha_A$. 
In the worst case when $\gamma =0$, miner $k$ will get more rewards on the attacker's private branch if $\alpha_k<\alpha_A$. Figure \ref{fig:RERMinerQ} shows the quantitative simulation results.
\end{proof}
\begin{figure}[ht]
	\centering
	\subfigure[Rushing ability $\gamma$=0]{\label{minerr0}\includegraphics[width=0.45\columnwidth]{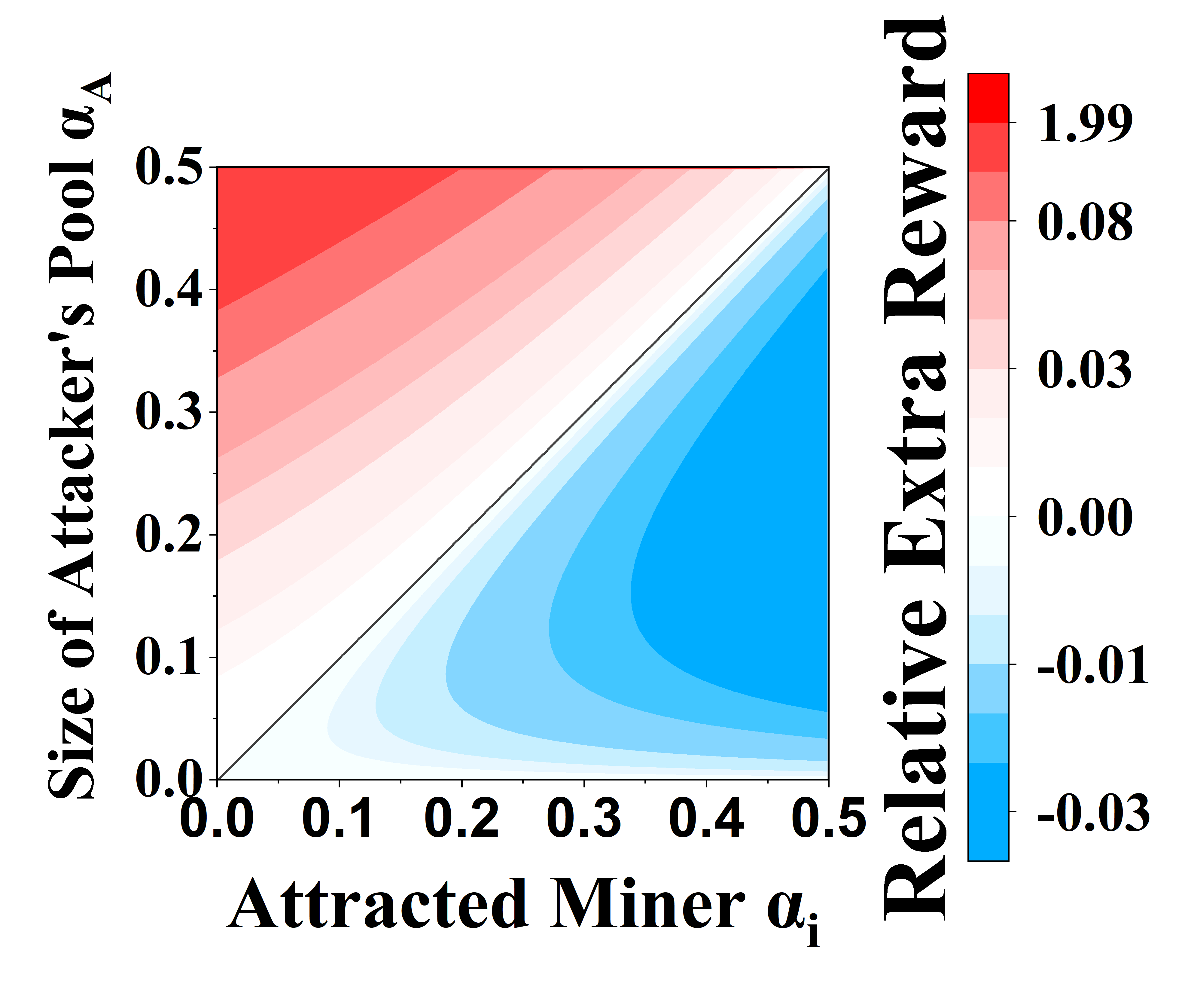}}
	\subfigure[Rushing ability $\gamma$=1]{\label{minerr1}\includegraphics[width=0.45\columnwidth]{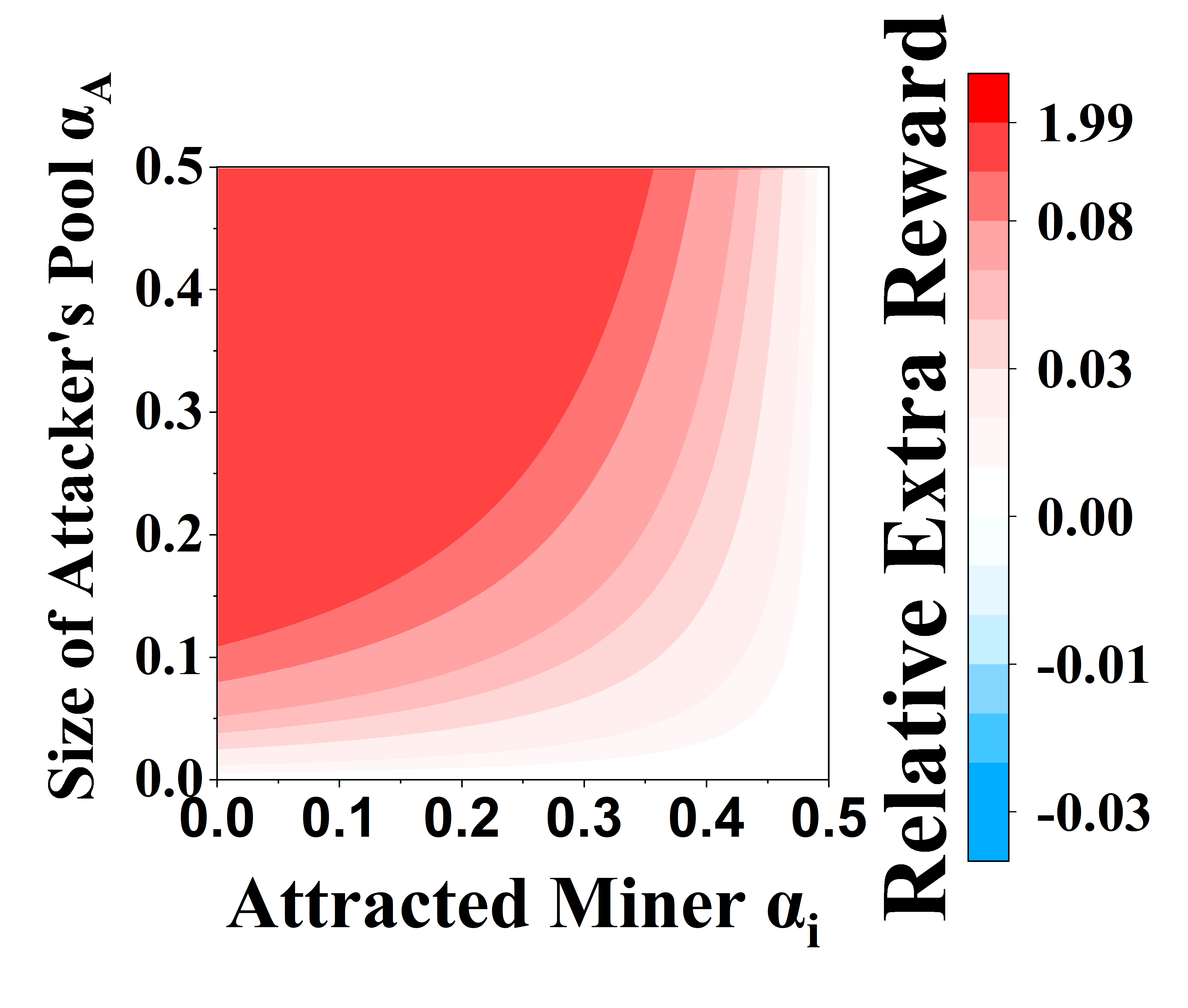}}
	\caption{\textbf{With a PSM attacker, the rational miner's relative extra reward when choosing greedy mining strategy instead of public mining strategy} ($RER^{G,H}_{k}$). The solid line represents no extra reward. For miner $k$, when $\gamma=0$, working on the private branch is more profitable only when $\alpha_k<\alpha_A$.}
	\label{fig:RERMinerQ}
\end{figure}

To verify the theoretical results, we simulate the RER of an attracted miner with a mining power of 0.2, using a Monte Carlo method over $10^9$ rounds, with an upper bound of $10^{-4}$ error. Table \ref{tab:minerRER} shows the Monte Carlo simulation results. The results are consistent with our expectation from Equation (\ref{equ:RERMiner}).

Section~\ref{sec:PracticalConcerns} provides a detailed mechanism design to convince rational miners that working on the attacker's private branch is a profitable endeavor. It includes a thorough examination of various aspects, such as the proof of block possession and a profit protection mechanism that ensures that rational miners receive the revenue promised.
\begin{table}[]
\caption{\textbf{Monte Carlo simulation result of rational miner $k$'s relative extra reward }($RER^{G,H}_{k}$).} 
\label{tab:minerRER}
\centering
\resizebox{\columnwidth}{!}{
\begin{tabular}{@{}cllll@{}}
\toprule
\multicolumn{1}{l}{\diagbox{$\gamma$}{$\alpha_A$}} & \multicolumn{1}{c}{0.1} & \multicolumn{1}{c}{0.2} & \multicolumn{1}{c}{0.3} & \multicolumn{1}{c}{0.4} \\ \midrule
0                    & -5.00(-5.00) & 0.02(0)        & 6.25(6.25)     & 14.30(14.29)   \\
0.25                    & 7.03(7.04) & 12.48(12.50)        & 19.33(19.3)     & 27.95(28.0)   \\
0.5                  & 22.56(22.56) & 28.47(28.47) & 35.96(36.00)   & 45.34(45.45)   \\
0.75                    & 43.46(43.4) & 50.04(50.0)        & 58.19(58.144)     & 68.36(68.42)   \\
1                    & 72.80(72.73) & 79.98(80)     & 88.79(88.89) & 100.02(100.0) \\ \bottomrule
\end{tabular}}
\end{table}

\section{Feasibility of PSM}\label{sec:PracticalConcerns}


In this section, we describe the mechanism to prove that rational miners working on the attacker's private branch are profitable in detail. To assure that the attacker will follow the rules announced, the attacker needs to provide proof of block possession and a profit protection mechanism that assures the rational miners can get the promised revenue. 

\subsection{Proof of Block Possession}
\label{sec:BlockPub}



Specifically, to provide proof of block possession, the attacker writes some random information $r$ to the coinbase transaction of the new block to provide sufficient randomness and uses $r$ and the $nonce$ of the block as the witness. Other parts of the block header besides $nonce$ and $r$, denoted as $b$, can be used as the public statement, including information such as the Merkle root of transactions in the block, and the hash $h$ of the block which satisfies the difficulty requirement. After this, the attacker needs to prove that:

\begin{enumerate}
    \item It knows $nonce$ and $r$ such that $H(b,nonce,r)=h$; and
    \item  The $nonce$ and $b$ are well-formed, i.e., $nonce$ is a 32-bit integer.
\end{enumerate}
To this end, the attacker needs to generate a proof of block possession $\pi_b\overset{\rm R}\leftarrow\mathsf{Prove}((b, h), (nonce,r))$ to prove the following relation
\begin{small}
\begin{equation}
    H(b,nonce,r)=h\ \land\ 0\leq nonce\leq2^{32}-1
\end{equation}
\end{small}
\noindent is satisfied. The attacker then makes the tuple $(\pi_b,b,h)$ publicly available on a dedicated website. In this way, other miners can calculate
\begin{small}
\begin{equation}
    \mathsf{Verify}((b, h),\pi_b)\overset{?} =1
\end{equation}
\end{small}
to verify whether the proof $\pi_b$ holds. If the above verification passes, the other miners can be sure that the attacker holds a specific $nonce$ and $r$, which enables the hash $h$ of the new block to satisfy the difficulty requirement.

If the attacker does not want to share the block information with every miner, it can also share the partial block data $b$ with a specific miner. For this purpose, it can take advantage of the zero-knowledge contingent payment (ZKCP) protocol \cite{campanelli2017zero}. We propose an example block-sharing strategy in Appendix~\ref{sec:BlockSharing}.
\nop{as follows:

Before the exchange starts, the attacker can deploy a smart contract as an arbiter. Then, the attacker and the miner achieve a fair exchange of the partial block data through a simple interaction:

\begin{enumerate}
    \item The attacker generates a random key $k$ and encrypts the $b$ value with $k$, i.e., $\hat{b}=\mathsf{Enc}(b,k)$. Next, it computes the hash of $k$ $h_k\leftarrow H(k)$ and generates a proof:
    \begin{small}
    \begin{equation}
        \pi_e\overset{\rm R}\leftarrow\mathsf{Prove}((\hat{b},h, h_k),(b,k,nonce,r))
    \end{equation}
    \end{small}
    to prove the following relation:
    \begin{small}
    \begin{equation}
    \begin{aligned}
        H(b,nonce,r)=h\ \land\ 0 & \leq {\rm nonce}\leq2^{32}-1\\
        \land\ \hat{b}=\mathsf{Enc}(b,k) & \land h_k=H(k)
    \end{aligned}
    \end{equation}
    \end{small}
    is satisfied. Finally, it sends the tuple $(\pi_e,\hat{b},h,h_k)$ to the miner.
    \item After receiving the tuple $(\pi_e,\hat{b},h,h_k)$ from the attacker, the miner can verify whether $\pi_e$ is valid by computing $b\leftarrow\mathsf{Verify}((\hat{b},h, h_k),\pi_e)$. If $b=1$, the miner deposits to the arbiter contract the payment agreed upon by both parties beforehand, as well as the hash $h_k$.
    \item  The attacker checks whether the $h_k$ provided by the miner is valid and whether the payment deposited is as previously agreed. If so, the attacker sends the key $k$ to the arbiter contract.
    \item The arbiter contract verifies that $h_k=H(k)$ holds. If it is the case, the contract transfers the payment to the attacker, otherwise, it returns the payment to the miner. Since the $k$ disclosed to the contract will also be available to the miner at the same time, the miner can decrypt $\hat{b}$ by $b\leftarrow\mathsf{Dec}(\hat{b},k)$.
\end{enumerate}

Any miner as a buyer cannot obtain any information about $b$ without completing the payment. Meanwhile, any attacker acting as a seller cannot cheat the payment by submitting the wrong $b$. In this way, the attacker is able to sell the partial block data $b$ to a specific miner and gain revenue.}

To evaluate the performance of proof generation and verification, a computer running Ubuntu 22.04 with a 3.50 GHz Intel i9-11900k CPU and 32 GB of RAM was used. We constructed the above zero-knowledge proofs based on libsnark and implemented the proof of block possession and ZKCP-based block information exchange using about 3100 lines of C++ code.

For the performance of Block System Possession, the proving time is 2.77 s, the verification time is 19 ms, and the proof size is 517 Byte. 
As can be seen from the results, it takes only 2-3 seconds to generate proof of block possession. The process of generating such proofs is all at once. Once generated, these proofs can be made public along with the statement, and anyone can verify the correctness of the declared relations by evaluating the proofs. The time required to verify such succinct proofs is in the millisecond range and is, therefore, very efficient.

For the performance of Block System Possession, the proving time is 2.77 s, the verification time is 19 ms, and the proof size is 517 Byte. As can be seen from the results, it takes only 2-3 seconds to generate a proof of block possession. and about 10 seconds to generate a proof for block information exchange. 

\nop{
\begin{table}[]
\centering
\caption{\textbf{Performance of zero-knowledge proofs.}}
\resizebox{0.8\columnwidth}{!}{
\begin{tabular}{@{}ccc@{}}
\toprule
                  & Block Possession & Block Exchange \\ \midrule
Proving time      & 2.77 s           & 10.71 s        \\
Verification time & 19 ms            & 40 ms          \\
Proof size        & 517 Byte         & 612 Byte       \\ \bottomrule
\end{tabular}}

\label{tab:zkp}
\end{table}
}

\subsection{PSM-DoS Attack and The Promise of Secret Publication}
\label{sec:BlockRel}
It should be noted that the above mechanism can only prove to other miners that the attacker has indeed mined a block that satisfies the requirement. But there is no guarantee that the attacker will share the secret in the future. By potentially withholding the reserved block, the attacker could waste the attracted miner's mining power and may cause a DoS attack on the public branch. 
Since this new attack has to launch with PSM, we call it PSM-DoS attack.

The workflow of the PSM-DoS attack is shown in Figure~\ref{fig:PBDoS}. First, the attacker distributes the partial block data to all the miners and attracts attracted miners to join the attacker's private branch. In the meantime, the attacker leaves the private branch and puts all the mining power back into the public branch. Then, once attracted miners on the private branch find the new block, the attacker refuses to release the first block. In this case, the public chain will not accept the new block because its previous block is not published. Thus, miners who try to follow the attacker will suffer losses if the attacker consistently fails to disclose the full block to other miners. 

\begin{figure}[htbp]
	\centerline{\includegraphics[width=0.6\columnwidth]{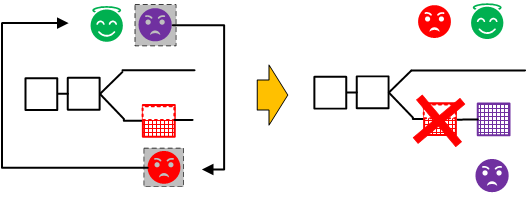}}
	\caption{\textbf{Workflow of PSM-DoS Attack.} After attracting the attracted miners to work on its private branch, the attacker discards the secret and goes back to work on the public branch.}
	\label{fig:PBDoS}
\end{figure}

The PSM attacker must implement countermeasures to mitigate the potential concerns of attracted miners regarding PSM-DoS attacks. Here, we present two examples: 



First, the attacker can address the concerns of attracted miners who discover the new block by carefully selecting the number of hidden bytes, ensuring that the calculation of these hidden bytes can be performed within an acceptable timeframe.
As we have demonstrated in Section \ref{sec:BlockPub}, when the attacker shares the partial block, miners receive the full block without the \textit{nonce} and several bytes of coinbase transaction information. Assuming that the attacker hides $n$ bytes of data, attracted miners need to calculate $2^{4n}$ times of hash to get the hidden data. 

Take Bitcoin as an example. According to~\cite{nakamoto2008bitcoin}, with difficulty $D$, we can approximate the hash power of all the miners in the network as $D\times \frac{2^{32}}{600}$ H/s. Assuming that attacker hides $b$ bytes of data so that $fr$ percent of miners need to calculate for a duration of $T_c$, then we have
\begin{small}
\begin{equation}
\begin{aligned}
T_c = \frac{2^{8b}}{fr\times D \times \frac{2^{32}}{600}},\     \ b = \log_{2^8}(T_c\times fr \times D\times \frac{2^{32}}{600}).
\end{aligned}
	\label{equ:bytes}
\end{equation}
\end{small}
 According to \cite{blockexplorer.com}, the Hashrate of the Bitcoin network is $1.55371 \times 10^{16}H/s$. The difficulty of bitcoin $D_{btc} \approx 31.25 \times 10^{12}$. By employing Equation (\ref{equ:bytes}), hiding 7 bytes of data enables a 1\% mining power miner to compute the hidden bytes in about 1 second. Similarly, hiding 8 bytes of data allows a 1\% mining power miner to determine the hidden bytes in about 1 minute. Though we take Bitcoin as the example, the proposed method works not only in Bitcoin but also in other PoW-based public blockchains (or Bitcoin-like) systems.


\nop{
\begin{figure}[htbp]
	\centering
	\subfigure[Number of bytes the attacker needs to hide when $T_C=10 min$.]{\label{fig:Diff}\includegraphics[width=0.49\columnwidth]{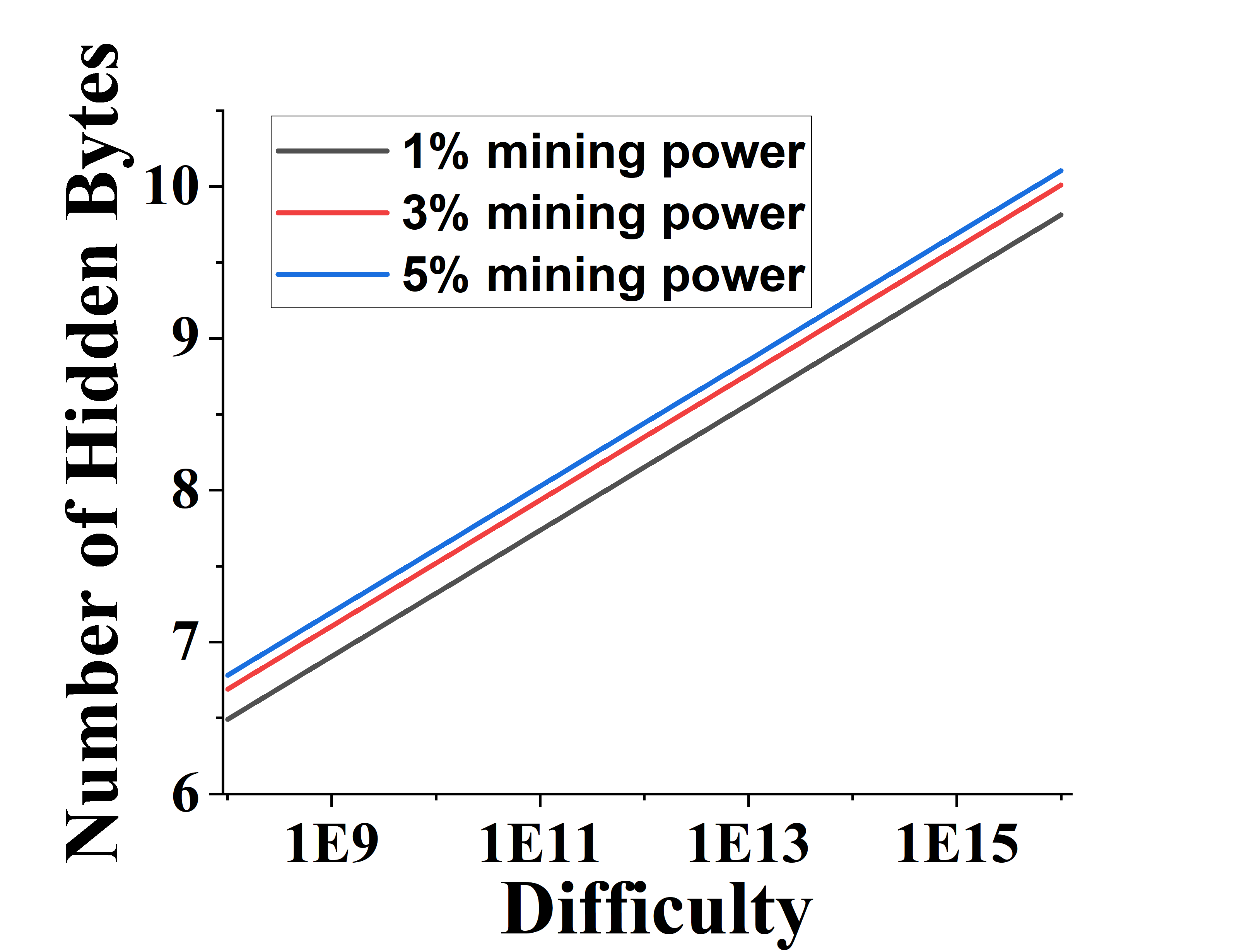}}
	\subfigure[The expected time to get the hidden data with $D = 31.25 \times 10^{12}$.]{\label{fig:Hidden}\includegraphics[width=0.49\columnwidth]{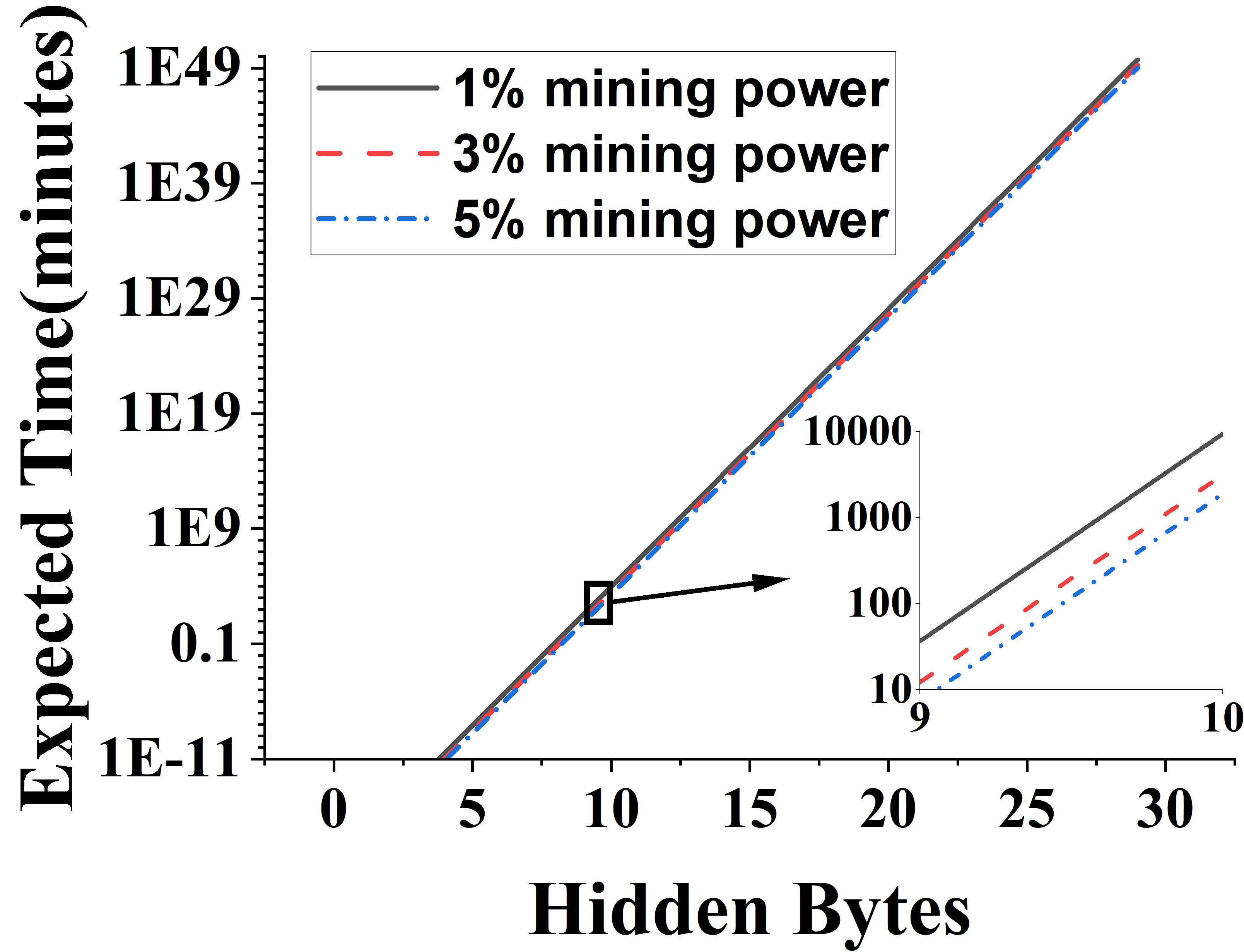}}
	\caption{\textbf{Evaluation results about the number of hidden bytes.}}
	\label{fig:HiddenByteEvaluation}
\end{figure}
}

Another method to incentivize miners and address concerns is to create a smart contract with collateral to cover rewards and costs incurred within smart contracts. The contract includes cryptographic commitment of block data. By doing so, the attacker effectively guarantees that it will redeem the collateral at some agreed point in the future by submitting a secret to the contract that satisfies the requirement and opens the commitment. Otherwise, this collateral will be transferred to the attracted miner who finds the new block. On the other hand, when the collateral value is less than the attacker's cost to mislead other miners, the miner has every incentive to provide an invalid block to deceive other miners. To solve the problem, the collateral value is required to be much larger than the cost of mining for the specified period. Specifically, the flow of this smart contract is as follows.

\begin{enumerate}
    \item The attacker pre-collateralizes the contract with some coins in the contract with a value equivalent to the proceeds of mining $n$ blocks.
    \item An attracted miner discovers a subsequent new block of the attacker's partial block and submits the information about the new block to the smart contract.
    \item The smart contract verifies the new block. If the verification passes, the contract opens a challenge period within which the attacker should disclose the partial block. If the attacker discloses the full information of the partial block within the challenge period, it can redeem all the collateral it has previously deposited. Suppose the attacker fails to disclose the block information within the challenge period or discloses an incorrect block. In that case, the contract transfers the attacker's collateral to the attracted miner who finds the new block.

\end{enumerate}


Assuming the challenge period lasts for a duration $T_C$, the miner will get the revenue, and the system will surely go back to a single branch state after $T_C$. If the attacker refuses to publish the full block, the attacker will lose both the partial block revenue and the collateral. The attacker can get more profits by launching PSM-DoS only if it could get more than $n+1$ blocks within $T_C$. 

 According to~\cite{karame2012double}, the possibility of finding a new block within duration $T$ can be expressed as:
\begin{small}
    \begin{equation}
     R(T)=\alpha_e \times (1-(1-p_e)^{\frac{T}{T_{avg}}}),
     \label{equ:RT}
    \end{equation}
\end{small}
    where the $T_{avg}$ is the average block generation time in the blockchain. In bitcoin $T_{avg}=10$ minutes.  $p_e = 64\% $, which means all miners in the network have a possibility of $64\%$ generating a new block within 10 minutes.

If the attacker can provide large enough collateral with a low enough $T_C$, then launching a PSM-DoS attack is economically not worthwhile.


The attacker has the option to employ either both methods or one of these approaches to address the concerns raised by the attracted miners.
Though our proposed method could address most of the concerns of rational miners, we agree that it is not realistic to assume all the miners are rational in practice. In the following, we discuss the attacker's gain with different ratios of the rational miners.

    
\section{PSM Analysis and Comparisons}\label{sec:simulation}
When calculating the revenue, an attacker can estimate $\alpha_A$, $\gamma$, and mining power distribution in a blockchain. It can also roughly estimate the number of rational miners by the number of applications (e.g., via smart contract) for partial block data. However, it cannot know the fraction of rational miners in the entire network. Thus, the attacker needs to estimate its reward with a different fraction of rational miners. Our assumptions here are consistent with those in recent studies~\cite{ozisik2017estimation,saad2021syncattack}.

In this section, we use numeric analysis to evaluate the reward of the PSM strategy. 


\subsection{Comparison with Honest Mining}
\label{sec:HonestMin}

\subsubsection{Quantitative Analysis}
We mathematically analyze the revenue of PSM against honest mining. As stated before, the attacker's computational power is $\alpha_A$. If the attacker chooses to follow the honest mining strategy, the possibility of getting the second block revenue is $\alpha_A$. Overall, the attacker's expected profit when following honest mining is:
\begin{small}
\begin{equation}
    R^A_H =\alpha_A.
    \label{equ:Honest}
\end{equation}
\end{small}
If the attacker chooses the PSM attack, it could attract rational miners with $\alpha_i$ mining power, and the total mining power of public miners is $\alpha_h=1-\alpha_A-\alpha_i$. For the attacker, the expected RER of following PSM instead of honest mining can be expressed as follows:

\begin{small}
\begin{equation}
\begin{aligned}
    &RER_{A}^{P,H} =\frac{R^A_P-R^A_H}{R^A_H}\\ =&\frac{(\aA\ah+\aA+1)\ai}{(2\aA^2\ah+2\aA^2+\aA)\ai+2\aA^2\ah^2+(2\aA^3+\aA)\ah+2\aA^3}-1.
    \label{equ:RevHonest}
\end{aligned}
\end{equation}
\end{small}

\begin{figure*}[ht]
	\centering
	\subfigure[Rushing ability $\gamma$=0]{\label{SHr0}\includegraphics[width=0.65\columnwidth]{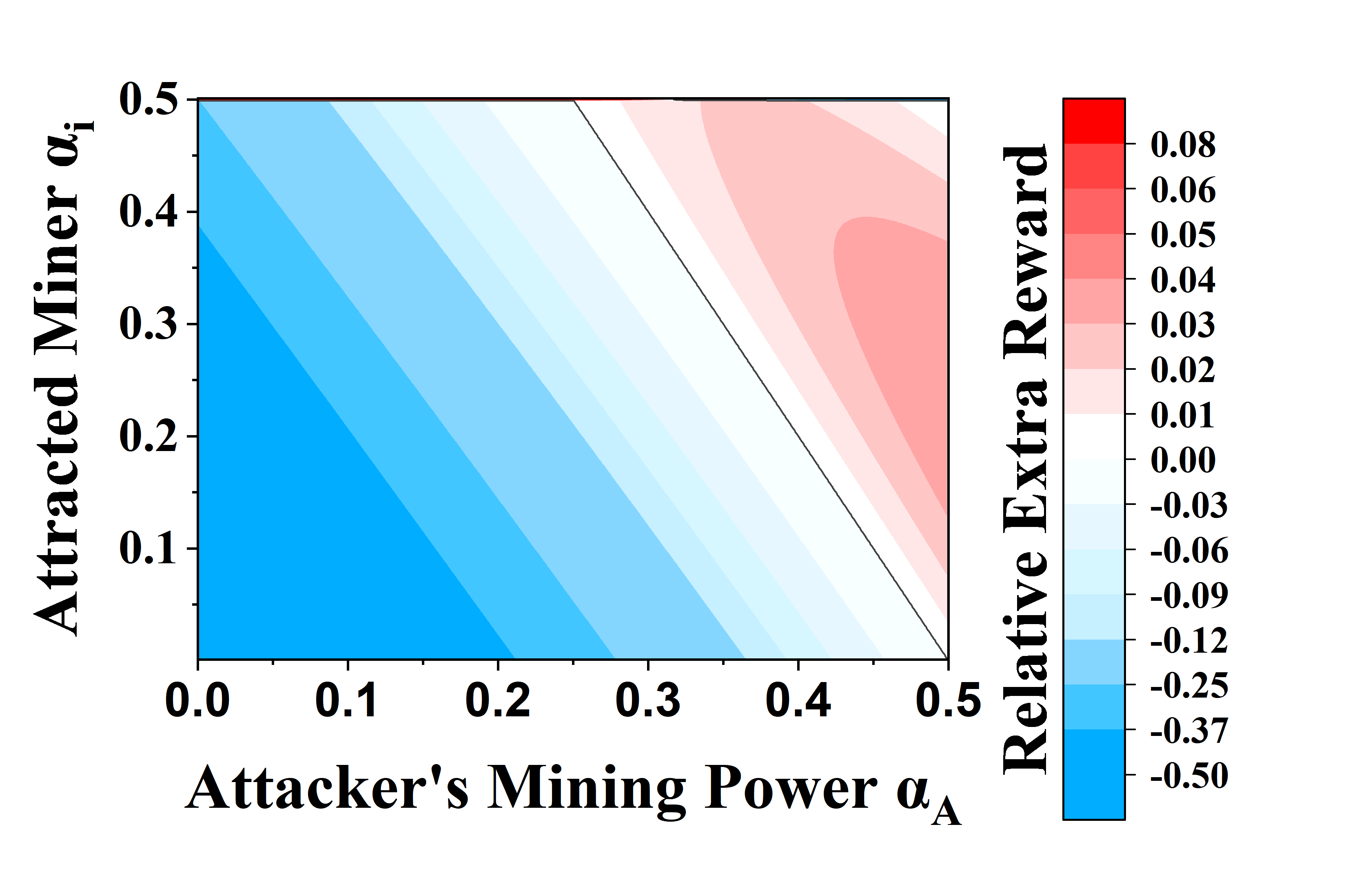}}
	\quad
	\subfigure[Rushing ability $\gamma$=0.5]{\label{SHr05}\includegraphics[width=0.65\columnwidth]{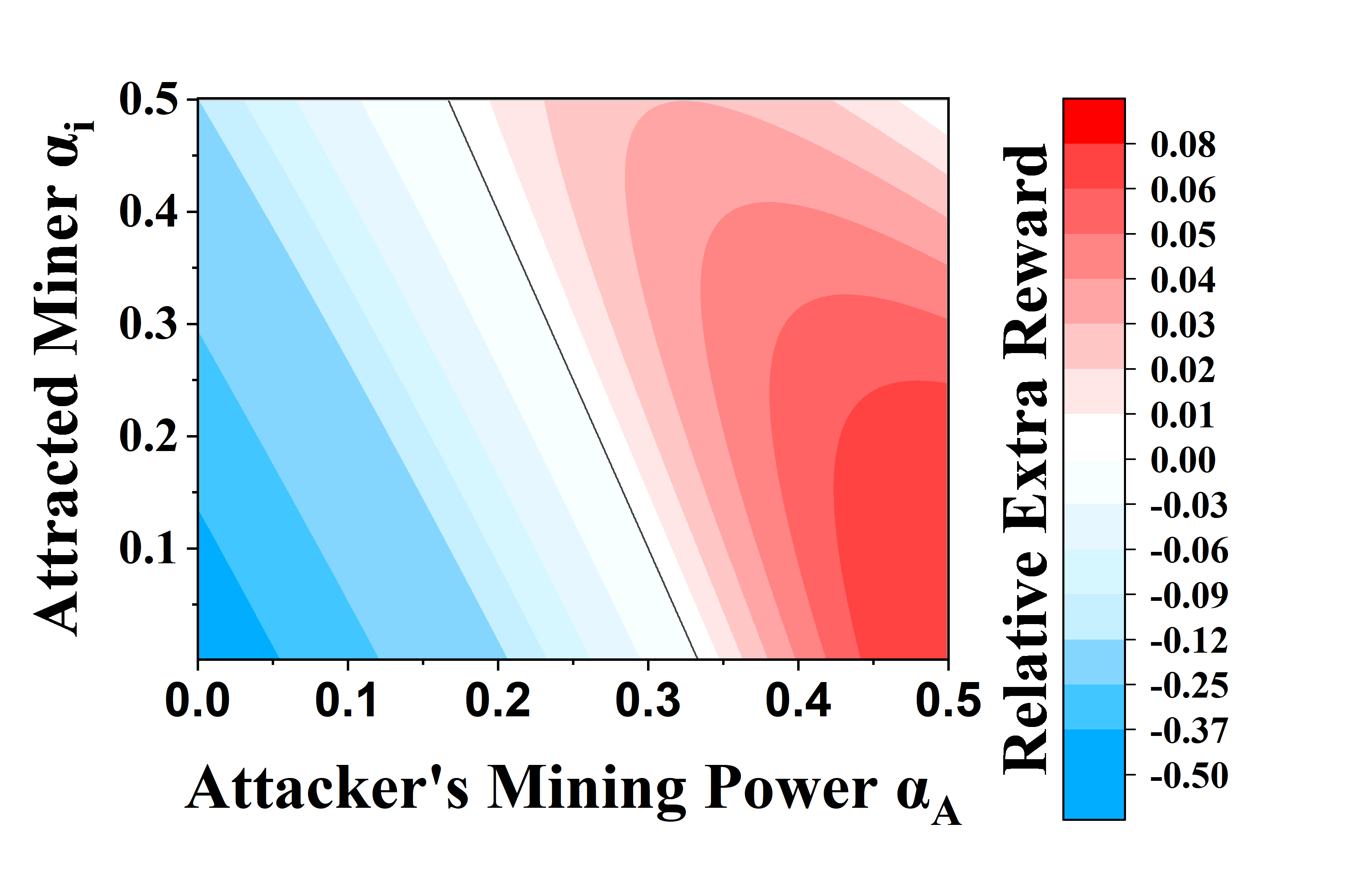}}
	\quad
	\subfigure[Rushing ability $\gamma$=1]{\label{SHr1}\includegraphics[width=0.65\columnwidth]{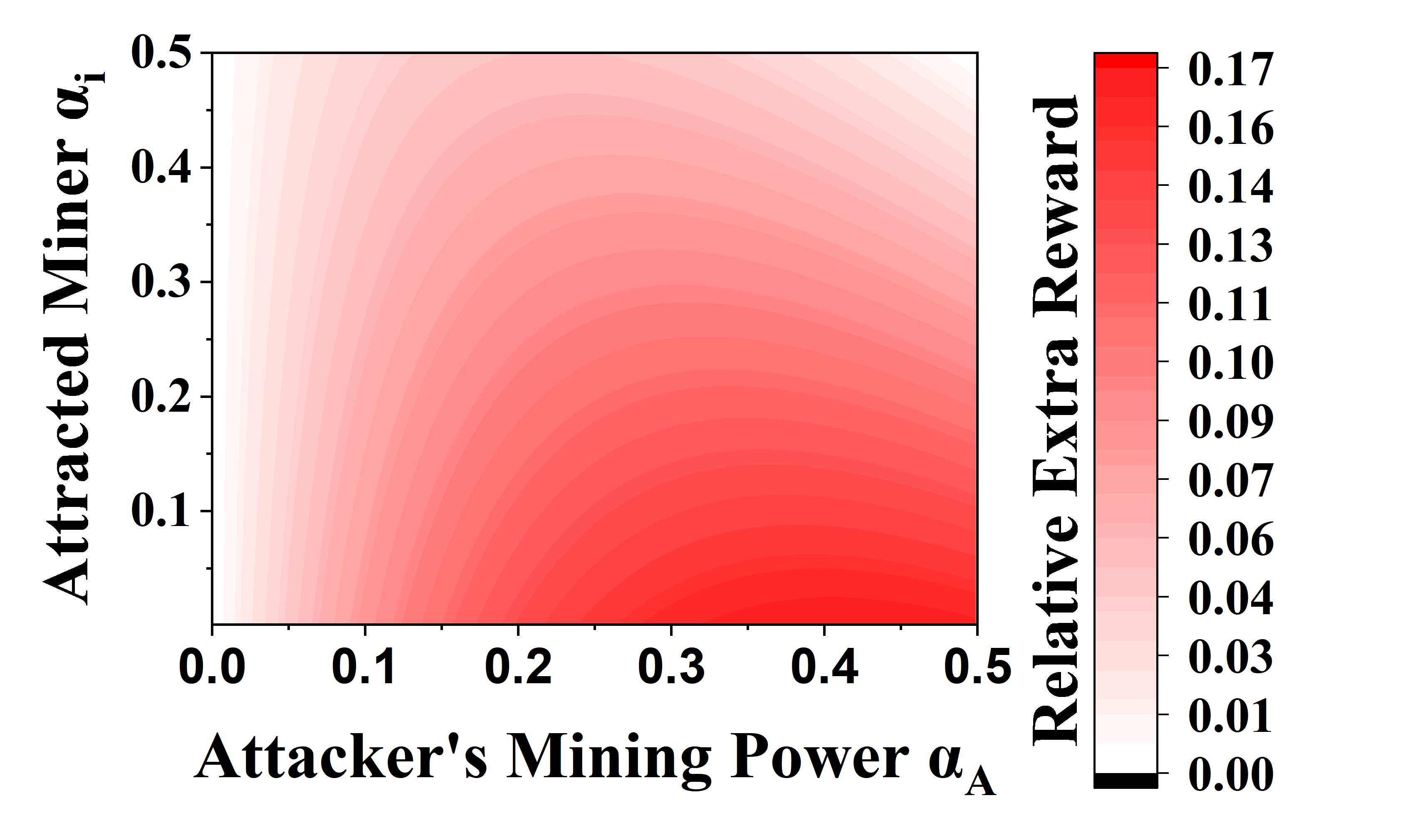}}
	\caption{\textbf{Attacker's relative extra reward when choosing PSM instead of honest mining ($RER_{A}^{P,H}$).} The solid line represents no extra reward.}
	\label{fig:RERSelfHone}
\end{figure*}

In Figure \ref{fig:RERSelfHone}, we show the numerous simulation results of RER of the PSM strategy over the honest mining strategy. The higher the $\gamma$ is, the more likely the attackers can get more rewards than honest mining. When $\gamma=1$, PSM can surely get more rewards than honest mining. If it can attract enough rational miners, the PSM attacker with enough mining power can still get more rewards than honest mining when $\gamma=0$. 

\subsubsection{Simulation Results}
To further verify the accuracy of our quantitative results, we implement a Monte Carlo simulator in Java to verify our theoretical analysis. We simulate an attacker with $\alpha_A=0.2$ and run the simulator over $10^9$ rounds. The upper bound for error is $10^{-4}$. The results are shown in Table \ref{tab:HonestRER}. The attacker's RER is the same as expected.

\begin{table}[]
\caption{\textbf{Monte Carlo simulation results of attacker's RER when choosing PSM instead of honest mining.}}
\label{tab:HonestRER}
\centering
\resizebox{\columnwidth}{!}{
\begin{tabular}{@{}clllll@{}}
\toprule
\multicolumn{1}{l}{\diagbox{$\gamma$}{$\alpha_i$}}  & \multicolumn{1}{c}{0.1} & \multicolumn{1}{c}{0.2} & \multicolumn{1}{c}{0.3} & \multicolumn{1}{c}{0.4} &   \\ \midrule
0                             & -29.167(-29.17)         & -20.0(-20.0)            & -12.5(-12.5)            & -6.67(-6.67)                       \\
0.25                          & -18.96(-18.96)          & -12.5(-12.5)            & -7.29(-7.29)            & -3.33(-3.33)                      \\
0.5                           & -8.75(-8.75)            & -5.0(-5.0)              & -2.08(-2.08)            & 0.0(0.0)                             \\
0.75                              & 1.46(1.46)              & 2.5(2.5)                & 3.13(3.12)              & 3.33(3.33)                          \\
1                          & 11.67(11.67)            & 10.0(10.0)              & 8.33(8.33)              & 6.67(6.67)                          \\ \bottomrule
\end{tabular}}
\end{table}

\subsection{Comparison with Selfish Mining\label{sec:SelfMin}}


\subsubsection{Quantitative Analysis}

According to~\cite{9eyal2014majority}, if choosing selfish mining, the attacker's reward is:

\begin{small}
\begin{equation}
R^A_S=\frac{\aA(1-\aA)^2(4\aA+\ar(1-2\aA))-\aA^3}{1-\aA(1+(2-\aA)\aA)},
\label{equ:RevSelfish}
\end{equation}
\end{small}

and the expected RER of PSM over selfish mining is

\begin{small}
\begin{equation}
\begin{aligned}
    &RER_{A}^{P,S}=\frac{\aA^3-2\aA^2-\aA+1}{\aA+1}\times \\
    &(\frac{(\ar-1)\ai^2+(2(\aA-1)-(3\aA-2)\ar)\ai}{(2\aA^3-5\aA^2+4\aA-1)\ar-4\aA^3+9\aA^2-4\aA}+\\
    &\frac{(\aA-1)^2\ar+4\aA-2\aA^2}{(2\aA^3-5\aA^2+4\aA-1)\ar-4\aA^3+9\aA^2-4\aA})-1
    \label{eq:RAS}
    \end{aligned}
\end{equation}
\end{small}

\begin{figure*}[ht]
	\centering
	\subfigure[Rushing ability $\gamma$=0]{\label{SOr0}\includegraphics[width=0.65\columnwidth]{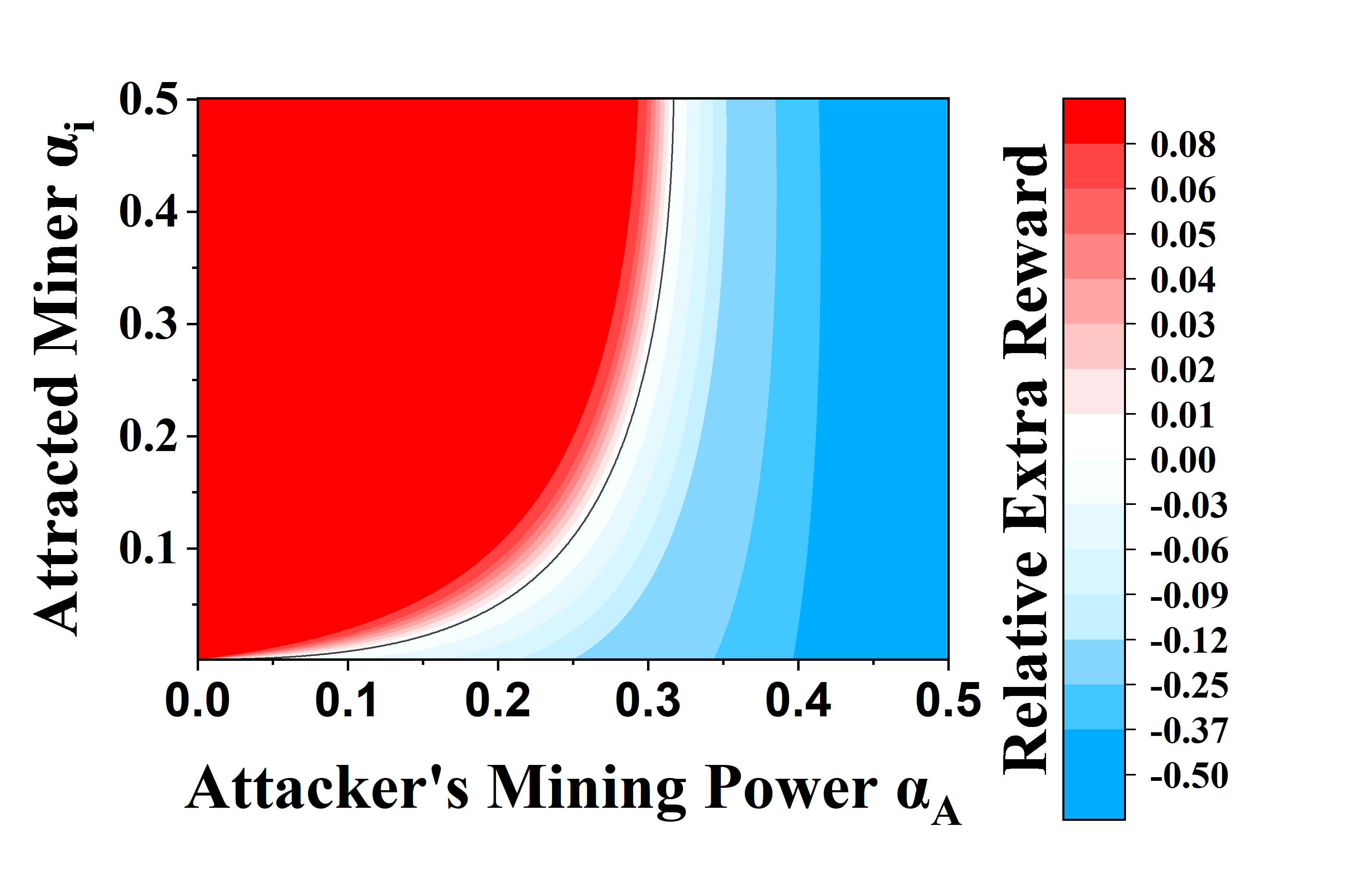}}
	\quad
	\subfigure[Rushing ability $\gamma$=0.5]{\label{SOr05}\includegraphics[width=0.65\columnwidth]{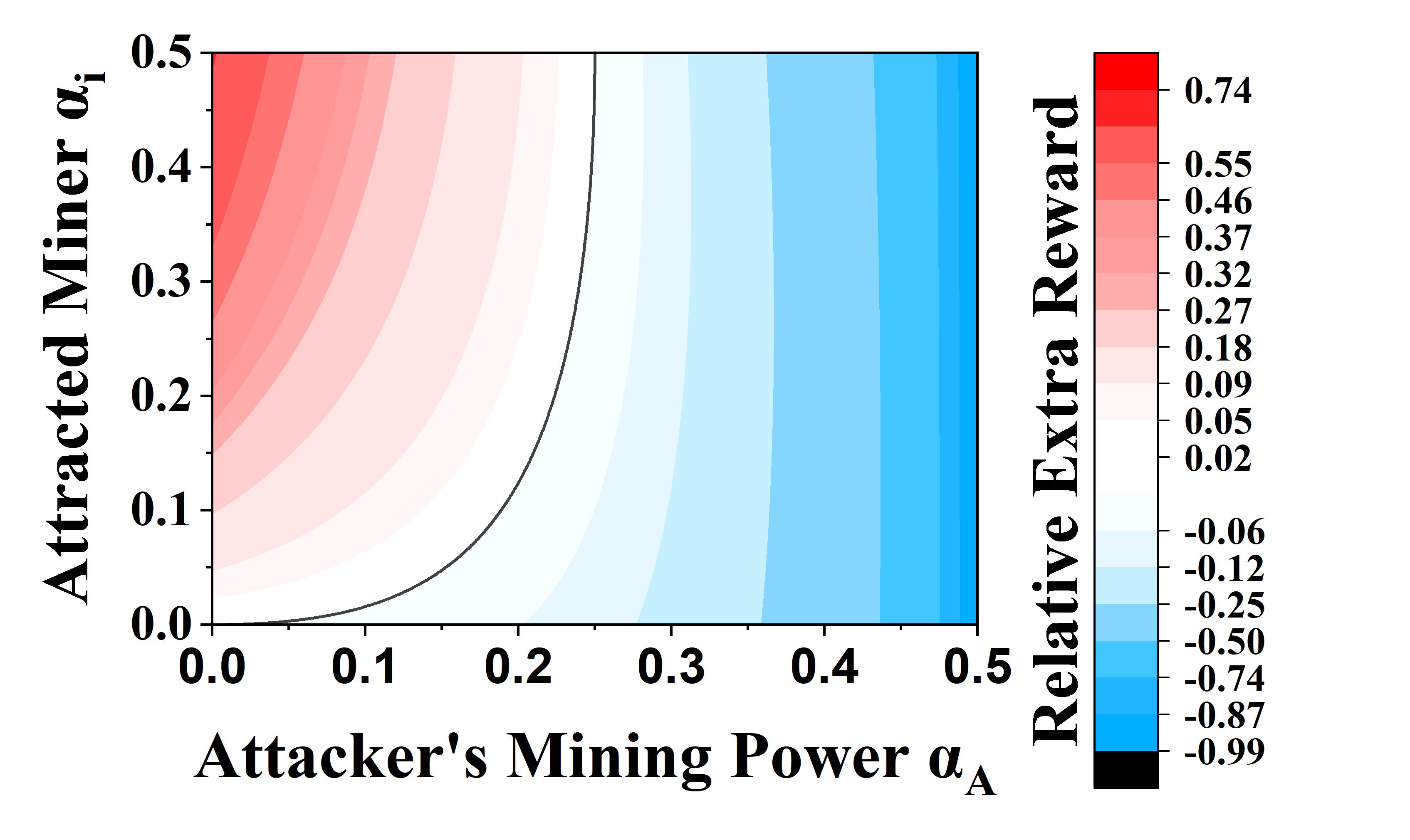}}
	\quad
	\subfigure[Rushing ability $\gamma$=1]{\label{SOr1}\includegraphics[width=0.65\columnwidth]{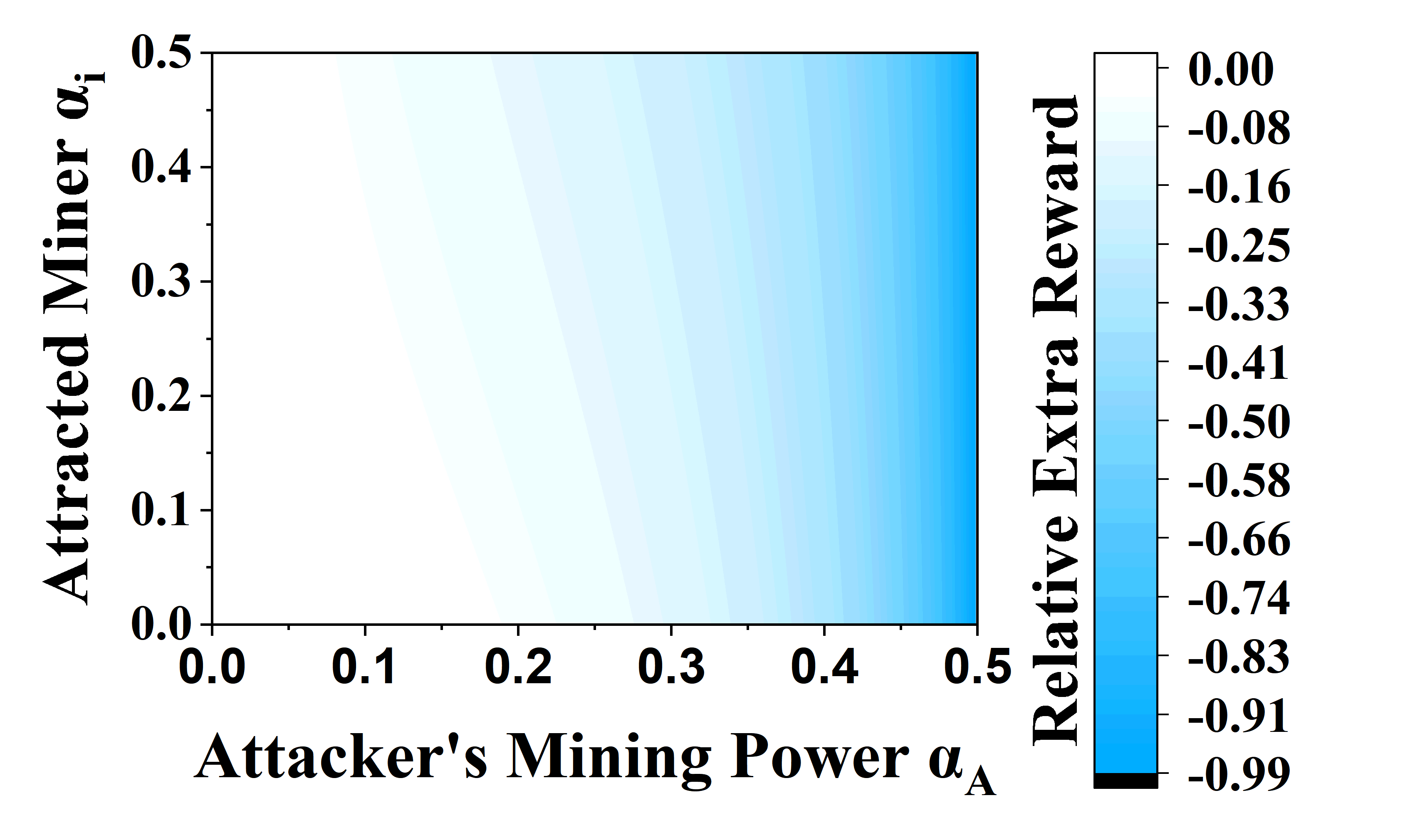}}
	\caption{\textbf{Attacker's relative extra reward when choosing PSM instead of selfish mining ($RER_{A}^{P,S}$).} The solid line represents no extra reward.}
	\label{fig:RERSelfOrig}
\end{figure*}

In Figure~\ref{fig:RERSelfOrig}, we show the numerous simulation results of the attacker's RER following PSM instead of the selfish mining strategy. The PSM attacker can get a higher reward than the selfish miner when its mining power is relatively small. When the attacker's mining power is large enough, the possibility of finding more than one block on its private branch becomes non-negligible. Thus, the revenue of selfish mining is higher than PSM. We propose an advanced PSM strategy to address this issue in Section~\ref{sec:APSM}.


\subsubsection{Simulation Results}
To further verify the accuracy of our quantitative results, assuming the attacker with a computation power of 0.2, we compare the Monte Carlo simulation results of the RER of PSM over selfish mining with our evaluation results. We run the Java-based simulator over $10^9$ rounds. The upper bound for error is $10^{-4}$. The Monte Carlo simulation results are shown in Table~\ref{tab:SelfRER}. The attacker's RER is the same as expected.

\begin{table}[]
\caption{\textbf{Monte Carlo simulation results of attacker's RER when choosing PSM instead of selfish mining.}}
\label{tab:SelfRER}
\centering
\resizebox{\columnwidth}{!}{
\begin{tabular}{@{}cllll@{}}
\toprule
\multicolumn{1}{l}{\diagbox{$\gamma$}{$\alpha_i$}}   & \multicolumn{1}{c}{0.1} & \multicolumn{1}{c}{0.2} & \multicolumn{1}{c}{0.3} & \multicolumn{1}{c}{0.4} \\ \midrule
0   & 8.07(8.05)   & 22.04(22.03) & 33.51(33.47) & 42.35(42.37)     \\
0.25  & 2.74(2.73)   & 10.93(10.92) & 17.52(17.52) & 22.54(22.54)    \\
0.5  & -1.06(-1.05) & 3.01(3.01)   & 6.2(6.17)    & 8.42(8.43)       \\
0.75  & -3.88(-3.88) & -2.9(-2.89)  & -2.3(-2.3)   & -2.11(-2.11)     \\
1   & -6.08(-6.07) & -7.48(-7.48) & -8.88(-8.88) & -10.29(-10.28) \\\bottomrule
\end{tabular}}
\end{table}

\section{Advanced PSM Strategy}
\label{sec:APSM}
Under the condition that the attacker can have the up-to-date block height from the public chain, we propose an optimized PSM strategy named Advanced PSM (A-PSM), which can further increase the profits of PSM for the attacker.

\subsection{Attack Overview}


When miners find new blocks, the A-PSM attacker follows the selfish-mining-like strategy to publish the partial-released block instead of simply releasing the partial blocks' secrets. The attacker will keep the secret private until the lead of the private branch is no more than 2 blocks. Attracted miners can immediately release the block it finds, but it will not be recognized as a valid block until the secret of the prior partial block is released.

Specifically, if a public miner finds a new block, two possible cases may happen: if the lead of the private branch is 2, the attacker will release all the partial blocks, and the system goes back to the single branch state. If the private branch's lead is more than 2, then the attacker and rational miners will continue working on its private branch until the lead is 2. 

To avoid cases where the private branch's length grows faster than the public branch, we extend our assumption that in the A-PSM scenario, the mining power of the attacker, together with rational miners, is no more than 50\%. 

We also extend the assumption that the attacker can promise that it will release the secret based on the length of both the private and the public branches. This assumption is reasonable because when launching mining attacks, attackers are motivated to maximize their revenue, and the secret computation mechanism in Section~\ref{sec:PracticalConcerns} further assures the malicious behavior is economically not worthwhile for the attacker. To further address the rational miner's concern, the attacker can also have the latest block height of the public branch via decentralized oracle~\cite{zhang2020deco} or reported by an attracted miner.

\begin{figure}[htbp]
	\centerline{\includegraphics[width=\columnwidth]{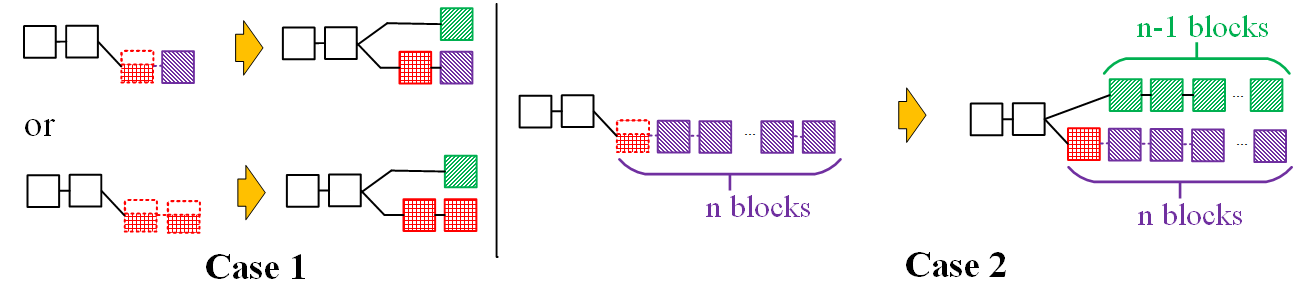}}
	\caption{\textbf{Partial block sharing with A-PSM strategy.}} 
	\label{fig:RealSelf}
\end{figure}

\subsection{A-PSM Reward}

Overall, if choosing the A-PSM strategy, the attacker and attracted miners' expected profits can be derived as follows:

\begin{theorem}
The profit of an A-PSM attacker is:
    \begin{small}
\begin{equation}
\begin{aligned}
&R^A_{AP}=\frac{2\aA\ai^3+(8\aA^2-5\aA)\ai^2+(10\aA^3-14\aA^2+2\aA)\ai}{\aA\ai^2+(2\aA^2-2\aA-2)\ai+\aA^3-2\aA^2-\aA+1}+\\
&\frac{(-\aA(\ai+\aA-1)^2(2\ai+2\aA-1))\ar+4\aA^4-9\aA^3+4\aA^2}{\aA\ai^2+(2\aA^2-2\aA-2)\ai+\aA^3-2\aA^2-\aA+1}.
\label{equ:APSMRev}
\end{aligned}
\end{equation}
\end{small}
\label{THM:APSMAtk}
\end{theorem}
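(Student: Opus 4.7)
The plan is to model A-PSM as a Markov chain on the lead of the attacker's private branch, in direct analogy with the selfish-mining analysis underlying Equation~(\ref{equ:RevSelfish}), but with two modifications. First, because attracted miners hash on the private branch, the effective ``private'' event probability at every state is $\aA+\ai$ rather than $\aA$, and the ``public'' event probability is $\ah = 1 - \aA - \ai$. Second, because the release rule only publishes the partial-block secrets once the lead drops to $2$, the state space $\{0,0',1,2,3,\ldots\}$ and its transition skeleton coincide with those used in the selfish-mining case; only the transition probabilities change. In particular, from state $1$ a public find triggers the race state $0'$, from which with probability $\ar\ah$ a public miner adopts the private branch, with probability $(1-\ar)\ah$ the public branch wins, and with probability $\aA+\ai$ the private branch extends and wins; from state $k\ge 2$ a public find collapses the chain back to the single-branch state with the full private branch adopted as main chain.

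Given that skeleton, I would first solve the stationary distribution $\{\pi_k\}_{k\ge 0}$ in closed form, which reduces to a geometric tail in the ratio $(\aA+\ai)/\ah$ together with two boundary equations for $\pi_0$ and $\pi_{0'}$. The next step is the reward accounting, which I would split into two contributions for the attacker: (i) a direct mining credit, where every block produced on the private branch is attributed to the attacker with probability $\aA/(\aA+\ai)$ and to an attracted miner otherwise; and (ii) a strategic credit from the race outcomes at $0'$ and from the ``reveal'' transitions at $k\ge 2$, where the $k$ private blocks that become part of the main chain are again attributed block-by-block according to who actually mined them. Summing $\pi_{\text{state}}\cdot(\text{transition prob})\cdot(\text{expected attacker reward})$ over the chain yields the numerator of $R^A_{AP}$, while dividing by the expected number of main-chain blocks per transition (which accounts for orphaned public blocks and produces the denominator) gives the stated rational function.

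As a consistency check I would specialise to $\ai=0$: the denominator $\aA\ai^2+(2\aA^2-2\aA-2)\ai+\aA^3-2\aA^2-\aA+1$ collapses to $\aA^3-2\aA^2-\aA+1 = 1-\aA(1+(2-\aA)\aA)$, and the $\ar$-coefficient in the numerator collapses to $\aA(1-\aA)^2(1-2\aA)$, recovering $R^A_S$ from Equation~(\ref{equ:RevSelfish}) exactly. The extension of the assumption $\aA+\ai \le 1/2$ mentioned in the setup is crucial here because it is exactly what guarantees convergence of the geometric tail and hence the existence of the stationary distribution.

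The main obstacle will be the reward bookkeeping in states $0'$ and $k\ge 2$: when the attacker releases a lead of $k$ blocks, some of them belong to attracted miners, and the race outcome at $0'$ depends simultaneously on $\ar$ and on the attacker/attracted split. Pushing the resulting sums through and factoring the numerator as a cubic in $\ai$ into exactly the form appearing in Theorem~\ref{THM:APSMAtk} is where the bulk of the effort lies, rather than any new conceptual ingredient beyond the selfish-mining Markov-chain template.
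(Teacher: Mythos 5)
Your overall template (lead-indexed Markov chain, geometric tail, block-by-block attribution of private-branch blocks in the ratio $\aA/(\aA+\ai)$, normalization by the total main-chain block rate) is exactly the paper's route, and your $\ai=0$ sanity check against Equation~(\ref{equ:RevSelfish}) is correct and worthwhile. However, two of the transition rules you state are wrong, and following them literally would not produce the formula in Theorem~\ref{THM:APSMAtk}. First, the ``private'' probability is \emph{not} $\aA+\ai$ at every state: at state $0$ there is no partial block yet, so only the attacker can open a private branch, and the boundary equation must be $\aA P_0 = P_1$ (attracted miners who find a block at state $0$ simply publish it). Your blanket substitution $\aA\mapsto\aA+\ai$ at state $0$ changes the stationary distribution and hence the final rational function; the $\ai=0$ check cannot detect this because the two choices coincide there.

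Second, you assert that from any state $k\ge 2$ a public find collapses the chain to the single-branch state with all $k$ private blocks adopted. Under the A-PSM rule (and the selfish-mining skeleton you invoke), this happens only at $k=2$; for $k\ge 3$ a public find moves the chain to $k-1$ and credits exactly one newly revealed private block, so the tail reward is $\bigl(1+\tfrac{\aA}{\aA+\ai}\bigr)\ah P_2+\tfrac{\aA}{\aA+\ai}\ah\sum_{q\ge 3}P_q$ rather than a sum of $k$-block payouts at each reveal. Your version is internally inconsistent as well: a full collapse from every $k\ge 2$ would give a tail $\pi_k\propto(\aA+\ai)^{k-2}\pi_2$, not the geometric tail in $(\aA+\ai)/\ah$ that you (correctly) claim and that the cut-balance equation $(1-\ah)P_{n-1}=\ah P_n$ delivers. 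Fix the state-$0$ boundary condition and the $k\ge 3$ descent rule, and the rest of your plan goes through as in the paper.
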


\begin{theorem}
The attracted miner's expected profit is

\begin{small}
\begin{equation}
R^i_{AP}=\frac{2\aA\ai^3+(4\aA^2-4\aA-2)\ai^2+(2\aA^3-4\aA^2+1)\ai}{\aA\ai^2+(2\aA^2-2\aA-2)\ai+\aA^3-2\aA^2-\aA+1}.
\end{equation}
\end{small}

\label{THM:APSMGmn}
\end{theorem}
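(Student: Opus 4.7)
The plan is to model A-PSM as a discrete-time Markov chain on the lead $L$ of the attacker's private branch, apply a renewal argument over returns to the no-private-branch state $0$, and then take the ratio of expected attracted-miner blocks to expected total main-chain blocks per cycle. Throughout I write $p=\aA+\ai$ for the combined private-side mining power and $\ah=1-p$.

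First I would set up the state diagram. States are $0,1,2,\ldots$ together with a race state. From $0$, an excursion starts with probability $\aA$; with probability $\ai$ (resp.\ $\ah$) an attracted (resp.\ public) miner produces a public block and we stay at $0$. From $L=1$, a private find (probability $p$) moves to $L=2$ and a public find (probability $\ah$) launches the race. At $L=2$ a private find advances to $L=3$ and enters the A-PSM waiting mode, while a public find triggers the release of all accumulated private blocks. For $L\ge 3$ the attacker withholds every secret, so transitions reduce to the biased walk $L\to L+1$ with probability $p$ and $L\to L-1$ with probability $\ah$; the release fires on the first public find that pushes $L$ from $2$ down to $1$, at which moment every accumulated private block enters the main chain at once.

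The main technical step is handling the waiting mode. Let $\varphi(L)$ be the expected number of private finds until release starting at lead $L\ge 2$. First-step analysis gives $\varphi(2)=p(1+\varphi(3))$ and $\varphi(L)=p(1+\varphi(L+1))+\ah\,\varphi(L-1)$ for $L\ge 3$. Combining the particular solution $pL/(1-2p)$ with the bounded branch of the homogeneous recurrence yields $\varphi(L)=p(L-1)/(1-2p)$; in particular the expected number of private blocks released per waiting-mode excursion is $3+\varphi(3)=(3-4p)/(1-2p)$. I would then decompose each type-C cycle into three disjoint sub-paths with conditional probabilities $\ah$, $p\ah$, and $p^{2}$ (an immediate race, a two-block release at $L=2$, and a waiting-mode release), attributing the first private block to the attacker and each subsequent private find to an attracted miner with probability $\ai/p$.

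Summing the three weighted path contributions together with the single-step $0\to 0$ attracted transition gives expected attracted blocks per cycle equal to $\ai[\,2\aA\ah^{2}+(1-2p)\,]/(1-2p)$. The analogous aggregation for all main-chain blocks, in which the cubic-in-$p$ terms across the three paths cancel, yields $[\,\aA\ah^{2}+(1-2p)\,]/(1-2p)$. Taking the ratio, the $(1-2p)$ factors cancel, leaving $R^{i}_{AP}=\ai[2\aA\ah^{2}+(1-2p)]/[\aA\ah^{2}+(1-2p)]$; substituting $\ah=1-\aA-\ai$ and expanding then reproduces the polynomial form in the statement. The main obstacles I anticipate are (i) pinning down exactly when the release rule fires so that the waiting-mode random walk has the correct absorbing behaviour, and (ii) selecting the bounded branch of the homogeneous solution so $\varphi(L)$ stays finite for $p<1/2$, together with the bookkeeping that the public blocks accumulated during the waiting mode are orphaned and hence excluded from the main-chain count.
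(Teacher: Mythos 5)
Your proposal is correct and lands on exactly the paper's formula: with $p=\aA+\ai$ and $\ah=1-p$, your compact form $R^i_{AP}=\ai\bigl[2\aA\ah^2+(1-2p)\bigr]/\bigl[\aA\ah^2+(1-2p)\bigr]$ expands to the stated polynomial, and your intermediate per-cycle quantities (the solution $\varphi(L)=p(L-1)/(1-2p)$, the attracted-block count $\ai[2\aA\ah^2+(1-2p)]/(1-2p)$, and the total on-chain count $[\aA\ah^2+(1-2p)]/(1-2p)$) all check out. The route, however, is genuinely different from the paper's (Appendix C). The paper solves the global balance equations of the lead chain to get the full stationary distribution $P_0,P_0',P_1,P_2,\dots$ with geometric tail $P_n=\aA((1-\ah)/\ah)^{n-1}P_0$, then observes in one line that the attracted miners' unnormalized block rate is simply $\ai$ (every attracted-miner block, whether found at state $0$, on the private branch, or in the race, ends up on the main chain), so all the work goes into the normalizing denominator, which is assembled state by state from the stationary probabilities. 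You instead run a renewal--reward argument over regeneration cycles at state $0$, with a first-step recurrence for the waiting-mode private-find count and an explicit three-path decomposition; your per-cycle numerator is exactly $\ai$ times the expected number of block-finding events per cycle, i.e., you recover the paper's one-line observation by explicit path enumeration. Your approach avoids summing the stationary tail and makes the cubic cancellation in the denominator transparent; the paper's approach computes the stationary distribution once and reuses it to prove the attacker's and honest miners' shares (Theorem 4) in the same pass.
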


The proof of Theorem~\ref{THM:APSMAtk} and Theorem~\ref{THM:APSMGmn} is in Appendix ~\ref{sec:APSMattacker}.

\subsection{Profit Analysis}

\textbf{Rational Miners' profits Analysis:} For rational miner $k$, when following the public mining strategy, we can derive the following theorems:
\begin{theorem}
    When following the public mining strategy, the rational miner $k$'s expected profit' is: 
\begin{small}
    \begin{equation}
\begin{aligned}
   R^{k,APSM}_P=&\frac{(\aA(1-2\aA)\ak(\ak+\aA-1))\ar}{\aA^3-2\aA^2-\aA+1}\\
   &+\frac{(4\aA^3-6\aA^2+1)\ak}{\aA^3-2\aA^2-\aA+1}
    \end{aligned}
\end{equation}
\end{small}

The RER of following the A-PSM instead of public mining is:
\begin{small}
\begin{equation}
RER^{AP,H}_{k}=\frac{((-\aA\ai)-\aA^2+\aA)\ar-\aA\ai+\aA^2}{(\aA\ai+\aA^2-\aA)\ar-2\aA^2+2\aA+1}.
\label{equ:APSMMinerRER}
\end{equation}
\end{small}
\label{THM:APSMMiner}
\end{theorem}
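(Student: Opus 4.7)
The plan is to mirror the proof of Theorem 3 but work inside the A-PSM state machine built in Appendix B (the one already used to derive $R^A_{AP}$ and $R^i_{AP}$ in Theorems 4 and 5). Because the stationary distribution on the lead-indexed states $0, 0', 1, 2, 3, \dots$ has been pinned down in that appendix, I can reuse it rather than re-solving the Markov chain; the only new work is to count how often a block mined by $k$ actually enters the main chain when $k$ runs the public mining strategy.

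First I would go state by state and list the rewarding events for miner $k$. In state $0$ (no private lead), a public-branch block found by $k$ simply counts, contributing $\ak$ per round spent in state $0$. In state $0'$ (the race), $k$ earns $\ak$ only on its own branch choice and only if that branch wins; because $k$ is a public miner it mines on the public side, and by the definition of $\ar$ a fraction $1-\ar$ of the $\ah$-mass is on the public side, so $k$'s effective weight is $(1-\ar)\ak$ against the attacker's $\aA+\ai$. In every state with lead $\ell\geq 2$, $k$'s work is eventually overridden when the attacker releases, so those rounds contribute zero. Finally, in the states where the private lead drops from $2$ to $0'$ because a public miner (possibly $k$) found a block, $k$ picks up $\ak/\ah$ of the $\ah$-weighted transition reward. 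Summing each per-state expected reward against the stationary probabilities already computed in Appendix B, then substituting $\ah=1-\aA-\ai$ and the worst-case identification $\ai=\ak$ that we used in Theorem 2 for a cautious rational miner, should collapse to the stated closed form for $R^{k,APSM}_P$.

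Once $R^{k,APSM}_P$ is in hand, the RER is purely algebraic: form $(R^k_{AP}-R^{k,APSM}_P)/R^{k,APSM}_P$ using $R^k_{AP}$ from Theorem 5, clear the common factor $\aA^3-2\aA^2-\aA+1$ that appears in both denominators, and group the numerator by powers of $\ar$. I expect heavy cancellation because the A-PSM chain degenerates to the PSM chain whenever the private branch never actually reaches lead $2$ within a cycle, which happens with weight $\aA(1-\aA)$ of the total; that weight should be exactly what drops out, explaining why Equation (15) has the same shape as the PSM RER of Equation (10).

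The main obstacle is the bookkeeping for the states of lead $\geq 2$. It is tempting to treat them as purely attacker-favourable, but $k$'s public-branch blocks are still produced during those rounds and only become wasted once the attacker finally releases; I need to be careful to charge $k$ zero reward rather than $\ak$ for those rounds, and to correctly weight the long-tail states whose stationary mass forms a geometric series in the private-branch hazard rate. I would guard against sign errors by cross-checking the resulting expression at the two easy boundaries $\ai=0$ (which should recover the honest-mining baseline of $\ak$ scaled by the fraction of rounds not absorbed by the attack) and $\ar=0$ (where the race contribution vanishes), and by a Monte-Carlo spot check in the style of Table I before committing to the final algebraic identity.
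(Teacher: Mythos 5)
Your overall architecture (per-state reward accounting against a stationary distribution, followed by an algebraic RER computation) matches the paper's, but there is a concrete gap in the first step: you propose to reuse the stationary distribution of the A-PSM chain from Appendix B, and that distribution does not apply here. In Appendix B the private branch advances at rate $1-\ah=\aA+\ai$ because the attracted miners are mining on it; in the scenario of this theorem miner $k$ has moved to the public side and, under the same worst-case assumption used for a cautious rational miner (no other attracted miners), the private branch advances at rate $\aA$ only. The paper therefore re-solves a genuinely different Markov chain in Appendix D, with $P_q=\frac{\aA}{1-\aA}P_{q-1}$ for $q\ge2$ and a normalizer $2\aA^3-4\aA^2+1$ depending on $\aA$ alone --- which is exactly why the stated denominator $\aA^3-2\aA^2-\aA+1$ contains no $\ak$. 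Plugging $\ai=\ak$ into the Appendix B distribution instead would leave $\ak$ in the normalizer and cannot collapse to the stated closed form; your own $\ai=0$ boundary check would eventually flag this, but the plan as written fails before that point.

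A second, smaller omission: the paper's public-miner chain splits the zero-lead race state into $P_{0k}'$ (the race was triggered by $k$'s own block) and $P_{0o}'$ (triggered by another public miner), because $k$'s payoff differs between them --- $k$ collects $2\ak+(1-\ar)\ah$ in the former but only $\ak$ in the latter. Your single race state with an ``effective weight $(1-\ar)\ak$'' does not capture the $2\ak P_{0k}'$ term or the $(1-\ar)\ah P_{0k}'$ credit for $k$'s standing block, so the race contribution would come out wrong even with the correct lead-state distribution. Your treatment of the lead-$\ge 2$ states (charging $k$ zero) is correct, and the final RER step is the same ratio the paper takes, with the greedy profit specialized to $\ai=\ak$; but it only yields the stated RER once $R^{k,APSM}_P$ has been derived from the correct chain.
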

The proof of Theorem~\ref{THM:APSMMiner} is in Appendix~\ref{sec:APSMMiner}.

In Figure~\ref{fig:RERCompHeat}, we show numerous simulation results of miners' RER when following the greedy mining strategy instead of the public mining strategy with an A-PSM attacker. 


\textbf{Attacker's Profit Analysis:} When following the honest mining strategy, the attacker's revenue is shown in Equation (\ref{equ:Honest}). The RER of the attacker following A-PSM rather than an honest mining strategy is: 

\begin{small}
\begin{equation}
\begin{aligned}
&RER^{AP,H}_{A}= \\
&\frac{(-\aA(\ai+\aA-1)^2(2\ai+2\aA-1))\ar}{\aA^2\ai^2+(2\aA^3-2\aA^2-2\aA)\ai+\aA^4-2\aA^3-\aA^2+\aA}+\\
&\frac{2\aA\ai^3+(8\aA^2-5\aA)\ai^2+(10\aA^3-14\aA^2+2\aA)\ai}{\aA^2\ai^2+(2\aA^3-2\aA^2-2\aA)\ai+\aA^4-2\aA^3-\aA^2+\aA}+\\
&\frac{4\aA^4-9\aA^3+4\aA^2}{\aA^2\ai^2+(2\aA^3-2\aA^2-2\aA)\ai+\aA^4-2\aA^3-\aA^2+\aA}-1.
\label{equ:APSMHonAtk}
\end{aligned}
\end{equation}
\end{small}

The reward of following the selfish mining strategy is shown in Equation (\ref{equ:RevSelfish}). The RER of A-PSM over selfish mining strategy is: 
\begin{small}
\begin{equation}
\begin{aligned}
&RER^{AP,S}_{A}=\frac{1-\aA(1+(2-\aA)\aA)}{\aA(1-\aA)^2(4\aA+\ar(1-2\aA))-\aA^3}\times\\
&(\frac{(-\aA(\ai+\aA-1)^2(2\ai+2\aA-1))\ar+4\aA^4-9\aA^3+4\aA^2}{\aA\ai^2+(2\aA^2-2\aA-2)\ai+\aA^3-2\aA^2-\aA+1}+\\
&\frac{2\aA\ai^3+(8\aA^2-5\aA)\ai^2+(10\aA^3-14\aA^2+2\aA)\ai}{\aA\ai^2+(2\aA^2-2\aA-2)\ai+\aA^3-2\aA^2-\aA+1})
\label{equ:APSMSelfAtk}
\end{aligned}
\end{equation}
\end{small}

\begin{figure}[ht]
	\centering
	\subfigure[Rushing ability $\gamma$=0]{\label{compminerr0}\includegraphics[width=0.45\columnwidth]{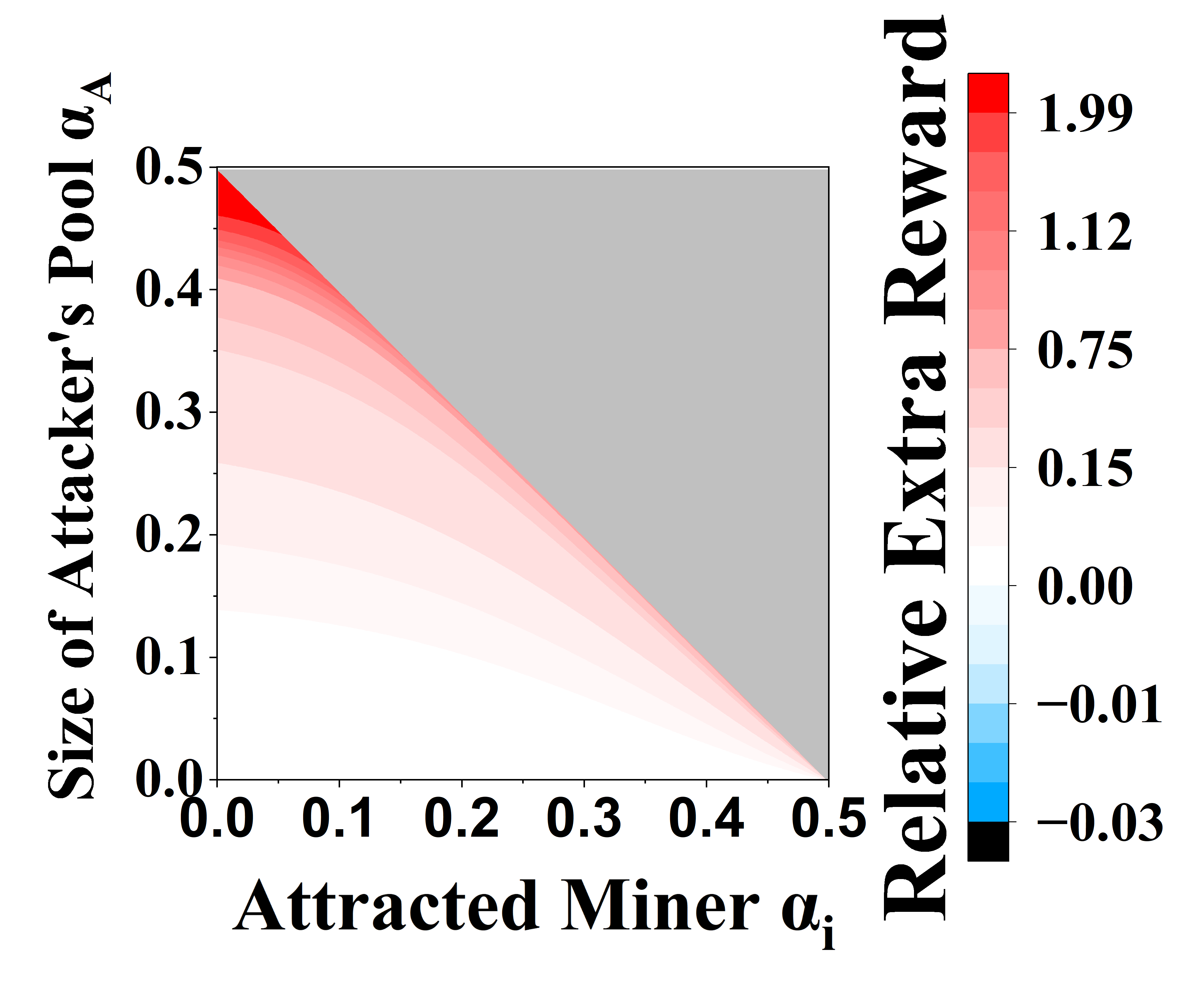}}
	\subfigure[Rushing ability $\gamma$=1]{\label{compminerr1}\includegraphics[width=0.45\columnwidth]{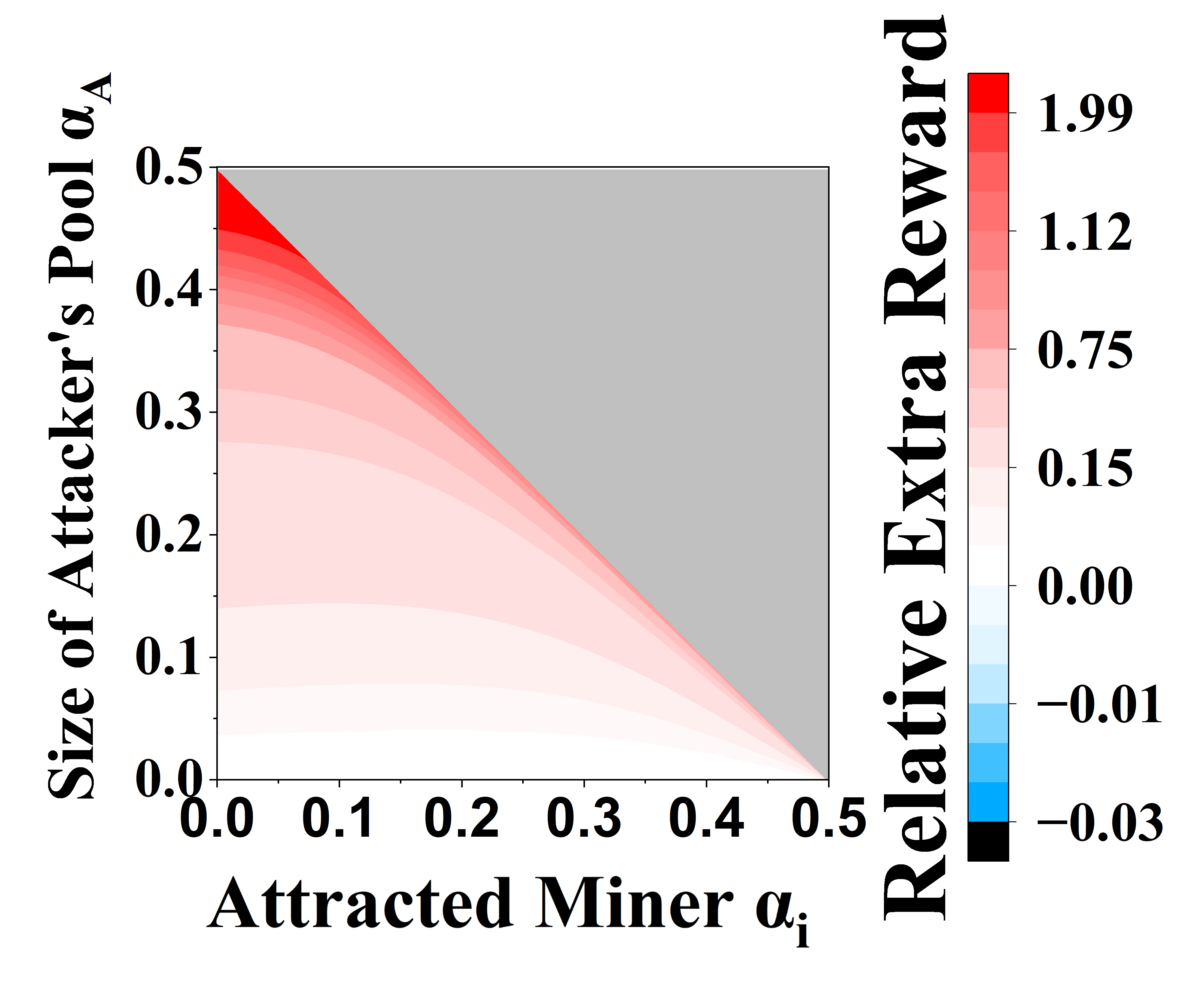}}
	\caption{\textbf{The relative extra rewards of the rational miner when choosing the greedy mining strategy instead of public mining strategy} with an A-PSM attacker ($RER^{AP,H}_{i}$). The solid line represents no extra reward.}
	\label{fig:RERCompHeat}
\end{figure}

\begin{figure}[ht]
	\centering
	\subfigure[$RER^{AP,H}_{A},\gamma=0$]{\label{compHr0}\includegraphics[width=0.45\columnwidth]{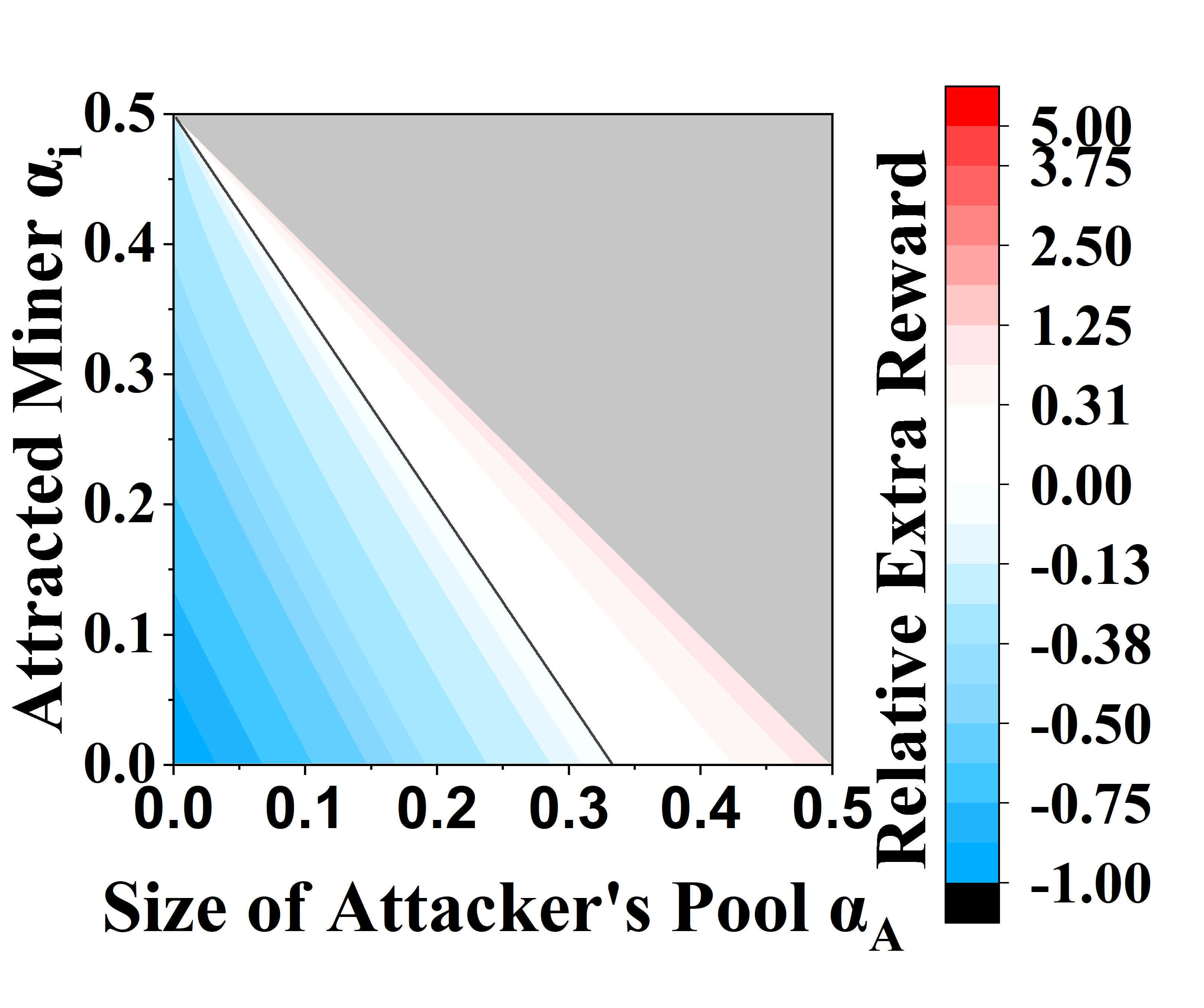}}
	\subfigure[$RER^{AP,H}_{A},\gamma=1$]{\label{compHr1}\includegraphics[width=0.45\columnwidth]{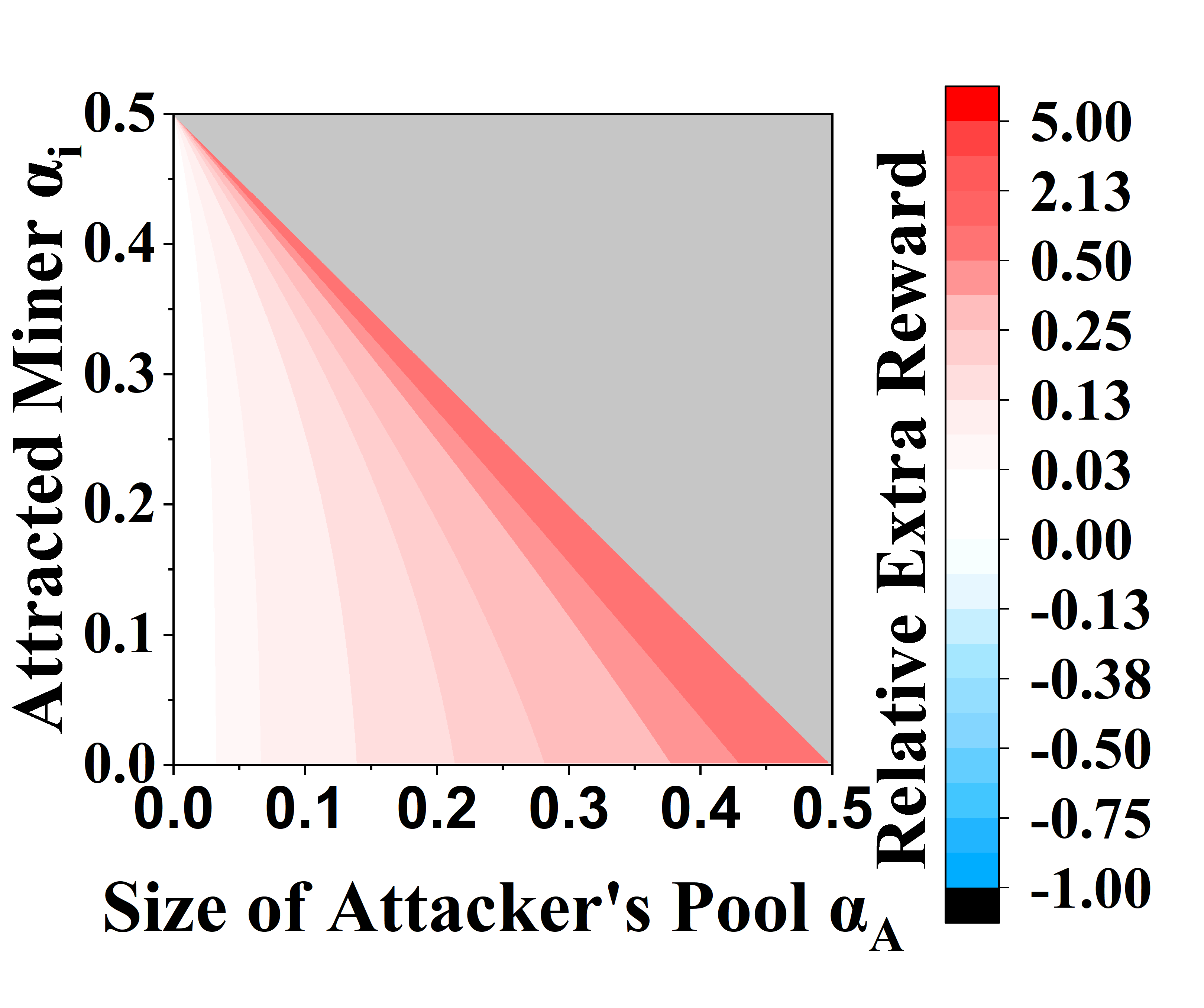}}
    	\subfigure[$RER^{AP,S}_{A},\gamma=0$]{\label{compSr0}\includegraphics[width=0.45\columnwidth]{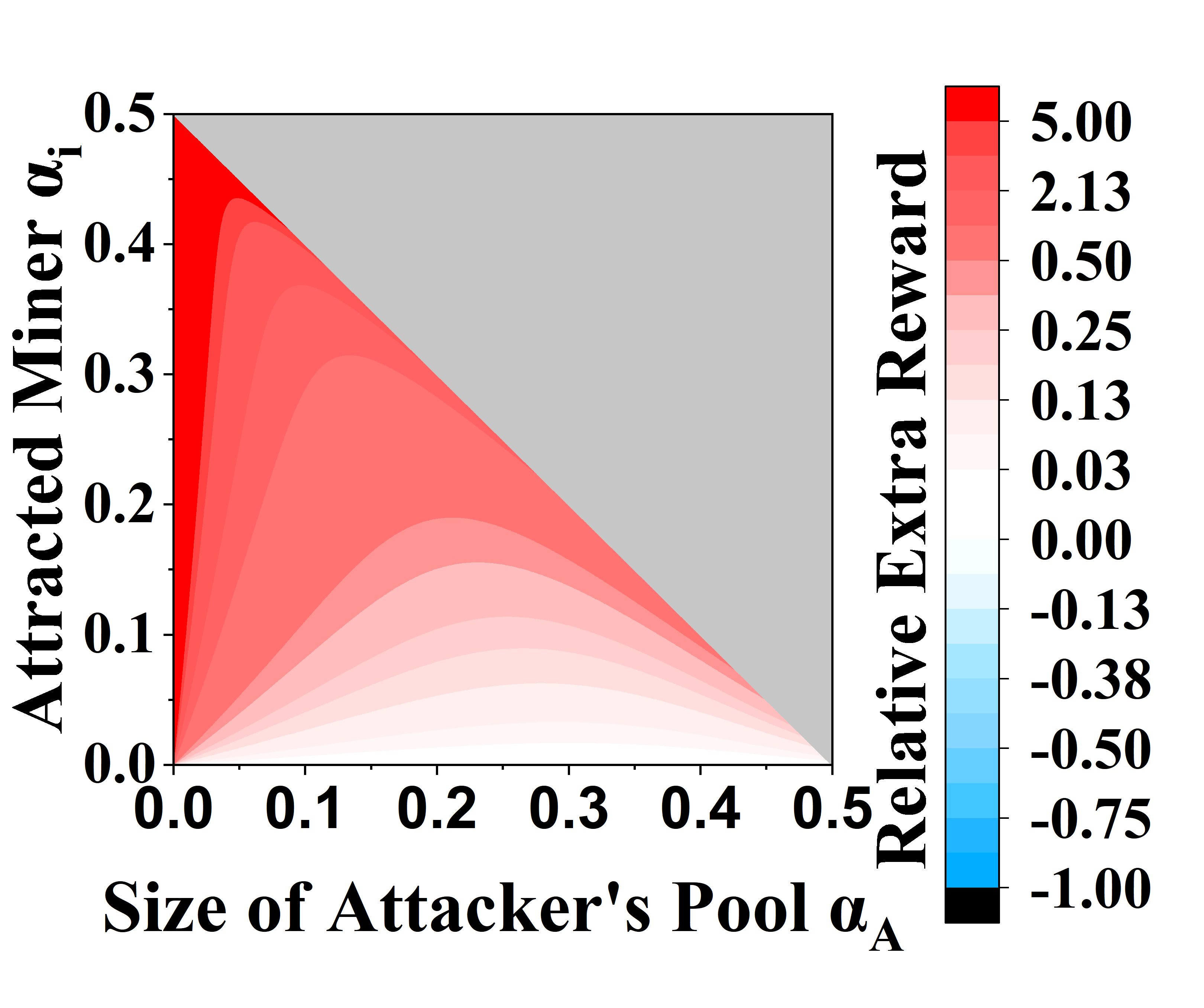}}
	\subfigure[$RER^{AP,S}_{A},\gamma=1$]{\label{compSr1}\includegraphics[width=0.45\columnwidth]{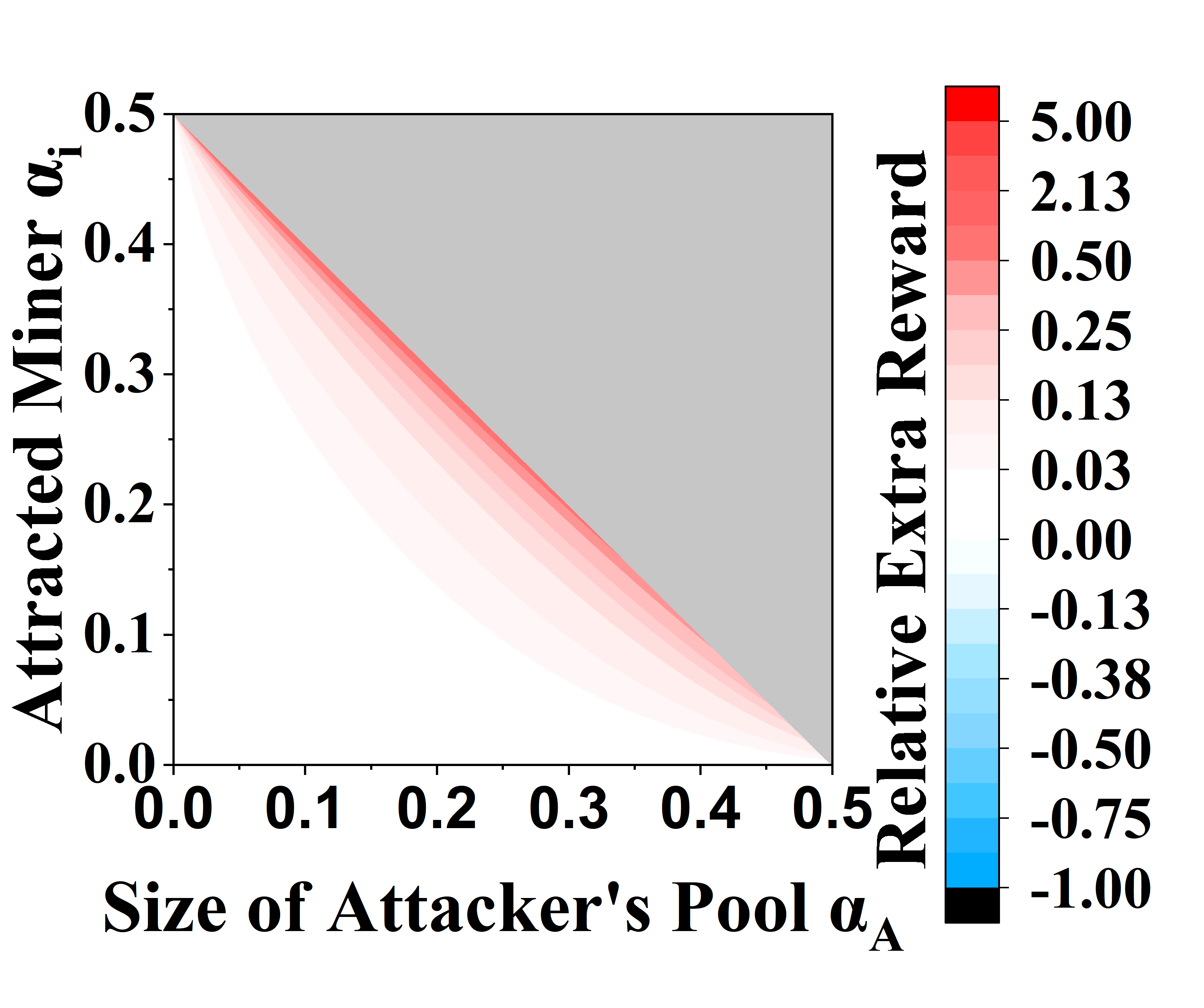}}
	\caption{\textbf{Attacker's relative extra reward of A-PSM over honest mining and selfish mining.} The solid line represents no extra reward.}
	\label{fig:CompRERMinerQ}
\end{figure}

In Figure~\ref{compHr0} and~\ref{compHr1}, we show the simulation results of the attacker's RER when following A-PSM instead of the honest mining strategy with $\gamma =0$ and $1$, respectively. The results of the attacker's RER of A-PSM over selfish mining are shown in Figure~\ref{compSr0} and~\ref{compSr1}  with $\gamma =0$ and $1$, respectively.

We compare the Monte Carlo simulation results of the A-PSM attack with honest mining and selfish mining simultaneously. In the simulation, we show the profits for the attacker with a computation power of 0.1 over $10^9$ rounds. The upper bound for error is $10^{-4}$. The comparison results with honest mining are shown in Table~\ref{tab:CompHonestRER}, and the comparison results with selfish mining are shown in Table~\ref{tab:CompSelfRER}.

\begin{table}[]
\caption{\textbf{Monte Carlo simulation results of attacker's RER when choosing A-PSM instead of honest mining.}}
\label{tab:CompHonestRER}
\centering
\resizebox{\columnwidth}{!}{
\begin{tabular}{@{}cllll@{}}
\toprule
\multicolumn{1}{l}{\diagbox{$\gamma$}{$\alpha_i$}}   & \multicolumn{1}{c}{0} & \multicolumn{1}{c}{0.1} & \multicolumn{1}{c}{0.2} & \multicolumn{1}{c}{0.3} \\ \midrule
0    & -64.31(-64.32)        & -47.88(-47.88)          & -31.37(-31.36)          & -9.09(-9.09)            \\
0.25 & -45.91(-45.91)        & -33.33(-33.33)          & -20.23(-20.23)          & -0.91(-0.91)            \\
0.5  & -27.5(-27.5)          & -18.78(-18.79)          & -9.09(-9.09)            & 7.28(7.27)              \\
0.75 & -9.1(-9.09)           & -4.23(-4.24)            & 2.05(2.05)              & 15.45(15.45)            \\
1    & 9.32(9.32)            & 10.3(10.3)              & 13.18(13.18)            & 23.64(23.64)            \\ \bottomrule
\end{tabular}}
\end{table}

\begin{table}[]

\caption{\textbf{Monte Carlo simulation results of attacker's relative extra reward of A-PSM over selfish mining.}}
\label{tab:CompSelfRER}
\centering
\resizebox{\columnwidth}{!}{
\begin{tabular}{@{}cllll@{}}
\toprule
\multicolumn{1}{l}{\diagbox{$\gamma$}{$\alpha_i$}}   & \multicolumn{1}{c}{0} & \multicolumn{1}{c}{0.1} & \multicolumn{1}{c}{0.2} & \multicolumn{1}{c}{0.3} \\ \midrule
0    & -0.01(0.0)            & 46.04(46.07)            & 92.39(92.36)            & 154.8(154.78)           \\
0.25 & 0.02(0.0)             & 23.25(23.25)            & 47.47(47.48)            & 83.17(83.19)            \\
0.5  & -0.04(0.0)            & 12.01(12.02)            & 25.39(25.39)            & 48.02(47.96)            \\
0.75 & 0.0(0.0)              & 5.32(5.33)              & 12.26(12.25)            & 27.0(27.0)              \\
1    & 0.01(0.0)             & 0.9(0.9)                & 3.54(3.53)              & 13.09(13.1)             \\ \bottomrule
\end{tabular}}
\end{table}

\section{Optimal Mining Strategy}
\label{sec:Eval}

\begin{figure*}[ht]
	\centering
	\subfigure[Optimal strategy between selfish mining and honest mining.]{\label{SMHM}\includegraphics[width=0.65\columnwidth]{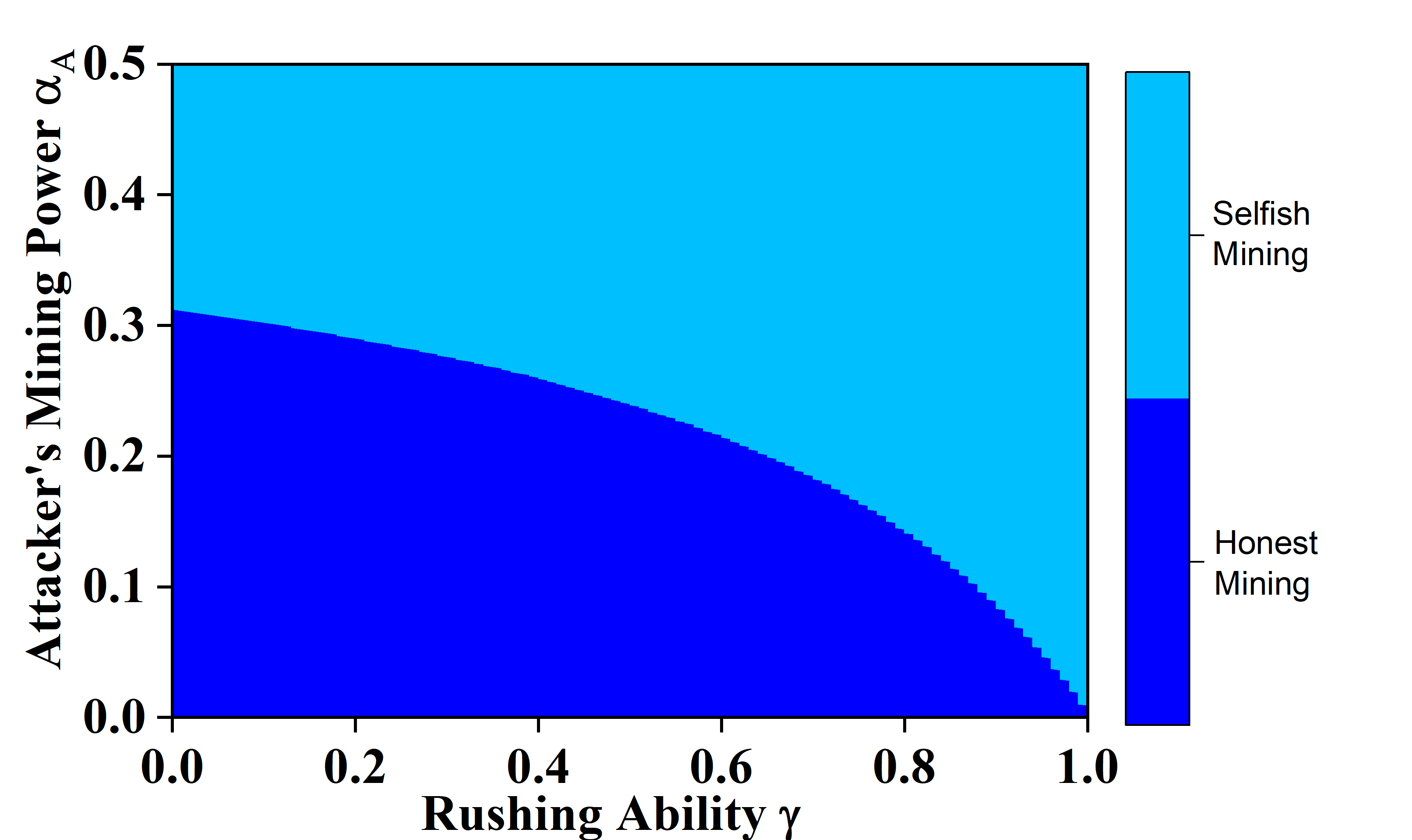}}
	\quad
	\subfigure[Optimal strategy among selfish mining, honest mining, and PSM with different fractions of rational miners.]{\label{PSM}\includegraphics[width=0.65\columnwidth]{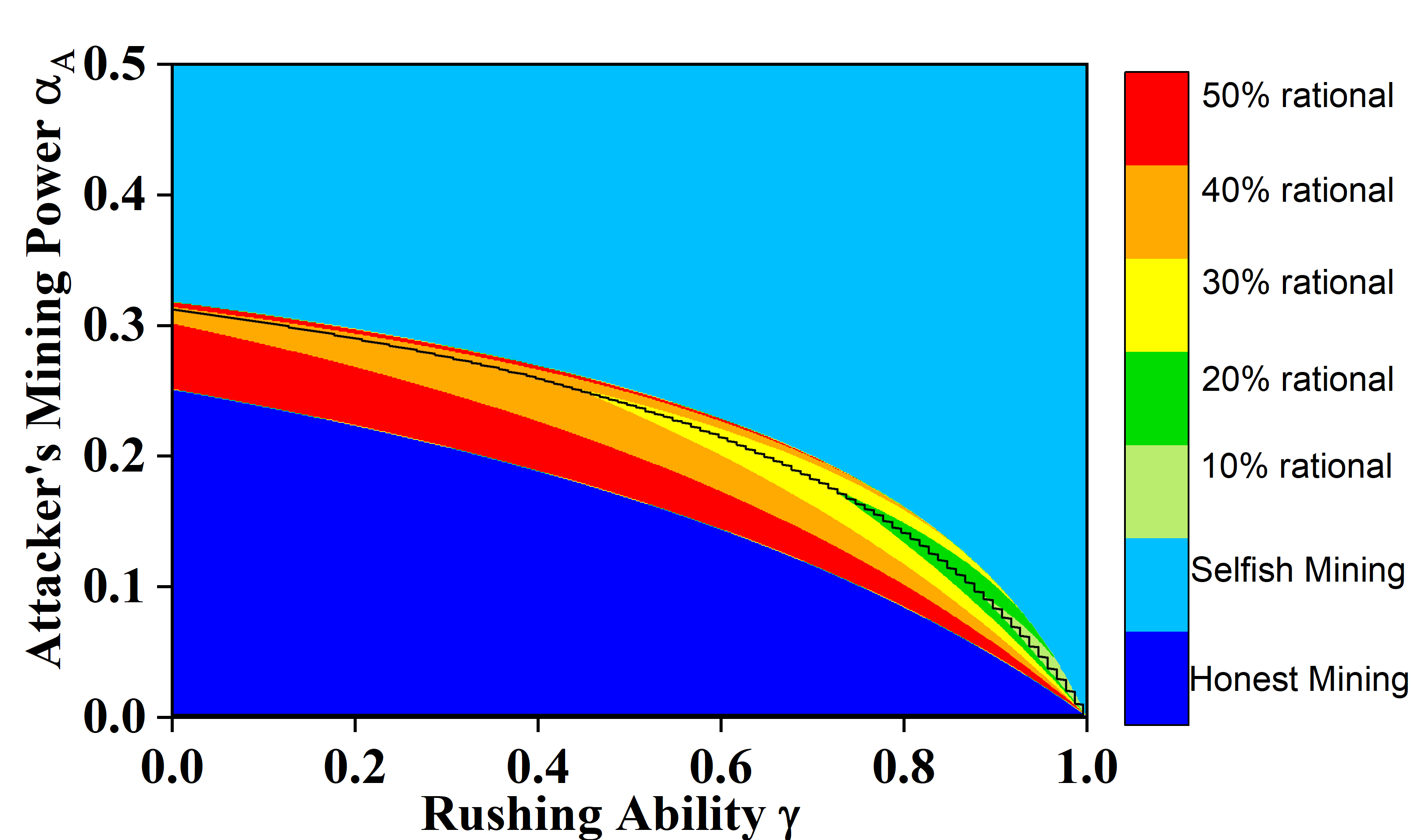}}
	\quad
	\subfigure[The amount of attracted mining power needed for A-PSM attacker to be the optimal strategy.]{\label{AdvancedPSM}\includegraphics[width=0.65\columnwidth]{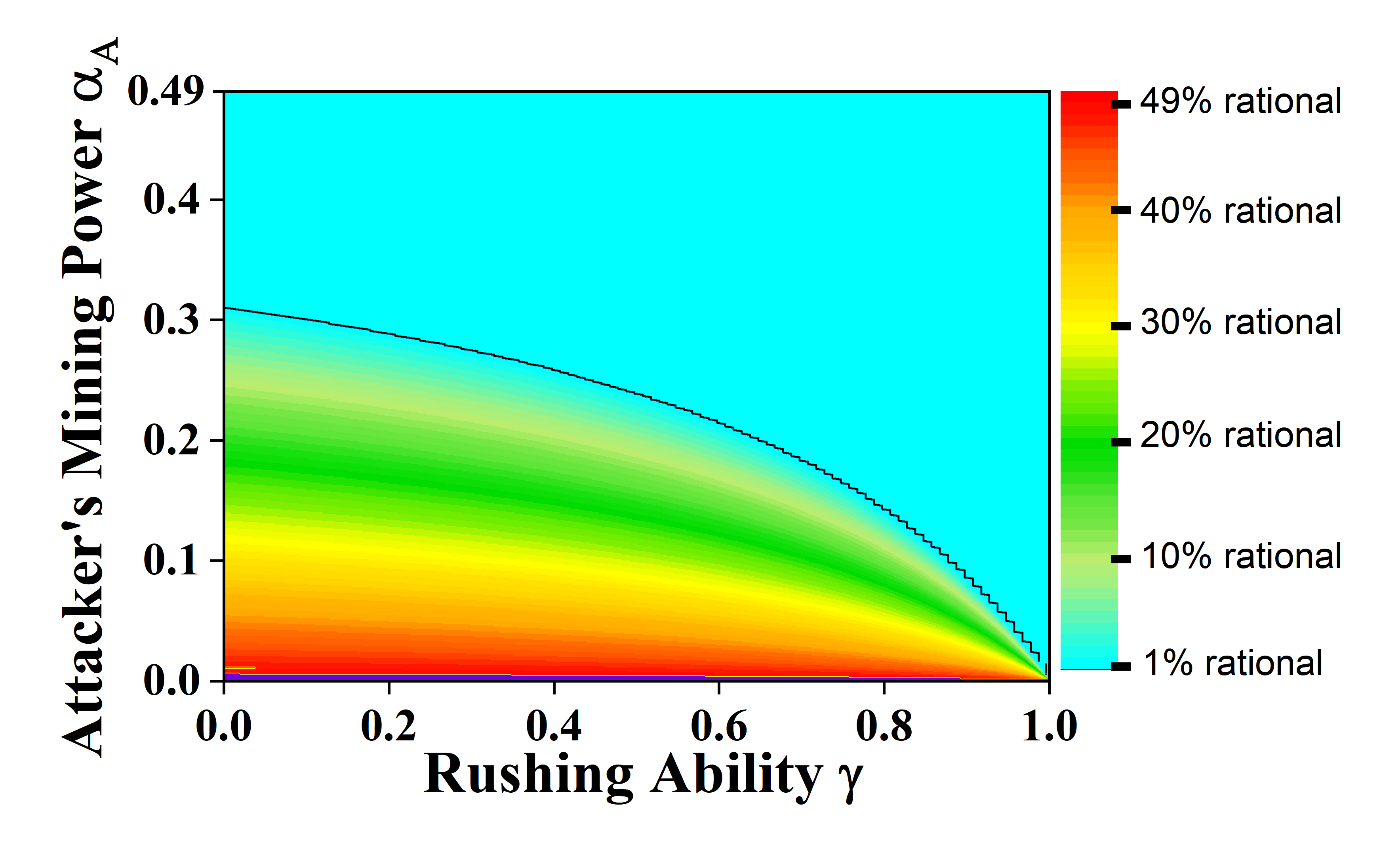}}
	\caption{\textbf{Optimal strategies for different $\alpha_A$ and $\gamma$ values.} We also redraw the borderline between selfish and honest mining in (a), (b), and (c).}
	\quad
\label{fig:partialshare}	
\end{figure*}


In an ideal case with known $\alpha_A$ and $\gamma$, we can get the optimal mining strategy for participants. In this section, we compare PSM, A-PSM with selfish mining and honest mining, and analyze the optimal mining strategy for both rational miners and attackers respectively. \nop{based on Equations (\ref{equ:atkprof}), (\ref{equ:Honest}), (\ref{equ:RevSelfish}), and (\ref{equ:APSMRev}). Note that the A-PSM strategy requires rational miners to trust the attacker more than PSM. Here we make a stricter assumption for the A-PSM strategy, i.e., $\alpha_A+\alpha_i\leq0.5$.}

\subsection{Optimal Mining Strategy for Rational Miners}

Rational miners are profit-driven. They will receive the shared partial block information no matter in PSM or A-PSM. 

In PSM, rational miner's optimal strategy can be decided by Equation (\ref{equ:PSMMinerRER}). When $RER_k^{G,H}$ value is positive, rational miner should follow the attacker's branch. Otherwise, it should follow the public mining branch. In reality, it is hard to obtain the optimal strategy for rational miners because Equation (\ref{equ:PSMMinerRER}) contains the parameter $\gamma$. $\gamma$ is unpredictable that represents the ratio of public miners choosing attacker's branch when a race occurs. Thus, we derive the optimal strategy in Theorem 4.4 for the worst-case scenario when $\gamma=0$. Figure~\ref{fig:RERMinerQ} illustrates the optimal strategy for rational miners.

In A-PSM, rational miner's optimal strategy is decided by Equation (\ref{equ:APSMMinerRER}). Since the formula is always $>0$ in Equation (\ref{equ:APSMMinerRER}), it is more profitable for rational miners to work on the attacker's branch.

\subsection{Optimal Mining Strategy for Attacker}

Figure~\ref{fig:partialshare} shows the optimal mining strategy simulation results for attackers under different conditions. Specifically, Figure~\ref{SMHM} compares selfish mining to honest mining in different network settings, revealing that a selfish miner with over $1/3$ mining power consistently achieves higher profits. 

In PSM, an attacker can get its optimal strategy from Equations (\ref{equ:RevHonest}) and (\ref{equ:RevSelfish}). When both $RER_A$ values are positive, the attacker's optimal strategy is to launch PSM attacks. Otherwise, it should take either honest mining or selfish mining. The parameters $\alpha_A$ and $\alpha_i$ define the distributions of rational and honest miners in equations. In Figure~\ref{PSM}, we show in what conditions PSM can outperform both honest and selfish mining. It shows that even if $\aA+\ai<0.5$, the attacker can still achieve higher profits than both Selfish and Honest mining strategies. For example, when the attacker's mining power $\alpha_A=0.09$, and $\gamma=0.9$, the attacker with rational miner controls only 10\% of mining power can be the optimal mining strategy. When attracted miners control 50\% of mining power, an attacker with 25\% of mining power can outperform selfish and honest mining even if $\gamma =0$.
 
In A-PSM, an attacker can get its optimal strategy from Equations (\ref{equ:APSMHonAtk}) and (\ref{equ:APSMSelfAtk}) similarly. In Figure~\ref{AdvancedPSM}, we graphically illustrate the optimal strategy for attackers in A-PSM. As we can see from Figure~\ref{AdvancedPSM}, A-PSM can outperform selfish mining even if the attracted miners only control less than 1\% of mining power. The fraction of rational miners' mining power needed increases with the decrement of the attacker's mining power. When the attacker's mining power is less than 1\%, the amount of rational miner's mining power needed approaches 49\%. But the overall mining power of attracted miners and the attacker is always no more than 50\%. Whether to take PSM or A-PSM depends on the comparison result between Equation (\ref{equ:atkprof}) \nop{(PSM)} and Equation (\ref{equ:RevSelfish}) \nop{(A-PSM)}.



In reality, $\alpha_A$, $\alpha_i$, and $\gamma$ could change dynamically. The attacker can use network measurement~\cite{saad2021syncattack,saad2021revisiting} or historical information, e.g., its mining power ratio in the past 1 hour, to infer the current network status and select the attack strategy. 



\section{Discussion}
\label{sec:Discussion}

\textbf{Multiple Attackers:} Except for honest miners, a miner who finds a new block can launch PSM or A-PSM attacks simultaneously. Thus, a PoW-based blockchain system can have multiple attackers. The evaluation results indicate that in PSM, the variance of mining power among different attackers has no impact on rational miners; joining either one of them ensures profitability. For the A-PSM strategy, attracted miners must re-evaluate their profits and align with the attacker possessing higher mining power. The detailed proofs for these conclusions are provided in Appendix~\ref{sec:MultiAtk}. 

Additionally, the discussion regarding \textbf{attack mitigation} and the \textbf{rationality of miners} can be found in Appendix~\ref{sec:BlockSharing}. 
\nop{Here, we consider a model with two attackers. Our two-attacker analysis results can be easily extended to the multiple-attacker scenario because the given relative extra reward of attackers still holds.
Rational miners can choose to work on one of the attacker's private branches or continue working on the public branch. 

First, we consider the case that two attackers release the partial block simultaneously. Assuming there is an attacker $A$ with mining power $\alpha_A$, and rushing ability $\gamma_A$. Attacker $B$ with mining power $\alpha_B$ and rushing ability $\gamma_B$. Rational miner $k$ with mining power $\alpha_k$ considers that the rest of the miners are public miners, which means $\alpha_h=1-\alpha_A-\alpha_B-\alpha_k$. 

For PSM, the relative extra reward of choosing attacker $A$ instead of following the public mining strategy is
\begin{small}
\begin{equation}
  RER_{k,PSM}^{A,H}=  \frac{1-2\alpha_k-(1-\gamma)(1-\alpha_A-\alpha_B-\alpha_k)}{2\alpha_k+(2-\gamma)(1-\alpha_A-\alpha_B-\alpha_k)}
\end{equation}
\end{small}

The RER of mining in attacker $A$ over $B$'s branch is always 0. That means the variance of mining power among different attackers has no impact on rational miners. Joining either one of them could ensure its profits. 
 
 For A-PSM, the RER of choosing attacker $A$ instead of following the public mining strategy is
 
  \begin{small}
 \begin{equation}
 \begin{aligned}
     RER_{k,A-PSM}^{A,H}=\frac{\alpha_h+\frac{1}{\alpha_A+\alpha_i}(\frac{(\alpha_A+\alpha_i)^2}{1-2(\alpha_A+\alpha_i)})+1}{2\alpha_i+(2-\gamma)\alpha_h}-1.
     \end{aligned}
 \end{equation}
 \end{small}
 The RER of choosing attacker $A$ instead of attacker $B$ is
 \begin{small}
 \begin{equation}
 \begin{aligned}
     RER_{k,A-PSM}^{A,B}=\frac{\alpha_A-\alpha_B}{(2\ai+2\alpha_A-1)(2\ai+2\alpha_B-1)}.
     \end{aligned}
 \end{equation}
 \end{small}
 It is always more beneficial for rational miners to work with larger mining power attackers.
 
Then we consider one of the attackers to find the new block first. Assuming attacker $A$ finds the new block at time $t_a$, the attacker $B$ finds the new block at time $t_b$, duration $T_d=t_b-t_a>0$. Since the rational miner trusts both attackers equally, joining the attacker $A$'s private branch means the rational miner can have $\alpha_h R(T_d)\alpha_i+\alpha_iR(T_d)$ more expected revenue during the period $T_d$. Note that the definition of $R(T)$ is given in Equation (\ref{equ:RT}). 
Thus, attracted miners will tend to join the attacker $A$'s private branch first to assure higher profits for both PSM and A-PSM strategies. If the rational miners do not find the new block during the period $T_d$, with the PSM strategy, the attracted miners have no motivation to switch the working branch. For the A-PSM strategy, the attracted miners need to re-evaluate their profits and work with the attacker with higher mining power.

\subsection{Partial Block Sharing beyond the PSM}

We believe the partial block sharing strategy can be applied with a more complicated mining strategy. 

For example, when an attacker finds new blocks, instead of releasing the whole private branch, the attacker can still release the partial block data of the new block and promise that it follows the selfish-mining-like strategy to broadcast the partial-released block. In this case, attracted miners can immediately release the block on the attacker's private branch, but it will not be recognized as a valid block until the prior partial block is released. When race occurs, Two possible cases may happen: 

(Case 1) if the lead of the private branch is 2, the attacker will release all the partial blocks, and the system goes back to the single branch state. 

(Case 2) If the private branch's lead was more than 2, then the attacker strategically controls the releasement of the partial block to ensure the length of the released private branch is the same as the public branch until the lead was 2. For example: if the lead of the private branch is 3 and the next partial block is followed by $n$ full blocks from attracted miners, then the attacker will wait until the public miners find $n-1$ block, then the attacker will release all blocks.

\begin{figure}[htbp]
	\centerline{\includegraphics[width=\columnwidth]{fig/RealSelfish.png}}
	\caption{\textbf{Partial block sharing with a more complicated mining strategy.} It assures the lower bound of the attacker's revenue is almost the same as selfish mining.}
	\label{fig:RealSelf}
\end{figure}

\nop{(Case 2) if the lead of the private branch was 3 and the next partial block follows by a full block from the attracted miner, then the attacker will wait until the public miners find another block, then it will release all three blocks and get 2 blocks revenue. (Case 3) if the attacker's lead is more than 3, and the next partial block follows by a full block from the attracted miner, then the attacker will wait until public miners find the next block and release 1 partial block. Then the following full block from attracted miners will also be valid. (Case 4) if the attacker's lead is 3, and the next partial block is followed by another partial block, then the attacker will release one partial block and get 1 block reward. 
}

In this case, the lower bound of the attacker's revenue is selfish mining. For an attracted miner with mining power $\alpha_r$, it is always more profitable to work in the attacker's private branch as long as $\alpha_r+\alpha_A>33\%$. 

Besides, we also believe the partial block sharing strategy can be applied to all the selfish mining-related attacks, such as FAW and stubborn mining attacks. Bribery attacks can also be combined with the partial block sharing strategy. Suppose the attacker's mining power is relatively small and cannot convince the attracted miners that its network condition is good enough. In that case, it can leave some bribes on the smart contract of partial block releasement to attract more attracted miners. 
}

\nop{\color{blue}\subsection{Comparison with Selfish Mining}
As shown in \fig~\ref{fig:partialshare} (b), when compared with PSM, selfish mining can be stronger if the attacker has a larger mining power. However, if the attacker only controls a small fraction of mining power (e.g., 25\%), the previous optimal selfish mining cannot obtain extra profit when compared with honest mining. While PSM introduces a novel strategy to attract profit-driven miners to the attacker’s branch for profits, this new approach provides a better understanding of security bounds beyond the 20\% selfish mining power threshold.}



\section{Conclusions}
\label{sec:Conclusions}
Selfish mining attack poses constant threats to blockchain systems, especially small-scale public chains. Previous attacking methods do not improve selfish mining itself, but combine it with other techniques, such as bribery attacks, BWH and FAW, to gain more advantages. Unlike previous approaches, we exploit partial information publishing in the mining process, which shows a new mining attack paradigm of colluding with other rational miners for an attacker in this paper.  We first propose Partial Selfish Mining (PSM) attacks. Based on the idea of partial block sharing, PSM attacks allow both attackers and attracted miners to earn more rewards. PSM is also feasible to launch because it has two mechanisms to guarantee attracted miners' profit such that they will follow the attacker's private branch. The proposed Advanced PSM (A-PSM) can improve the attacker's profit to be no less than selfish mining, making it a profitable alternative to current selfish mining in mining-related attacks to PoW blockchain systems. To mitigate partial selfish mining, we discuss some possible countermeasures. However, a practical solution remains to be open.

\bibliographystyle{IEEEtranS}
\bibliography{ccs-sample}
\appendix

\section{Proof of Attacker and Greedy Miners' Profits in PSM}\label{sec:PSMAtk}

\begin{figure}[ht]
	\centering
	\subfigure[State machine of PSM when miners with overall $\alpha_i$ mining power working as attracted miner. ]{\label{fig:PSMStateMachine}\includegraphics[width=0.4\columnwidth]{fig/PSMAttacker.png}}
	\subfigure[State machine of PSM when the only rational miner k with mining power $\ak$ choose to follow public mining strategy.]{\label{fig:PSMMinerStateMachine}\includegraphics[width=0.45\columnwidth]{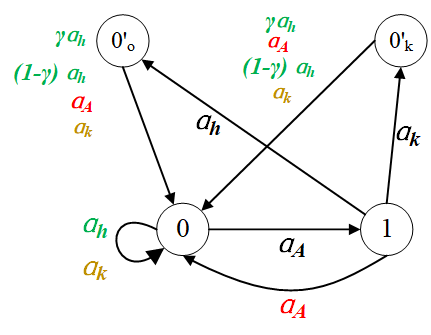}}
   	\caption{State machine of PSM in different conditions.}
\end{figure}


From the state machine of PSM in \fig~\ref{fig:PSMStateMachine}, we can get the following Equations: 
\begin{small}
\begin{subequations}
\begin{align}
\alpha_A P_0 &=P_1,\label{eq:atk1}  \\
P_0'&=\alpha_h P_1,\label{eq:atk2} \\
\alpha_A P_0&=P_0'+(1-\alpha_h)P_1, \label{eq:atk3} \\
P_0&+P_0'+P_1 =1\label{eq:atk4}
\end{align}
\end{subequations}
\end{small}
From Equation~(\ref{eq:atk1})  we can get:
\begin{small}
\begin{equation}
    P_0=\frac{1}{\aA}P_1 \label{eq:atk5}
\end{equation}
\end{small}
Plugging the Equation~(\ref{eq:atk5}) and (\ref{eq:atk2}) into~(\ref{eq:atk4}), then we get
\begin{small}
\begin{equation}
\begin{aligned}
\label{equ:atkP1}
    P_1+\frac{1}{\aA} P_1 +\ah P_1 &=1\\
    P_1 =\frac{\aA}{1+\aA+\aA\ah}.& 
    \end{aligned}
\end{equation}
\end{small}
 With  Equation (\ref{equ:atkP1}), we can get other probabilities:
 \begin{small}
 \begin{subequations}
\begin{align}
P_0&=\frac{1}{1+\aA+\aA\ah},\label{eq:atkP0}  \\
P_0'&=\frac{\aA\ah}{1+\aA+\aA\ah},\label{eq:atkP0'}
\end{align}
\end{subequations}
\end{small}
Then we can calculate the revenue of the attacker as: 
\begin{small}
\begin{equation}
\begin{aligned}
\label{equ:PSMatkrev}
   Rev^{PSM}_{atk}&=2\alpha_A P_1+(2 \alpha_A + \alpha_i + \gamma \alpha_h)P_0'+\alpha_i P_1\\
   Rev^{PSM}_{atk}&=(2\alpha_A+\alpha_i)(\frac{\alpha_A\alpha_h+\alpha_A}{1+\alpha_A+\alpha_A\alpha_h})+\gamma\alpha_h (\frac{\alpha_A\alpha_h}{1+\alpha_A+\alpha_A\alpha_h})\\
   Rev^{PSM}_{atk}&=\frac{(2\aA+\ai)(\aA\ah+\aA)+\ar\aA\ah^2}{1+\aA+\aA\ah}.
    \end{aligned}
\end{equation}
\end{small}
The overall revenue of all the attracted miners is:
\begin{equation}
\begin{aligned}
\label{equ:PSMgreedyrev}
   Rev^{PSM}_{attracted}&= \ai.
    \end{aligned}
\end{equation}

And the overall revenue of all the honest miners are:
\begin{equation}
\begin{aligned}
\label{equ:PSMhonrev}
   Rev^{PSM}_{honest}&= (\ar\ah+2(1-\ar)\ah)P_0'+\ah P_0\\
   Rev^{PSM}_{honest}&= \frac{2\aA\ah^2+\ah-\aA\ah^2\ar}{1+\aA+\aA\ah}.
    \end{aligned}
\end{equation}

Since honest miners or the attacker may work on the blocks that eventually end up outside the blockchain, the overall block generation rate $Rev^{PSM}_{atk}+Rev^{PSM}_{attracted}+Rev^{PSM}_{honest}<1$. Thus, we can calculate the attacker's expected profit as: 
\begin{equation}
\begin{aligned}
\label{equ:PSMatkprof}
   R^A_P=&\frac{Rev^{PSM}_{atk}}{Rev^{PSM}_{atk}+Rev^{PSM}_{attracted}+Rev^{PSM}_{honest}}\\
   =&\frac{\aA\ah^2\ar+(\aA\ah+\aA)\ai+2\aA^2\ah+2\aA^2}{(2\aA\ah+2\aA+1)\ai+2\aA\ah^2+(2\aA^2+1)\ah+2\aA^2}
    \end{aligned}
\end{equation}

And the attracted miners' overall expected profit is:
\begin{equation}
\begin{aligned}
\label{equ:PSMratprof}
   R^G_P=&\frac{Rev^{PSM}_{attracted}}{Rev^{PSM}_{atk}+Rev^{PSM}_{attracted}+Rev^{PSM}_{honest}}\\
   =&\frac{(\aA\ah+\aA+1)\ai}{(2\aA\ah+2\aA+1)\ai+2\aA\ah^2+(2\aA^2+1)\ah+2\aA^2}
    \end{aligned}
\end{equation}

\section{Proof of Rational Miner's Profits in PSM}\label{sec:PSMMiner}
From the state machine in \fig~\ref{fig:PSMMinerStateMachine}, we can get the following Equations: 
\begin{subequations}
\begin{align}
\alpha_A P_0 &=P_1,\label{eq:matk1}  \\
P_{0o}'&=\alpha_h P_1,\label{eq:matk2} \\
P_{0k}'&=\alpha_k P_1,\label{eq:matk3} \\
P_0+&P_{0o}'+P_{0k}'+P_1 =1\label{eq:matk4}
\end{align}
\end{subequations}

From Equation~(\ref{eq:matk1})  we can get:
\begin{equation}
    P_0=\frac{1}{\aA}P_1 \label{eq:matk5}
\end{equation}
Plugging the Equation~(\ref{eq:matk5}) (\ref{eq:matk2}) and (\ref{eq:matk3}) into~(\ref{eq:matk4}), then we get
\begin{equation}
\begin{aligned}
\label{equ:matkP1}
    P_1+\frac{1}{\aA} P_1 +\ah P_1 +\ak P_1 &=1\\
    P_1 =\frac{\aA}{1+\aA(1+\ak+\ah)}.& 
    \end{aligned}
\end{equation}

 With Equation~\ref{equ:matkP1}, we can get other probabilities:
 \begin{subequations}
\begin{align}
P_0&=\frac{1}{1+\aA(1+\ak+\ah)},\label{eq:matkP0}  \\
P_{0o}'&=\frac{\aA\ah}{1+\aA(1+\ak+\ah)},\label{eq:matkP0'h}\\
P_{0k}'&=\frac{\aA\ak}{1+\aA(1+\ak+\ah)},\label{eq:matkP0'k}.
\end{align}
\end{subequations}

With $\ah=1-\aA-\ak$, we can calculate the revenue of the miner k as: 

\begin{equation}
\begin{aligned}
\label{equ:PSMkrev}
   Rev^{PSM,k}_{k}&=(2\alpha_k+(1-\ar)\ah) P_{0k}'+\alpha_k(P_0+P_{0o}')\\
   Rev^{PSM,k}_{k}&=\frac{2\aA\ak^2+(1-\ar)\ah\aA\ak}{1+\aA(1+\ak+\ah)}+\frac{\aA\ak(\ak+\ah)}{1+\aA(1+\ak+\ah)}\\
Rev^{PSM,k}_{k}&=\frac{((\aA-\aA^2)\ak-\aA\ak^2)r+(2\aA^2-2\aA-1)\ak}{\aA^2-2\aA-1}.
    \end{aligned}
\end{equation}

The overall revenue of all other honest miners is:
\begin{equation}
\begin{aligned}
\label{equ:PSMorev}
   Rev^{PSM,k}_{o}=&\ah(P_0+P_{0k}')+(\ah\ar+2\ah(1-\ar))P_{0o}'+\\
   &(1-\ar)\ak P_{0o}'\\
   Rev^{PSM,k}_{o}=&\frac{((\aA^2-\aA)\ak+\aA^3-2\aA^2+\aA)\ar}{\aA^2-2\aA-1}+\\
   &\frac{((-2\aA^2)+2\aA+1)\ak}{\aA^2-2\aA-1}-\frac{2\aA^3-4\aA^2+\aA+1}{\aA^2-2\aA-1}.
    \end{aligned}
\end{equation}

And the revenue of the attacker is:
\begin{equation}
\begin{aligned}
\label{equ:PSMmineratkrev}
   Rev^{PSM,k}_{atk}&=2\aA (P_{0o}'+P_{0k}'+P_1)+\ar\ah P_{0k}'+\ar (\ah +\ak) P_{0o}'\\
   Rev^{PSM,k}_{atk}&=\frac{(\aA\ak^2-\aA^3+2\aA^2-\aA)\ar+2\aA^3-4\aA^2}{\aA^2-2\aA-1}.
    \end{aligned}
\end{equation}

And the miner k's overall expected profit is:
\begin{equation}
\begin{aligned}
\label{equ:kprof}
   R^k_P=&\frac{Rev^{PSM,k}_{k}}{Rev^{PSM,k}_{atk}+Rev_{k}+Rev_{o}}\\
   =&\frac{\aA\ah^2\ar+(\aA\ah+\aA)\ak+2\aA^2\ah+2\aA^2}{2\ah^2\ar+(2\aA\ah+2\aA+1)\ak+\aA\ah^2+2\aA^2\ah+2\aA^2}\\
   =&\frac{(\aA\ak^2+(\aA^2-\aA)\ak)\ar+((-2\aA^2)+2\aA+1)\ak}{\aA+1}
    \end{aligned}
\end{equation}

The relative extra reward of miner k when following attracted mining instead of public mining strategy is: 
\begin{equation}
\begin{aligned}
   RER_k^{G,H} &=\frac{R_k^G-R_k^H}{R_k^H}\\
   &=\frac{((-\aA\ai)-\aA^2+\aA)\ar-\aA\ai+\aA^2}{(\aA\ai+\aA^2-\aA)\ar-2\aA^2+2\aA+1}
    \end{aligned}
\end{equation}

\section{Proof of Attacker and attracted miners' Profits in A-PSM}\label{sec:APSMattacker}


\begin{figure}[ht]
	\centering
	\subfigure[State machine of A-PSM when miners with overall $\alpha_k$ mining power working as attracted miner. ]{\label{fig:APSMStateMachine}\includegraphics[width=\columnwidth]{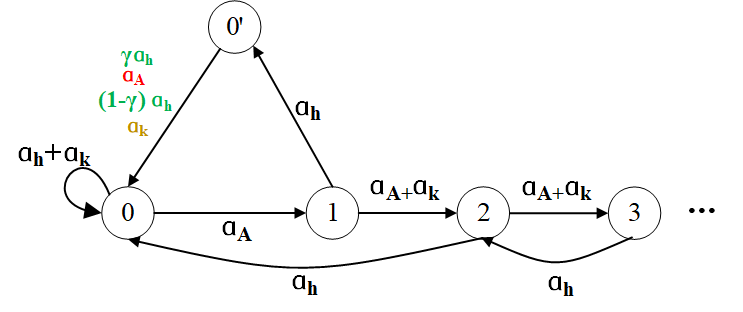}}
	\subfigure[State machine of A-PSM when miner k choose to follow public mining strategy. ]{\label{fig:APSMMinerStateMachine}\includegraphics[width=\columnwidth]{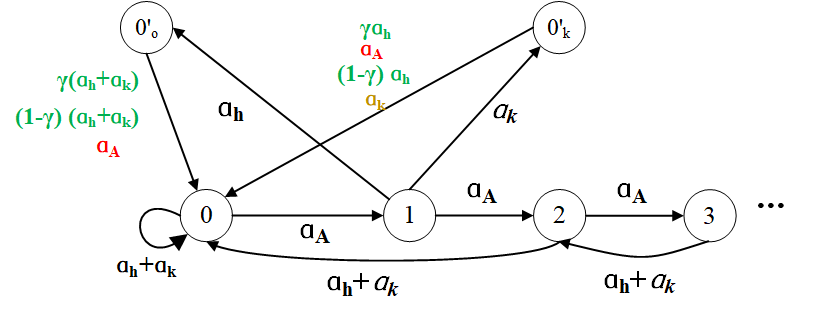}}
   	\caption{State machine of PSM in different conditions.}
\end{figure}

From the state machine in  \fig~\ref{fig:APSMStateMachine}, we can get the following equations: 

\begin{subequations}
\begin{align}
\alpha_A P_0 &=P_1,\label{eq:apsmatk1}  \\
P_0'&=\alpha_h P_1,\label{eq:apsmatk2} \\
(1-\ah)P_n-1&=\ah P_{n}, \forall n\geq 2  \label{eq:apsmatk3} \\
P_0'+\sum_{k=0}^\infty P_k &=1\label{eq:apsmatk4}
\end{align}
\end{subequations}

From equation~(\ref{eq:apsmatk2})  we can get:
\begin{equation}
    P_0'=\aA\ah P_0 \label{eq:apsmatkp0d}
\end{equation}
Then, equation~(\ref{eq:apsmatk3}) and (\ref{eq:apsmatk1}) give us:
\begin{equation}
\begin{aligned}
\label{equ:apsmatkPn}
P_n =\aA(\frac{1-\ah}{\ah})^{n-1}P_0, \forall n\geq 2
    \end{aligned}
\end{equation}

Then we can express equation (\ref{eq:apsmatk4}) as
 \begin{subequations}
\begin{align}
P_0+\aA(1+\ah)P_0+\frac{\aA(1-\ah)}{2\ah-1}P_0=1
\label{eq:apsmp0}
\end{align}
\end{subequations}
From equation (\ref{eq:apsmp0}) we can obtain $P_0$ and other probabilities:
 \begin{subequations}
\begin{align}
P_0&=\frac{2\ah-1}{2\aA\ah^2+2\ah-1}\\
P_0'&=\frac{\aA\ah(2\ah-1)}{2\aA\ah^2+2\ah-1}\\
P_1&=\frac{\aA(2\ah-1)}{2\aA\ah^2+2\ah-1}\\
P_n&=(\frac{1-\ah}{\ah})^{n-1}\frac{\aA(2\ah-1)}{2\aA\ah^2+2\ah-1},\forall n\geq 2.
\label{eq:apsmresult}
\end{align}
\end{subequations}

Then we can calculate the revenue of the attacker as: 
\begin{small}
\begin{equation}
\begin{aligned}
\label{equ:APSMatkrev}
   Rev^{APSM}_{atk}&=2\aA P_0'+\ai P_0'+r\ah P_0'+(1+\frac{\aA}{\aA+\ai})\ah P_2\\
   &+(\frac{\aA}{\aA+\ai})\ah \sum_{q=3}^\infty P_q\\
   Rev^{APSM}_{atk}&=\frac{(-\aA(\ai+\aA-1)^2(2\ai+2\aA-1))\ar}{2\aA\ai^2+(4\aA^2-4\aA-2)\ai+2\aA^3-4\aA^2+1}+\\
   &\frac{2\aA\ai^3+(8\aA^2-5\aA)\ai^2+(10\aA^3-14\aA^2+2\aA)\ai}{2\aA\ai^2+(4\aA^2-4\aA-2)\ai+2\aA^3-4\aA^2+1}+\\
   &\frac{4\aA^4-9\aA^3+4\aA^2}{2\aA\ai^2+(4\aA^2-4\aA-2)\ai+2\aA^3-4\aA^2+1}.
    \end{aligned}
\end{equation}
\end{small}
The overall revenue of all the attracted miners is:
\begin{equation}
\begin{aligned}
\label{equ:APSMgreedyrev}
   Rev^{APSM}_{attracted}&= \ai.
    \end{aligned}
\end{equation}

And the overall revenue of all the honest miners are:
\begin{equation}
\begin{aligned}
\label{equ:APSMhonrev}
   Rev^{APSM}_{honest}&= \ah P_0+2\ah(1-\ar)P_0'+\ar\ah P_0'\\
   Rev^{APSM}_{honest}&= \frac{\aA(\ai+\aA-1)^2(2\ai+2\aA-1)(\ar-3)}{2\aA\ai^2+(4\aA^2-4\aA-2)\ai+2\aA^3-4\aA^2+1}.
    \end{aligned}
\end{equation}

Since honest miners or the attacker may work on the blocks that eventually end up outside the blockchain, the overall block generation rate $Rev_{atk}+Rev_{attracted}+Rev_{honest}<1$. Thus, we can calculate the attacker's expected profit as: 
\begin{equation}
\begin{aligned}
\label{equ:APSMatkprofit}
   R^A_P=&\frac{Rev^{APSM}_{atk}}{Rev^{APSM}_{atk}+Rev^{APSM}_{attracted}+Rev^{APSM}_{honest}}\\
   =&\frac{(-\aA(\ai+\aA-1)^2(2\ai+2\aA-1))\ar}{\aA\ai^2+(2\aA^2-2\aA-2)\ai+\aA^3-2\aA^2-\aA+1}+\\
&\frac{2\aA\ai^3+(8\aA^2-5\aA)\ai^2+(10\aA^3-14\aA^2+2\aA)\ai}{\aA\ai^2+(2\aA^2-2\aA-2)\ai+\aA^3-2\aA^2-\aA+1}+\\
&\frac{4\aA^4-9\aA^3+4\aA^2}{\aA\ai^2+(2\aA^2-2\aA-2)\ai+\aA^3-2\aA^2-\aA+1}
    \end{aligned}
\end{equation}

And the rational miners' overall expected profit is:
\begin{equation}
\begin{aligned}
\label{equ:APSMratprof}
   R^G_P=&\frac{Rev^{APSM}_{attracted}}{Rev^{APSM}_{atk}+Rev^{APSM}_{attracted}+Rev^{APSM}_{honest}}\\
   =&\frac{2\aA\ai^3+(4\aA^2-4\aA-2)\ai^2+(2\aA^3-4\aA^2+1)\ai}{\aA\ai^2+(2\aA^2-2\aA-2)\ai+\aA^3-2\aA^2-\aA+1}
    \end{aligned}
\end{equation}

\section{Proof of One Rational Miner's Profits in APSM}\label{sec:APSMMiner}
From the state machine in~\fig~\ref{fig:APSMMinerStateMachine}, we can get the following equations: 
\begin{subequations}
\begin{align}
P_1&=\aA P_0,\label{eq:apsmmatk1}  \\
P_{0o}'&=\ah P_1,\label{eq:apsmmatk2} \\
P_{0k}'&=\alpha_k P_1,\label{eq:apsmmatk3} \\
P_q&=\frac{\aA}{1-\aA}P_{q-1}, \forall q\geq 2\label{eq:apsmmatk4}\\
P_0&+P_{0o}'+P_{0k}'+\sum_{q=1}^\infty =1\label{eq:apsmmatk5}
\end{align}
\end{subequations}

Combining the equation~(\ref{eq:apsmmatk1}) with (\ref{eq:apsmmatk2}) and (\ref{eq:apsmmatk3}), then we get
\begin{equation}
\begin{aligned}
P_{0k}'&=\aA\ak P_0\\
P_{0o}'&=\aA\ah P_0.
    \end{aligned}
\end{equation}

With equation (\ref{eq:apsmmatk4}), we can get: 
\begin{equation}
\begin{aligned}
P_n=(\frac{\aA}{1-\aA})^{n-1}\aA P_0
    \end{aligned}
\end{equation}

 With equation~\ref{eq:apsmmatk5}, we can get all states probabilities:
 \begin{subequations}
\begin{align}
P_0&=\frac{1-2\aA}{2\aA^3-4\aA^2+1},\label{eq:APSMmatkP0}  \\
P_{0o}'&=\frac{\aA\ah(1-2\aA)}{2\aA^3-4\aA^2+1},\label{eq:APSMmatkP0'h}\\
P_{0k}'&=\frac{\aA\ak(1-2\aA)}{2\aA^3-4\aA^2+1},\label{eq:APSMmatkP0'k}\\
P_n &= (\frac{\aA}{1-\aA})^{n-1}\aA \frac{1-2\aA}{2\aA^3-4\aA^2+1}
\end{align}
\end{subequations}

Then we can calculate the revenue of the miner k as: 

\begin{equation}
\begin{aligned}
\label{equ:krev}
   Rev^{APSM,k}_{k}&=\ak P_0+2\ak P'_{0k}+\ak P'_{0o}+(1-\ar)\ah P'_{0k}\\
Rev^{APSM,k}_{k}&=\frac{((\aA-2\aA^2)\ak^2+((-2\aA^3)+3\aA^2-\aA)\ak)\ar}{2\aA^3-4\aA^2+1}+\\
&\frac{(4\aA^3-6\aA^2+1)\ak}{2\aA^3-4\aA^2+1}.
    \end{aligned}
\end{equation}

The revenue of other honest miners are:
\begin{small}
\begin{equation}
\begin{aligned}
\label{equ:orev}
   Rev^{APSM,k}_{o}&=\ah(P_0+P_{0k}')+(\ah\ar+2\ah(1-\ar))P_{0o}'+(1-\ar)\ak P_{0o}'\\
   Rev^{APSM,k}_{o}&=\frac{(2\aA-1)(\ai+\aA-1)(\aA^2\ar-\aA\ar-2\aA^2+2\aA+1)}{2\aA^3-4\aA^2+1}.
    \end{aligned}
\end{equation}
\end{small}
And the revenue of the attacker is:
\begin{equation}
\begin{aligned}
\label{equ:atkerrev}
   Rev^{APSM,k}_{atk}=&2\aA(P'_{0k}+P'_{0o})+\ar(1-\aA)P'_{0o}+\ar\ah P'_{0k}+\\
   &2(1-\aA)P_2+(1-\aA) \sum_{q=3}^\infty P_q\\
Rev^{APSM,k}_{atk}=&\frac{(\aA(2\aA-1)(\ak-\aA+1)(\ak+\aA-1))\ar}{2\aA^3-4\aA^2+1}+\\
&\frac{4\aA^4-9\aA^3+4\aA^2}{2\aA^3-4\aA^2+1}.
    \end{aligned}
\end{equation}

And the miner k's overall expected profit is:
\begin{equation}
\begin{aligned}
\label{equ:minerkprof}
   R^k_P=&\frac{Rev^{APSM,k}_{k}}{Rev^{APSM,k}_{atk}+Rev^{APSM,k}_{k}+Rev^{APSM,k}_{o}}\\
   =&\frac{(\aA(1-2\aA)\ak(\ak+\aA-1))\ar+(4\aA^3-6\aA^2+1)\ak}{\aA^3-2\aA^2-\aA+1}
    \end{aligned}
\end{equation}

The relative extra reward of miner k when following attracted mining instead of public mining strategy is: 
\begin{equation}
\begin{aligned}
   RER_k^{G,H} &=\frac{R_k^G-R_k^H}{R_k^H}\\
   &=\frac{((-\aA\ai)-\aA^2+\aA)\ar-\aA\ai+\aA^2}{(\aA\ai+\aA^2-\aA)\ar-2\aA^2+2\aA+1}
    \end{aligned}
\end{equation}

\section{Multiple Attackers in PSM and A-PSM}
\label{sec:MultiAtk}
Here, we consider a model with two attackers. Our two-attacker analysis results can be easily extended to the multiple-attacker scenario because the given relative extra reward of attackers still holds.
Rational miners can choose to work on one of the attacker's private branches or continue working on the public branch. 

First, we consider the case that two attackers release the partial block simultaneously. Assuming there is an attacker $A$ with mining power $\alpha_A$, and rushing ability $\gamma_A$. Attacker $B$ with mining power $\alpha_B$ and rushing ability $\gamma_B$. Rational miner $k$ with mining power $\alpha_k$ considers that the rest of the miners are public miners, which means $\alpha_h=1-\alpha_A-\alpha_B-\alpha_k$. 

For PSM, the relative extra reward of choosing attacker $A$ instead of following the public mining strategy is
\begin{small}
\begin{equation}
  RER_{k,PSM}^{A,H}=  \frac{1-2\alpha_k-(1-\gamma)(1-\alpha_A-\alpha_B-\alpha_k)}{2\alpha_k+(2-\gamma)(1-\alpha_A-\alpha_B-\alpha_k)}
\end{equation}
\end{small}

The RER of mining in attacker $A$ over $B$'s branch is always 0. That means the variance of mining power among different attackers has no impact on rational miners. Joining either one of them could ensure its profits. 
 
 For A-PSM, the RER of choosing attacker $A$ instead of following the public mining strategy is
 
  \begin{small}
 \begin{equation}
 \begin{aligned}
     RER_{k,A-PSM}^{A,H}=\frac{\alpha_h+\frac{1}{\alpha_A+\alpha_i}(\frac{(\alpha_A+\alpha_i)^2}{1-2(\alpha_A+\alpha_i)})+1}{2\alpha_i+(2-\gamma)\alpha_h}-1.
     \end{aligned}
 \end{equation}
 \end{small}
 The RER of choosing attacker $A$ instead of attacker $B$ is
 \begin{small}
 \begin{equation}
 \begin{aligned}
     RER_{k,A-PSM}^{A,B}=\frac{\alpha_A-\alpha_B}{(2\ai+2\alpha_A-1)(2\ai+2\alpha_B-1)}.
     \end{aligned}
 \end{equation}
 \end{small}
 It is always more beneficial for rational miners to work with larger mining power attackers.
 
Then we consider one of the attackers to find the new block first. Assuming attacker $A$ finds the new block at time $t_a$, the attacker $B$ finds the new block at time $t_b$, duration $T_d=t_b-t_a>0$. Since the rational miner trusts both attackers equally, joining the attacker $A$'s private branch means the rational miner can have $\alpha_h R(T_d)\alpha_i+\alpha_iR(T_d)$ more expected revenue during the period $T_d$. Note that the definition of $R(T)$ is given in Equation (\ref{equ:RT}). 
Thus, attracted miners will tend to join the attacker $A$'s private branch first to assure higher profits for both PSM and A-PSM strategies. If the rational miners do not find the new block during the period $T_d$, with the PSM strategy, the attracted miners have no motivation to switch the working branch. For the A-PSM strategy, the attracted miners need to re-evaluate their profits and work with the attacker with higher mining power.

\section{Discussion}
\label{sec:BlockSharing}


\subsection{Block Sharing Strategy}

Before the exchange starts, the attacker can deploy a smart contract as an arbiter. Then, the attacker and the miner achieve a fair exchange of the partial block data through a simple interaction:

\begin{enumerate}
    \item The attacker generates a random key $k$ and encrypts the $b$ value with $k$,       i.e., $\hat{b}=\mathsf{Enc}(b,k)$. Next, it computes the hash of $k$ $h_k\leftarrow H(k)$ and generates a proof:
    \begin{small}
    \begin{equation}
        \pi_e\overset{\rm R}\leftarrow\mathsf{Prove}((\hat{b},h, h_k),(b,k,nonce,r))
    \end{equation}
    \end{small}
    to prove the following relation:
    \begin{small}
    \begin{equation}
    \begin{aligned}
        H(b,nonce,r)=h\ \land\ 0 & \leq {\rm nonce}\leq2^{32}-1\\
        \land\ \hat{b}=\mathsf{Enc}(b,k) & \land h_k=H(k)
    \end{aligned}
    \end{equation}
    \end{small}
    is satisfied. Finally, it sends the tuple $(\pi_e,\hat{b},h,h_k)$ to the miner.
    \item After receiving the tuple $(\pi_e,\hat{b},h,h_k)$ from the attacker, the miner can verify whether $\pi_e$ is valid by computing $b\leftarrow\mathsf{Verify}((\hat{b},h, h_k),\pi_e)$. If $b=1$, the miner deposits to the arbiter contract the payment agreed upon by both parties beforehand, as well as the hash $h_k$.
    \item  The attacker checks whether the $h_k$ provided by the miner is valid and whether the payment deposited is as previously agreed. If so, the attacker sends the key $k$ to the arbiter contract.
    \item The arbiter contract verifies that $h_k=H(k)$ holds. If it is the case, the contract transfers the payment to the attacker, otherwise, it returns the payment to the miner. Since the $k$ disclosed to the contract will also be available to the miner at the same time, the miner can decrypt $\hat{b}$ by $b\leftarrow\mathsf{Dec}(\hat{b},k)$.
\end{enumerate}

Any miner as a buyer cannot obtain any information about $b$ without completing the payment. Meanwhile, any attacker acting as a seller cannot cheat the payment by submitting the wrong $b$. In this way, the attacker is able to sell the partial block data $b$ to a specific miner and gain revenue.

\subsection{Mitigation}
One straightforward way to prevent PSM attacks is to forbid the broadcasting of partial block data in the target blockchain network. As stated, the partial block data do not need to be propagated through the target network. Rather, it can be obtained from a public website. It would not be easy to forbid the acquirement of partial block data.

Public miners can refuse to mine on the attacker’s branch when a race occurs or thoroughly scrutinize the smart contract to thwart the PSM. Both methods can decrease the attacker’s success rate. The attacker cannot tell which miners are rational. Some public miners can also get a partial block and refuse to mine on the attacker’s branch with the hash value from the partial block data. This reduces the $\gamma$ value.

As shown in Section~\ref{sec:PracticalConcerns}, we denote the hidden data as \textit{secret}. Theoretically, other miners could calculate a valid secret and announce it after getting the partial block, especially for those public miners with sizeable computational power.

Assuming that the attacker publishes a partial block at time $t_0$, the public miner calculates the secret at $t_1$. Let $T=t_1-t_0$. The expected profits of attracted miners can be expressed as $\alpha_h R(T)\alpha_i+\alpha_i R(T)$.
When $T\rightarrow 0$, the attracted miners' expected extra revenue approaches 0.
\nop{ \begin{figure}[htbp]
	\centerline{\includegraphics[width=0.6\columnwidth]{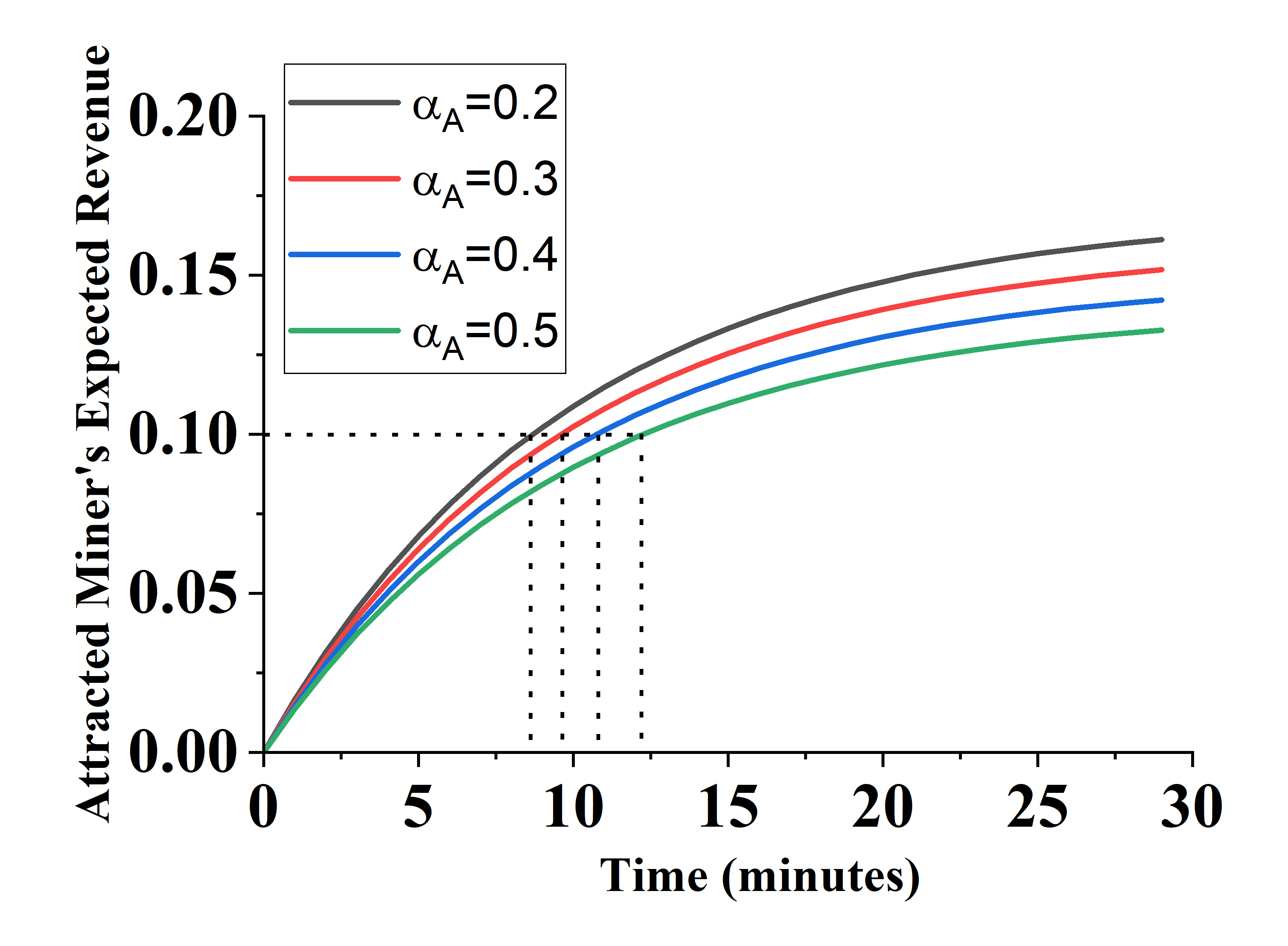}}
	\caption{\textbf{The attracted miners expected revenue with different mining duration.}}
	\label{fig:TimeImpact}
\end{figure}
}

If a public miner releases the secret, then none of the attracted miners will take the risk of getting negligible profits. When $\alpha_i=0$, for the attacker following the A-PSM strategy, its revenue downgrades to selfish mining. The revenue following the PSM strategy is less than selfish mining (see Section \ref{sec:SelfMin} for detail).

Another possible defense against PSM and A-PSM attacks is to reduce the efficiency of partial block dissemination. Public miners can drop the partial block when receiving it. They may put neighbors into an untrust neighbor list who have sent them partial blocks to discourage the dissemination.

\subsection{Rationality of Miners}
In this paper, we assume that no more than 50\% miners are rational. This is a strict and valid assumption.

First, in spite of Nakamoto's claim that block rewards motivate miners to work honestly, miners' altruism is never guaranteed~\cite{sliwinski2021blockchains}. Evidence shows that miners' behavior is profit-driven rather than honest~\cite{eghbali201912}. As previous studies have shown, attackers can exploit miners' profit-driven behavior and sabotage blockchains~\cite{mirkin2020bdos}. Thus, altruism cannot ensure blockchain security. 

Second, in terms of blockchain security analysis, we need to discuss the worst-case but not optimistic ones. The higher ratio of rational miners, the higher profits for attackers. Thus, we believe it is reasonable to consider a strict assumption with no more than 50\% of miners to be rational. 

Third, as shown in \fig~\ref{PSM}, PSM attackers can still get more rewards than both selfish and honest mining, even with a small fraction of rational miners.

\end{document}